\documentclass[11pt]{article}
\usepackage{fullpage}
\usepackage{helvet}
\usepackage{courier}
\usepackage{float}

\usepackage{epigraph}
\usepackage{nicefrac}
\usepackage{hyperref}
\hypersetup{colorlinks=true,linkcolor=black,citecolor=blue}

\usepackage[margin=1in]{geometry}
\usepackage{booktabs} 
\usepackage[ruled]{algorithm2e} 

\SetAlFnt{\small}
\SetAlCapFnt{\small}
\SetAlCapNameFnt{\small}
\SetAlCapHSkip{0pt}
\IncMargin{-\parindent}

\usepackage{nicefrac}
\usepackage{amsthm,amsfonts,amstext, comment, graphicx}

\usepackage[shortlabels]{enumitem}
\usepackage{bm}
\usepackage{algpseudocode}

\usepackage{hyperref}
\usepackage{color}
\usepackage{float}
\usepackage{bbm}
\floatstyle{boxed}
\newfloat{algorithm}{t}{lop}
\usepackage{natbib}
\usepackage{tikz}
\usepackage{varwidth}
\usepackage{subfigure}

\usepackage{blkarray}

\usepackage{multirow}

\usepackage{mathtools}
\usepackage{float}
\usepackage[classfont=sanserif,langfont=roman,funcfont=italic]{complexity}
\usepackage{caption}
\usepackage{subfigure}
\usepackage{tcolorbox}
\usepackage{color-edits}
\addauthor{ss}{violet} 
\addauthor{sb}{violet} 
\addauthor{mh}{blue} 
\addauthor{rp}{olive} 
\addauthor{kw}{red} 

\newtheorem{definition}{Definition}

\usepackage{url}
\renewcommand{\vec}[1]{\mathbf{#1}}

\newcommand{\explain}[1]{\tag{\textcolor{gray}{#1}}}

\hypersetup{colorlinks=true,linkcolor=blue,citecolor=blue,anchorcolor = blue}

\definecolor{ForestGreen}{rgb}{.05,.50,.05}
\definecolor{darkmagenta}{rgb}{0.55, 0.0, 0.55}

\usepackage{authblk}

\newcommand{\sbr}[1]{\left[\,#1\,\right]}

\renewcommand{\eqref}[1]{(\ref{#1})}

\newcommand{\eps}{\varepsilon}

\DeclareMathOperator*{\argmax}{argmax}

\usepackage{thmtools}
\usepackage{thm-restate}

\declaretheorem[name=Lemma]{lemma}

\title{Dueling over Multiple Pieces of Dessert\footnote{This research was supported  by the US National Science Foundation grant CCF-2238372.}}

\author{Simina Br\^anzei\footnote{Purdue University and Google Research. E-mail: \texttt{\url{simina.branzei@gmail.com}}.} \enspace and \enspace Reed Phillips\footnote{Purdue University. E-mail: \texttt{\url{phill289@purdue.edu}}.}}

\begin{document}


\maketitle

\begin{abstract}
We study the dynamics of repeated fair division between two players, Alice and Bob, where Alice partitions a cake into two subsets and Bob chooses his preferred one over $T$ rounds. Alice aims to minimize her regret relative to the Stackelberg value---the maximum utility she could achieve if she knew Bob's private valuation.

We  show that if Alice uses arbitrary measurable partitions, achieving strongly sublinear regret is impossible; she suffers a regret of $\Omega\Bigl(\frac{T}{\log^2 T}\Bigr)$ regret even against a myopic Bob. However, when Alice uses at most $k$ cuts, the learning landscape becomes tractable.  We analyze Alice's performance based on her knowledge of Bob's strategic sophistication (his regret budget). When Bob's learning rate is public, we establish a hierarchy of polynomial regret bounds determined by $k$ and Bob's regret budget. In contrast, when this learning rate is private, Alice can universally guarantee $O\Bigl(\frac{T}{\log T}\Bigr)$ regret, but any attempt to secure a polynomial rate $O(T^\beta)$ (for $\beta < 1$) leaves her vulnerable to incurring strictly linear regret against some Bob.

Finally, as a corollary of our online learning dynamics, we characterize the randomized query complexity of finding approximate Stackelberg allocations with a constant number of cuts in the  Robertson-Webb  model.
\end{abstract}

\section{Introduction}

The problem of fair division, introduced by \citet{Steinhaus48}, provides a model for allocating a heterogeneous resource among agents with differing preferences. A well known mechanism is the \emph{Cut-and-Choose} protocol: one player (Alice) partitions the resource into two pieces, and the other (Bob) selects his preferred piece. In a standard one-shot interaction where valuations are private, this protocol incentivizes Alice to divide the cake into pieces she values equally, to ensure that she receives a fair share regardless of Bob's choice.

However, if Alice possesses knowledge of Bob's valuation, the interaction transforms from a problem of equity into one of strategic optimization. Alice can act as a Stackelberg leader, constructing a partition that leaves Bob just satisfied enough to choose a specific piece, thereby securing the remainder---and the maximum possible utility---for herself. We refer to this optimal utility as Alice's \emph{Stackelberg value}. While private information typically precludes this outcome in a single instance, the dynamics change significantly in repeated settings. When the allocation occurs over multiple rounds—such as the daily assignment of computing resources or the scheduling of shifts—Alice need not know Bob's valuation \emph{a priori}. Instead, she can treat the interaction as an online learning problem, experimenting with different partitions and using Bob's choices to refine her model of his preferences.

This perspective connects repeated fair division to a growing literature on learning in Stackelberg-style
interactions~\cite{balcan2015commitment, tambe2011security,dong2018strategic,hajiaghayi2023regret,kleinberg2003value,birmpas2020optimally,gan2019imitative, zhao2023online,Donahue24Decentralized}.
\citep{Donahue24Decentralized} analyze {decentralized learning} in Stackelberg games, showing that classical benchmarks
based on Stackelberg equilibrium payoffs can be unattainable in general and may force worst-case linear regret. This 
motivates a careful study of what is learnable under what structural conditions.
Complementary work on {steering} or exploiting no-regret learners  (see, e.g., \citep{Zhang2025Steer}) identifies information-theoretic barriers:
without sufficient knowledge about the learner's algorithm or objectives, steering toward Stackelberg outcomes
can be impossible, while additional structure can restore learnability.

This sequential setting for cake cutting was explored in 
\citep{Branzei24Dueling}, which showed that Alice can efficiently learn to exploit a myopic chooser to approach her Stackelberg value. However, this analysis has largely been restricted to the case where Alice makes a single cut. While this constraint simplifies the learning landscape to a one-dimensional search, it makes sense that a proposer can have the flexibility to create allocations consisting of multiple disjoint intervals. This added flexibility implies that the single-cut Stackelberg value is only a lower bound on what is achievable. For instance, if Alice values the ends of a time interval and Bob values the center, a single cut forces Alice to cede high-value territory. However, if permitted to make two cuts, she can offer Bob the center while retaining both ends, strictly increasing her utility.

This generalization introduces a fundamental trade-off between potential utility and learning complexity. While complex partitions allow Alice to extract significantly higher value, they also exponentially expand the strategy space. In the single-cut setting, learning Bob's valuation is akin to identifying a threshold on a line. As the number of allowed cuts increases, Bob's preferences can be obscured within increasingly intricate geometric structures. In the limit, where Alice is permitted to partition the cake into arbitrary measurable sets, the space of possible strategies becomes so rich that the information gained from Bob's binary choices may be insufficient to resolve his true preferences.

In this paper, we study the limits of strategic learning in repeated fair division under varying regimes of geometric complexity. We model the interaction as a repeated game between a learner (Alice) and a responder (Bob) and analyze Alice's \emph{Stackelberg regret}—the difference between her cumulative utility and the optimal value she could achieve with full information. We consider both the \emph{$k$-cut game}, where partitions are defined by a finite number of cuts, and the general \emph{measurable-cut game}. Our goal is to determine how the complexity of the allowable partitions impacts the learning rate and to characterize the conditions under which the Stackelberg optimum is learnable.

\section{Model} \label{sec:model}
	
We consider a cake represented by the interval $[0,1]$ and two players, Alice ($A$) and Bob ($B$). Each player $i$ has an integrable private value density function $v_i : [0,1] \to \mathbb{R}_+$.

A \emph{piece} of cake is a Lebesgue-measurable subset $S \subseteq [0, 1]$.
The value of player $i$ for a piece $S$ is denoted $V_i(S) = \int_{S} v_i(x) \; \textit{dx}$. Without loss of generality, valuations are normalized such that $V_i([0, 1]) = 1$.
We assume the densities are bounded: there exist constants $\Delta \geq \delta > 0$ such that $\delta \leq v_i(x) \leq \Delta$ for all $x \in [0,1]$. Let $\mathcal{V}$ denote the set of all such density functions.
		
\paragraph{Allocations.} An allocation $Z = (Z_A, Z_B)$ is a partition of the cake into two disjoint pieces $Z_A$ and $Z_B$ such that $Z_A \cup Z_B = [0,1]$. The allocation is \emph{envy-free} if $V_i(Z_i) \geq V_i(Z_j)$ for all $i,j$. Since valuations are normalized, this is equivalent to $V_i(Z_i) \geq 1/2$.

For  $k \in \mathbb{N}^*$, a \emph{$k$-cut} is defined by a sequence of cut points $0 \leq a_1 \leq a_2 \leq \ldots \leq a_k \leq 1$, which divides the cake into at most $k+1$ intervals.
Alice's \emph{Stackelberg value}, denoted $u_A^*(k)$, is the maximum utility she can guarantee in an allocation where she partitions the cake using $k$ cuts and Bob subsequently chooses his preferred piece (breaking ties in Alice's favor). At any such allocation, Bob's utility is exactly $1/2$. The Stackelberg value for $k=2$ cuts is illustrated in Figure~\ref{fig:two-cut-Stackelberg}.

Alice's Stackelberg value is similarly defined when the cake can be partitioned into any two Lebesgue-measurable sets; we denote this value $u_A^*(\infty)$.
	
\begin{figure}[h!]
	\centering
	\includegraphics[width=0.3\textwidth]{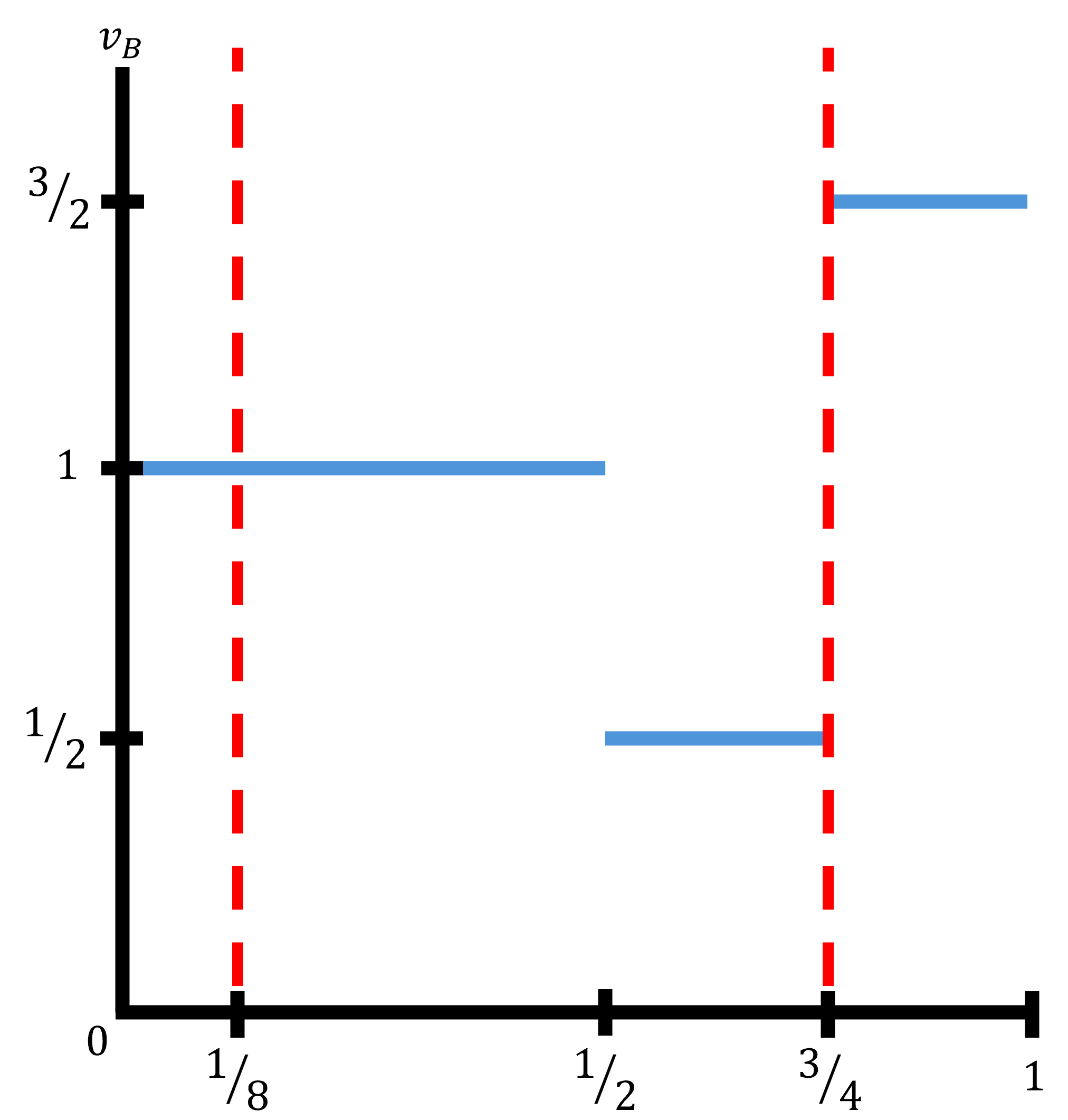}
	\caption{A Stackelberg cut for $k=2$. Bob's value density $v_B$ is plotted in blue. Alice's value density (not pictured) is assumed to be uniform: $v_A(x)=1$ $\forall x \in [0,1]$. The region between the red dashed cuts has value $1/2$ to Bob and maximizes Alice's value among all such regions; thus, $u_A^*(2) = 5/8$.}
	\label{fig:two-cut-Stackelberg}
\end{figure}
	
\subsection*{Repeated cake cutting} Let $k \in \mathbb{N}^*$ be fixed. The game proceeds over rounds $t = 1, 2, \ldots, T$. In each round:
\begin{itemize}
	\item A new cake arrives, identical to those in previous rounds.
	\item Alice cuts the cake at $k$ points of her choice and groups the resulting intervals into two pieces.
	\item Bob chooses one of the pieces and Alice receives the remainder.
\end{itemize}
We refer to this as the \emph{$k$-cut game}.
The game is similarly defined when Alice is allowed to partition the cake into any two Lebesgue-measurable sets, which we call the \emph{measurable-cut game}. An example of the first two rounds of a $2$-cut game is shown in Figure \ref{fig:two-cut-game-example}.

\begin{figure}
    \centering
    \includegraphics[width=0.5\textwidth]{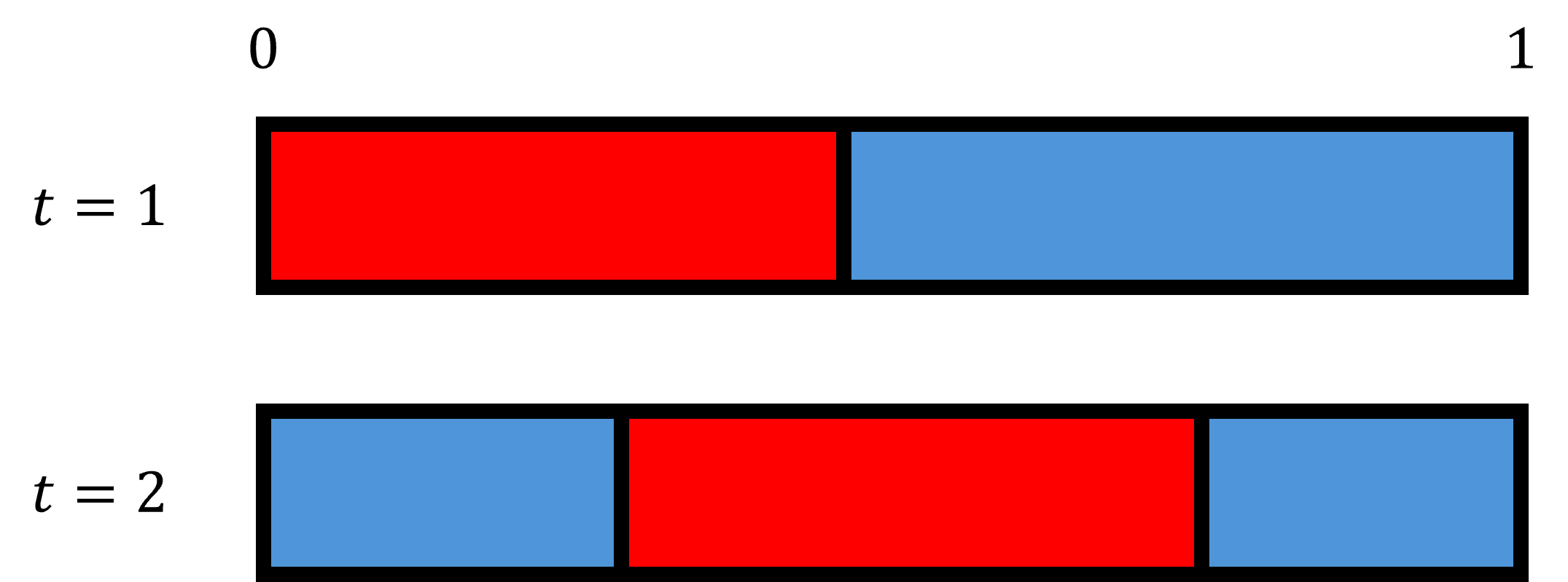}
    \caption{An example outcome of the first two rounds of a $2$-cut game. In each round, Alice partitions $[0, 1]$ into two pieces by making at most $2$ cuts. The pieces Bob chose are in blue. In this case, they are the pieces he would prefer under the value density from Figure \ref{fig:two-cut-Stackelberg}.} \label{fig:two-cut-game-example}
\end{figure}

\section{Our Results}

Our results examine how players fare in the repeated game over $T$ rounds.
Given a history $H$, Alice's Stackelberg regret is defined as
\begin{align} \label{eq:regret_stackelberg_informal}
	\text{Reg}_{A}(H) = \sum_{t=1}^{T} \left[ u_A^*(k) - u_A^t(H) \right],
\end{align}
where $u_A^*(k)$ is Alice's Stackelberg value with $k$ cuts (with $k = \infty$ for the measurable-cut game) and $u_A^t(H)$ is Alice's utility in round $t$ under history $H$.

Given a history $H$, Bob's regret is defined as
\begin{align} \label{eq:regret_Bob_informal}
	\text{Reg}_{B}(H) = \sum_{t=1}^{T} \left[ u_B^*(t) - u_B^t(H) \right],
\end{align}
where $u_B^*(t)$ is the maximum utility Bob could have obtained in round $t$ (by choosing his preferred piece) and $u_B^t(H)$ is his realized utility.

Suppose Alice uses strategy $S_A$ and Bob uses strategy $S_B$. We say $S_A$ ensures Alice's Stackelberg regret is at most $\gamma$ against $S_B$ if $\text{Reg}_{A}(H) \leq \gamma$ for all $T$-round histories $H$ consistent with $(S_A, S_B)$. Similarly, $S_B$ ensures Bob's regret is at most $\gamma$ against $S_A$ if $\text{Reg}_{B}(H) \leq \gamma$ for all such histories.

\subsection{The Measurable-Cut Game}

In the measurable-cut game---the most general setting we consider---we show that achieving strongly sublinear regret is impossible: Alice must incur a regret of order  $\Omega(T/(\log T)^2)$. Crucially, this holds even against a \emph{myopic} Bob who simply chooses his preferred piece in every round. Consequently, Alice cannot achieve regret better than $O(T^{1-\varepsilon})$ for any $\varepsilon > 0$.

\begin{restatable}{thm}{measurableMyopicLower}
    \label{thm:measurable-cut-myopic-lower-bound}
    Consider the measurable-cut game and suppose Bob is myopic. There exists $T_0 \in \mathbb{N}$ such that for every valuation and deterministic strategy of Alice, there exists a Bob valuation against which Alice's regret is $\Omega(T / (\log T)^2)$ for every  horizon $T \geq T_0$.
\end{restatable}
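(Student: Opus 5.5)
The plan is an adversary argument against a deterministic Alice facing a myopic Bob. Bob's valuation is hidden inside a large family of densities that differ only on a fine grid. Each round yields one bit of information, so Alice cannot identify Bob's "good" cells quickly. Meanwhile the Stackelberg value rewards exactly that identification.

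\textbf{Construction.} Fix Alice's density $v_A$ and partition $[0,1]$ into $n$ cells $I_1,\dots,I_n$ of equal $v_A$-mass $1/n$. For each subset $S$ of cells with $|S| = n/2$, define Bob's density as follows: it is scaled up by a factor $(1+\eta)$ on the cells in $S$ and down by $(1-\eta)$ on the rest, relative to $v_A$, then normalized. This keeps it within $[\delta,\Delta]$ for suitable constants. Under $v_S$, Bob values $\bigcup_{i\notin S} I_i$ at about $(1-\eta)/2$. The Stackelberg partition hands Bob a set of $v_S$-value exactly $1/2$ concentrated on $S$. Alice's Stackelberg value is then roughly $1/2+c\eta$, whereas her utility is only about $1/2 + c\eta\cdot(\text{overlap advantage})$ when her partition is not aligned with $S$. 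Concretely, I would aim to show that a round's loss is at least proportional to $\eta$ times the discrepancy between her offered piece and $S$, minus the information she exploits.

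\textbf{Adversary and counting.} A deterministic Alice's partition in round $t$ depends only on the past bits (Bob's choices). We maintain the set $\mathcal{F}_t$ of valuations consistent with the history and answer each query by the majority side, so $|\mathcal{F}_t|\ge |\mathcal{F}_0|/2^t$. We then need a per-round lemma. For any measurable partition, either it gives Alice less than $u_A^*-\Omega(\eta)$ under most surviving valuations, or the answer splits the family nearly in half. The cleanest route is to argue via a random $S$ and Bob's answer: a gain of $\epsilon$ over $1/2$ for Alice requires the offered piece to correlate with $S$. Correlation of order $\eta$ with a uniformly random balanced set reveals $\Omega(n\eta^2)$ bits in total over the rounds, or leaves the bit predictable. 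The budget is $\log_2\binom{n}{n/2}\approx n$ bits. With $T$ rounds, if $n\approx T/\text{polylog}$, Alice cannot learn $S$, and each round costs $\Omega(\eta)$ with constant $\eta$.

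\textbf{Parameters and main obstacle.} The parameters are tuned so that learning costs $\Omega(T/\log^2 T)$ rounds. I expect the $\log^2$ to come from two sources. One is needing $\log T$ scales of $\eta$ (or nested grids), because Alice could trade accuracy against speed. The other is a union bound over the discretized partitions Alice might offer, which are not confined to the grid. The main obstacle is the per-round lemma: Alice's partition is an arbitrary measurable set, not a union of cells. I would first reduce to grid-measurable partitions by averaging. Replacing Alice's set by its conditional $v_A$-fraction in each cell changes neither party's value in any cell, since both densities are proportional within cells. Hence only the vector of fractions $x\in[0,1]^n$ matters. Bob's choice then becomes the sign of $\sum_{i\in S}x_i - \sum_{i\notin S}x_i$ plus a threshold. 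This is a halfspace query on the hidden balanced set. The remaining step is to show that threshold queries with one-bit answers need about $n/\log n$ queries to locate $S$ well enough to extract a constant fraction of the $\eta$-premium. For this I would use a Fano/entropy argument over the adversarial path, then set $n\approx T/\log T$ and account for the second logarithm from the approximation.
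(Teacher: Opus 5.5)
The decisive gap is the order of quantifiers. The theorem asks for a \emph{single} Bob valuation, chosen against Alice's strategy (which may itself depend on the horizon), that forces $\Omega(T/\log^2 T)$ regret for \emph{every} $T \ge T_0$ simultaneously.

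Your family is built on a grid whose size $n$ is a function of $T$. So at best it proves the quantifier-flipped statement, in which Bob's valuation may depend on $T$. The paper points out that this version is essentially immediate: hide far more than $T$ bits.

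No fixed finite grid can rescue the approach, because the lower bound must defeat every Alice strategy, including one tailored to your family. With $n$ fixed, Alice can test one cell per round. She offers all of $I_i$ plus a $v_A$-fraction $c=(n/2-1)/(n-1)$ of every other cell. Its Bob-value is $\tfrac12+\eta\sigma_i(1-c)/n$, where $\sigma_i=\pm1$ according to whether $i\in S$. After $n$ rounds she knows $v_B$ exactly and plays a slightly shaved Stackelberg partition, so her total regret is $O(n)$, bounded independently of $T$.

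The same test shows that your claim ``Alice cannot learn $S$'' fails in your own regime $n\approx T/\log T<T$. Hence the entropy count over $\binom{n}{n/2}$ yields no per-round loss. Likewise, the adaptive majority-answer adversary gives a consistent valuation for each horizon separately. Alice's strategy differs across horizons, so these valuations need not coincide.

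What you are missing is a valuation with infinitely many scales of summable mass, plus a non-adaptive diagonal choice of the hidden bits that works against all horizons at once. The paper's construction runs as follows.
\begin{enumerate}
\item It splits $(0,1/2)$ into intervals $[g(i+1),g(i))$ with $g(i)=1/(2+\log i)$. One bit $s_i$ is hidden per interval: Bob's density is $2$ on the left or right half according to $s_i$, and $2/3$ elsewhere, so the Stackelberg value is $3/4$ for every $s$.
\item It enumerates the at most $2^h$ partitions a deterministic Alice can produce at each horizon $h$; these are countably many in total.
\item Each potentially good piece is turned into a bit vector $a^j$ via mass transfers. These keep Bob's preference and at most quadruple Alice's regret.
\item It then fixes $s^*$ block by block. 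For $n$ a power of $2$, the block on $\{5n,\dots,10n-1\}$ is at Hamming distance at least $n/2$ from the first $2\cdot 2^{2n}$ vectors $a^j$, which cover every partition Alice can make at horizons $T\in[n,2n)$. This is possible because $2^{2n+1}(n/2+1)(10e)^{n/2}<2^{5n}$ for large $n$.
\end{enumerate}
Each round then loses at least $\tfrac14\bigl(g(19n/2)-g(10n)\bigr)\ge\tfrac14\log(20/19)\,|f'(\log 10T)|$ with $f(x)=1/(x+2)$, which is $\Theta(1/\log^2 T)$.

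So the $\log^2 T$ is the price of making infinitely many scales summable while keeping the $\Theta(T)$ scales Alice cannot resolve by time $T$ non-negligible. It does not come from Fano-type accounting, from $\log T$ levels of $\eta$, or from a union bound over discretized partitions.
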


\subsection{The $k$-Cut Game}

Alice fares significantly better in the $k$-cut game. We start by analyzing the scenario where Bob behaves myopically. For constant $k$, our bounds are tight up to logarithmic factors.

\begin{restatable}[$k \geq 2$ cuts; Myopic Bob]{thm}{kCutMyopic}
    \label{thm:myopic_Bob}
    Let $k \geq 2$. Suppose Bob is myopic. There exists a deterministic strategy for Alice that guarantees her regret over  horizon $T$ is $O\bigl(\sqrt{Tk} \log(Tk)\bigr)$. Moreover, for every Alice valuation and deterministic Alice strategy, there exists a Bob valuation against which her regret is $\Omega\bigl(\frac{\sqrt{T}}{k^{3/2}}\bigr)$.
\end{restatable}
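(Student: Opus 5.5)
For the upper bound, I would reduce Alice's problem to learning Bob's cumulative distribution function $F_B(x) = V_B([0,x])$ to accuracy $\varepsilon$ at the points of a grid, and then exploiting it.

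\textbf{Step 1 (query simulation).} A single cut at $x$, with Bob choosing between $[0,x]$ and $[x,1]$, reveals whether $F_B(x) \ge 1/2$. With two cuts Alice can offer $[a,b]$ against its complement. Bob picks $[a,b]$ iff $F_B(b)-F_B(a) \ge 1/2$, so each round answers a comparison query about interval values. Fixing $a$ and binary searching on $b$ locates points where $F_B(b) - F_B(a) = 1/2$. Combining such threshold queries, Alice can estimate $F_B$ at grid points: for example, she binary searches for the point $m$ with $F_B(m)=1/2$, then for each $a\le m$ the partner $b(a)$ with $F_B(b(a))=F_B(a)+1/2$. Since densities lie in $[\delta,\Delta]$, a binary search with $O(\log(1/\varepsilon))$ rounds pins each such point to within $\varepsilon$ in location and $O(\varepsilon)$ in value. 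Each exploration round costs Alice at most $1$ in regret.

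\textbf{Step 2 (planning).} Restrict Alice to allocations whose cut points lie on a grid of $N$ points. The loss from this discretization is $O(k\Delta/N)$ per round, because each cut moves by at most $1/N$. With $F_B$ known to within $\varepsilon$ at the grid, Alice computes the $k$-cut allocation maximizing $V_A(\text{her piece})$ subject to Bob's estimated value of his intended piece being at least $1/2+c\varepsilon$. This margin makes Bob choose as planned, and since densities are bounded below it costs Alice only $O(\varepsilon\Delta/\delta)$ per round relative to $u_A^*(k)$.

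\textbf{Step 3 (balancing).} Learning the grid costs about $N\log(1/\varepsilon)$ rounds, plus $O(k\log)$ for the cut structure. Exploitation costs $T\cdot O(k/N+\varepsilon)$. Taking $N\approx\sqrt{Tk}$ and $\varepsilon\approx 1/N$ gives regret $O(\sqrt{Tk}\log(Tk))$.

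The main obstacle is Step 1. Bob's answers only compare an offered piece to $1/2$, so querying $F_B$ at arbitrary grid points to additive precision is not immediate. I would first try offering Bob $[0,x]\cup[y,1]$ (two cuts) with $y$ searched so that the piece is worth exactly $1/2$ to him. This yields equalities $F_B(x) = F_B(y) - 1/2$, chaining grid values, and a careful error-propagation argument is needed so that errors do not accumulate along the chain.

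For the lower bound, I would fix Alice's valuation and take a base Bob density equal to $v_A$, for which the Stackelberg value is $1/2$. Against this I would build a family of roughly $M \approx k$-independent perturbations: bumps of height $\eta$ on one of $M$ disjoint subintervals. Each bump raises $u_A^*(k)$ by $\Theta(\eta/M)$, because Alice can hand Bob the region where he over-values. A deterministic Alice learns only one bit per round, so she must spend rounds distinguishing bump locations. Rounds in which she does not exploit cost $\Theta(\eta/M)$, while a probing round whose piece is near-balanced, needed to detect the bump, costs $\Omega(\eta)$-ish. An adversary argument picks the bump consistent with Alice's answers. Setting $\eta\approx 1/\sqrt T$ and accounting for the $k$-dependence of how many bumps one $k$-cut query can test yields $\Omega(\sqrt T/k^{3/2})$. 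The delicate part is showing that each informative query costs order $\eta$ regret, via the $1/2$-threshold structure.
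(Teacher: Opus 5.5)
\textbf{Upper bound.} Step~1 is a genuine gap, and the fix you propose cannot work. Any two-cut query compares $[a,b]$ with its complement, so its answer is the sign of $F_B(b)-F_B(a)-1/2$. Offering $[0,x]\cup[y,1]$ is the same query.

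These answers reveal only the partner curve $a\mapsto b(a)$, and they cannot be chained. From $y=b(x)>m_B$, the only point at Bob-distance $1/2$ is $x$ again, so every chain has length one. In fact $F_B$ is not identifiable from two-cut queries at all. Take any increasing bi-Lipschitz bijection $h$ of $[0,1/2]$, and let $G$ equal $h\circ F_B$ on $[0,m_B]$ and $1/2+h\circ(F_B-1/2)$ on $[m_B,1]$. Then $\mathrm{sign}(G(b)-G(a)-1/2)=\mathrm{sign}(F_B(b)-F_B(a)-1/2)$ for all $a\le b$.

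For $k=2$ this is harmless: the partner curve is exactly what the moving-knife discretization of Proposition~\ref{prop:two_cuts_myopic_Bob} uses. For $k\ge 3$, Bob's piece can be a union of several intervals, and you need a third cut. The missing idea (Proposition~\ref{prop:k-cut-myopic-upper-bound}) is a reference interval $[x_0,x_1]=[\widetilde m_B-\eta,\widetilde m_B]$. Since $V_B([0,x_1])\approx 1/2$, Bob prefers $[0,x_0]\cup[x_i,z]$ to its complement iff $V_B([x_i,z])\ge V_B([x_0,x_1])$, up to the midpoint error. Binary search on $z$ therefore builds an adaptive partition into $\Theta(1/\eta)$ intervals of nearly equal Bob-value. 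Each interval is compared directly to the reference; the left-hand ones are compared through $[x_1,x_2]$. These boundaries replace your fixed grid, and Lemma~\ref{lem:known-bob-intervals-gives-stackelberg-cut} turns them into a near-Stackelberg cut.

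The precision must also be $\varepsilon=\Theta(\eta^2)$, not $\varepsilon\approx 1/N$. The per-interval errors add up over $\Theta(1/\eta)$ intervals to $O(\varepsilon/\eta)$. This is affordable only because $\varepsilon$ enters the round count as $\log(1/\varepsilon)$.

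\textbf{Lower bound.} The construction as described gives no $T$-dependent bound. Against the base $v_B=v_A$, every balanced partition is optimal. So a near-balanced probe costs Alice only $O(\eta/M)$, the same as a non-exploiting round, not $\Omega(\eta)$. With only $M\approx k$ candidate bumps of nonzero net mass, a single balanced cut already tells Alice which side the bump is on. Binary search finds it in $O(\log k)$ such cheap rounds, after which Alice knows Bob's density exactly and plays optimally. Her regret therefore stays bounded independently of $T$.

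The paper's construction (Definition~\ref{def:generalized-spike}) relies on three ingredients you are missing.
\begin{enumerate}
\item The quantity that scales with $T$ is the spike width $w$, giving $\Theta(1/w)=\Theta(\sqrt{Tk})$ candidate locations, not a bump height. Each spike also has zero net Bob-mass (density $2$ then $1$, averaging the surrounding $3/2$). So a partition is informative only if it has a cut inside the spike, and one $k$-cut tests at most $k$ candidates.
\item The base $\sigma_0^k$ alternates $k+1$ high- and low-value intervals. This forces every near-optimal partition to place cuts near all $k$ boundaries (Lemmas~\ref{lem:alternating-region-too-few-cuts} and~\ref{lem:generalized-spiked-bad-cut-suboptimal}). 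That rigidity is what makes each probe cost a constant $\Omega(1/k)$, rather than an amount proportional to the perturbation.
\item The spike is asymmetric. On any distinguishing partition, the base Bob therefore takes the piece that leaves Alice at most $2/3-1/(6k)$ (Lemma~\ref{lem:generalized-spiked-distinguishers}).
\end{enumerate}
Choosing $z$ adversarially, as the last-distinguished center or one that is never distinguished, gives regret at least $\min\{\Omega(1/(wk))\cdot\frac{1}{6k},\ \frac{4}{27k}wT\}$. For $w\asymp 1/\sqrt{Tk}$ this is $\Omega(\sqrt T/k^{3/2})$.
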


Next, we analyze the game when Bob is not necessarily myopic but employs a strategy with bounded regret relative to the benchmark of selecting his preferred piece in every round. We distinguish between the setting where his regret budget is public knowledge and the setting where it is private.

\subsubsection{Public Learning Rate}

We first consider the setting where Bob's regret exponent is known to both Alice and Bob.

For $k=2$ cuts, the next theorem  shows that when Alice plays against a Bob with regret bound $O(T^{\alpha})$, her optimal regret is $\widetilde{O}(T^{\frac{2+\alpha}{3}})$.

\begin{restatable}[2 cuts; Non-Myopic Bob with Public Learning Rate]{thm}{twoCutNonMyopic}\label{thm:2-cut-non-myopic}
    Let $k=2$. Suppose Bob's strategy guarantees his regret over horizon $T$ is at most $c  T^{\alpha}$, for $\alpha \in (-1/2, 1)$ and  $c > 0$. There exists a deterministic strategy for Alice, parameterized by $\alpha$ but independent of $c$, that guarantees her regret is $O(T^{\frac{2+\alpha}{3}} \log^{2/3} T)$.

    The exponent cannot be improved: for every Alice valuation and strategy, there exists a Bob valuation and an $O(T^{\alpha})$-regret strategy for Bob so that Alice's regret is  $\Omega\bigl(T^{\frac{2+\alpha}{3}}\bigr)$.
\end{restatable}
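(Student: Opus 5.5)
The plan has two halves. The upper bound is an explore-then-commit scheme made robust to Bob's bounded regret. The lower bound combines the myopic lower-bound construction of Theorem~\ref{thm:myopic_Bob} with a strategic Bob who spends his regret budget to mislead Alice.

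\textbf{Upper bound: reduction to queries.} With $2$ cuts, Alice's Stackelberg allocation gives Bob either an interval $[a,b]$ or its complement, with Bob's value exactly $1/2$. Alice's value is Lipschitz in $(a,b)$ because densities are bounded by $\Delta$. Alice will therefore run a noisy binary-search/grid procedure that estimates, for each candidate left endpoint $a$ on a grid of mesh $\eta$, the right endpoint $b(a)$ with $V_B([a,b(a)])=1/2$.

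\textbf{Upper bound: making queries robust.} Each query ``is $V_B([a,b])\ge 1/2$?'' is posed by offering the partition $\{[a,b], [0,1]\setminus[a,b]\}$. To make it robust, Alice offers the partitions $\{[a,b+\eta], \text{rest}\}$ and $\{[a,b-\eta], \text{rest}\}$, which differ in Bob's value by at least $\delta\eta$. If Bob answers wrongly, he loses at least $\delta\eta$ per round. Each query is repeated $m$ times and decided by majority. Since Bob's total regret is at most $cT^\alpha$, he can corrupt at most $cT^\alpha/(\delta\eta)$ rounds in total. Alice therefore chooses $m$ so that corrupting a majority in any single query costs Bob too much relative to his budget, or she uses a union-bound style accounting across the $O(\eta^{-1}\log(1/\eta))$ queries.

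\textbf{Upper bound: commit phase.} After exploration, Alice commits to the best estimated allocation, shaded by $O(\eta)$ in Bob's favor so that he strictly prefers the intended piece by margin $\Omega(\eta)$. In the commit phase each deviation by Bob costs him $\Omega(\eta)$. He can thus deviate in at most $O(T^\alpha/\eta)$ rounds, and each such round costs Alice $O(1)$.

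\textbf{Upper bound: balancing.} Alice's total regret is
\[
\underbrace{\eta T}_{\text{shading}} \;+\; \underbrace{\frac{T^{\alpha}}{\eta}}_{\text{deviations}} \;+\; \underbrace{\frac{\log T}{\eta}\cdot m}_{\text{exploration}} .
\]
The exponent $(2+\alpha)/3$ suggests the exploration term scales like $T^\alpha/\eta^2$: each of the $\sim 1/\eta$ queries needs about $T^\alpha/\eta$ repetitions to be unspoilable, because Bob can concentrate his budget on one query. Balancing $\eta T = T^\alpha \log T/\eta^2$ gives $\eta = T^{(\alpha-1)/3}\log^{1/3}T$, hence regret $T^{(2+\alpha)/3}\log^{2/3}T$, matching the statement. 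I will choose $m \approx T^\alpha/\eta$ up to logs, independent of $c$ (absorbed into constants).

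\textbf{The main obstacle.} The hard part is designing the exploration in two dimensions so that only $\tilde O(1/\eta)$ queries are needed, not $1/\eta^2$. The plan is to use one interval-bisection per grid point $a$ to locate $b(a)$, with a shared Alice-side comparison, and then to show the majority-vote accounting survives adaptive corruption.

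\textbf{Lower bound.} I will build two Bob valuations that agree except on an interval of width $\eta$, where they shift mass by $\eta$, so that the Stackelberg values differ by $\Theta(\eta)$. A Bob of one type can imitate the other at a cost of $O(\eta)$ per round, but only on rounds where Alice's partition discriminates the two types. If Alice discriminates in fewer than $T^\alpha/\eta$ rounds, the imitation fits in Bob's budget and Alice loses $\eta$ per round on one instance, giving regret $\eta T$. If she discriminates more, each discriminating round costs her $\Omega(\eta)$, and the discriminating partitions must be near-suboptimal by $\Omega(\eta)$. Packing $1/\eta$ such perturbations, as in the $\sqrt T$ myopic bound, forces total cost $\approx T^\alpha/\eta^2$. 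Setting $\eta = T^{(\alpha-1)/3}$ yields $\Omega(T^{(2+\alpha)/3})$. I must also check that Bob's strategy remains $O(T^\alpha)$-regret for every Alice strategy, not just this one. I will do this by letting Bob imitate only up to the budget and be myopic afterward.
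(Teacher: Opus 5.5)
Your architecture matches the paper's: a grid of left endpoints, robust per-point binary search, majority votes and a shaded commit for the upper bound; packed spikes and a Bob who imitates the unspiked type for the lower bound. However, the lower-bound accounting does not close as written. You say each discriminating round costs Alice $\Omega(\eta)$. Then $1/\eta$ perturbations, each needing about $T^\alpha/\eta$ discriminating rounds, with each partition discriminating only $O(k)$ of them, cost her only about $T^\alpha/\eta$ in total, not $T^\alpha/\eta^2$. Balancing $T^\alpha/\eta$ against $\eta T$ gives $\eta=T^{(\alpha-1)/2}$ and merely $\Omega(T^{(1+\alpha)/2})$, which is strictly weaker than $T^{(2+\alpha)/3}$ for $\alpha<1$.

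The missing ingredient is that every distinguishing partition must cost Alice a \emph{constant}, $\Omega(1/k)$, when Bob answers like the unspiked type. This is not a property of an arbitrary pair of valuations differing on a width-$\eta$ interval; it has to be engineered. In the paper it comes from $\sigma^k_{w;z}$ in three steps:
\begin{itemize}
\item The base density alternates between high ($3/2$) and low ($<3/4$) intervals, with the high part worth just over $1/2$. So a near-even partition that puts two cuts in the first interval leaves Alice at most $2/3-1/(6k)$ (Lemmas~\ref{lem:alternating-region-too-few-cuts} and~\ref{lem:generalized-spiked-bad-cut-suboptimal}).
\item The spike is mass-preserving, so a test must cut inside it.
\item The spike's denser half is on the left, so in any distinguishing partition the piece containing $0$ is the one the spiked Bob prefers. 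The imitator therefore takes the complementary, Alice-favoured piece and leaves her at most $2/3-1/(6k)$ (Lemma~\ref{lem:generalized-spiked-distinguishers}).
\end{itemize}
With this property, take $w=T^{(\alpha-1)/3}$ and $n=kT^{(2\alpha+1)/3}$ (here $\alpha>-1/2$ gives $n\ge 1$), and pick $z$ as the last spike to receive its $n$-th distinguishing partition. This gives $\Omega(n/(wk^2))$ or $\Omega(Tw/k)$, both equal to $\Omega(T^{(2+\alpha)/3}/k)$.

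The upper bound is repairable, but two steps are wrong as stated.

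First, ``if Bob answers wrongly, he loses at least $\delta\eta$ per round'' is false. The search does not know $b(a)$, so some probed partition lies arbitrarily close to Bob's indifference point, where lying is nearly free. Neither the $\pm\eta$ offsets nor a union bound change that. What you need is the converse argument of Lemma~\ref{lem:binary-search-with-regret}. If Bob flips the majority of $m\ge 4cT^\alpha/(\delta\varepsilon)$ repetitions, his gap satisfies $\beta<\delta\varepsilon/4$. So the probe is within $\varepsilon/4$ of the true crossing, and the invariant that the crossing lies in $[\ell_i-\varepsilon/4,\,r_i+\varepsilon/4]$ survives every step. Lies are cheap only where they are harmless.

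Second, $c$ cannot be ``absorbed into constants'', because Alice does not know it. With $m=KT^\alpha/\eta$ for a fixed $K$, Bob can afford to flip probes at distance of order $c\eta/K$ from his indifference point. For $c\gg K$ this search error exceeds Alice's fixed $\Theta(\eta)$ shading. Bob can then bias the committed cut so that he genuinely prefers Alice's piece, and take it in every remaining round at no cost to himself, giving her linear regret. The paper multiplies $m$ by $g(T)=\log T\in\omega(1)$, which exceeds $c$ once $T\ge T_0(c)$. That factor is exactly the source of the $\log^{2/3}T$.

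Your balancing line is also inconsistent: $\eta=T^{(\alpha-1)/3}\log^{1/3}T$ yields $T^{(2+\alpha)/3}\log^{1/3}T$, not $\log^{2/3}T$. With the extra $\log T$ in $m$ one gets $\eta\propto T^{(\alpha-1)/3}\log^{2/3}T$ and the stated bound.
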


 For $k \geq 3$ cuts, we provide an upper bound of  $O\left(T^{\frac{3+\alpha}{4}}k^{3/4} (\log T)^{1/2}\right)$ and a lower bound of $\Omega\bigl(T^{\frac{2+\alpha}{3}}/k\bigr)$ for Alice's regret.

\begin{restatable}[$k \geq 3$ cuts; Non-Myopic Bob with Public Learning Rate]{thm}{kCutNonMyopic}
    \label{thm:k-cut-non-myopic-bounds}
    Let $k \geq 3$. Suppose Bob's strategy guarantees his regret over  horizon $T$ is at most $c  T^{\alpha}$, for $\alpha \in (-1, 1)$ and $c > 0$. There exists a deterministic strategy for Alice, parameterized by $\alpha$ but independent of $c$, that guarantees her regret is $O\left(T^{\frac{3+\alpha}{4}}k^{3/4} (\log T)^{1/2}\right)$.

    Moreover,  for every Alice valuation and deterministic Alice strategy, there exists a Bob valuation and an $O(T^{\alpha})$-regret strategy for Bob so that Alice's regret is $\Omega\bigl(T^{\frac{2+\alpha}{3}}/k\bigr)$.
\end{restatable}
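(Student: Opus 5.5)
We prove the two halves of Theorem~\ref{thm:k-cut-non-myopic-bounds} separately: an explore-then-commit strategy for the upper bound, and a reduction to the lower-bound construction of Theorem~\ref{thm:2-cut-non-myopic} for the lower bound.

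\textbf{Upper bound.} The strategy discretizes the cake into $n$ equal-width cells and runs in two phases.

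\emph{Exploration.} Alice learns Bob's value of each cell to additive accuracy $\varepsilon$ using queries of the following form: she offers Bob a piece (a union of intervals) against its complement, with the offered piece's Bob-value tuned near $1/2$ by binary search on a cut point. Each query is repeated $m$ times, and Alice takes the majority or average answer.

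The key accounting is as follows. Every round in which Bob answers ``against his preference'' on a query whose gap exceeds $\varepsilon$ costs him at least $\varepsilon$ regret. His regret budget $cT^\alpha$ therefore corrupts at most $cT^\alpha/\varepsilon$ such rounds. With repetition $m \gg T^\alpha/\varepsilon$, no query can be flipped, up to an $\varepsilon$-ambiguity band. Since $c$ is unknown, Alice uses $m = T^\alpha\varepsilon^{-1}\log T$, and the $\log T$ slack absorbs $c$ for large $T$. One mistake-free binary search per cell costs $O(\log(1/\varepsilon))$ queries. The exploration therefore takes about $n\, m\log T$ rounds, each with constant regret.

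\emph{Exploitation.} Using the learned profile, Alice solves the discretized $k$-cut Stackelberg problem, restricting cuts to cell boundaries plus one fine adjustment cut. She then shades the result by a margin of $O(k\varepsilon + k\Delta/n)$ so that Bob strictly prefers the intended piece. The discretization loses $O(k\Delta/n)$ per round, because each of the $k$ cuts is rounded to a grid point. Bob can still deviate during exploitation, but each deviation costs him the margin, so deviations contribute at most $T^\alpha/\text{margin}$ rounds of constant loss.

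Adding the terms, Alice's regret is
\[
n T^{\alpha}\varepsilon^{-1}\log T \;+\; T\bigl(k\varepsilon + k/n\bigr) \;+\; \frac{T^{\alpha}}{k\varepsilon}.
\]
Setting $\varepsilon \approx 1/n$ and balancing $n^2 T^\alpha\log T$ against $Tk/n$ gives $n^3 \approx T^{1-\alpha}k/\log T$. This yields $T^{(2+\alpha)/3}$-type bounds.

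The claimed exponent $(3+\alpha)/4$ together with $k^{3/4}$ and $(\log T)^{1/2}$ suggests a looser estimation scheme. In that scheme each cell's value is estimated only to accuracy $\varepsilon$ via independent queries, and the errors accumulate over the $n$ cells composing a piece, so accuracy must be $\varepsilon/n$ per cell. I will follow whichever accounting I can rigorously control against adversarial, non-myopic answers. I expect that to be the $(3+\alpha)/4$ version: per-cell errors add up and force $\varepsilon\sim 1/n$ estimation per cell, which yields exponent $(3+\alpha)/4$ after balancing. The logarithmic factors come from the $\log T$ repetition needed to be independent of $c$.

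\textbf{Main obstacle.} The main difficulty is robustness: a non-myopic Bob can answer adversarially while staying within his budget. Alice's estimates must remain consistent, or at least the error must be charged to Bob's regret, across adaptively chosen queries. My first approach is to make every exploration query have gap at least $\varepsilon$ except inside a band, and to treat answers inside the band as uninformative, which costs only $\varepsilon$ accuracy.

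\textbf{Lower bound.} I will reduce to the lower bound of Theorem~\ref{thm:2-cut-non-myopic}. Fix Bob valuations that concentrate the relevant structure in a single region where extra cuts give no advantage; for example, a family of perturbed densities for which $u_A^*(k) = u_A^*(2)$, up to a $1/k$ factor coming from spreading the perturbation over $O(k)$ candidate locations. I will then reuse the hard instance from the two-cut case, in which Bob feigns a different valuation at cost $O(T^\alpha)$, yielding regret $\Omega(T^{(2+\alpha)/3}/k)$.
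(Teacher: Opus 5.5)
Your upper-bound skeleton matches the paper's (cf.\ Lemma~\ref{lem:binary-search-with-regret}): majority votes over $m\approx T^{\alpha}\varepsilon^{-1}\log T$ repetitions so that the unknown $c$ is absorbed, a final cut shaded in Bob's favor, and later deviations charged to his budget. The gap is the step that determines the exponent. ``Alice learns Bob's value of each cell to accuracy $\varepsilon$'' comes with no mechanism. Bob's choices reveal at most the sign of $V_B(P)-\frac12$, so a binary search yields one relation among several unknown values, not the value of a given cell. Calibrating cells through previously estimated cells would compound the errors along the chain.

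The missing idea is the paper's fixed reference interval. Locate $\widetilde m_B$, set $[x_0,x_1]=[\widetilde m_B-\eta,\widetilde m_B]$, and calibrate every new interval directly against it using the three-cut piece $[0,x_0]\cup[x_i,z]$. Its advantage over its complement is $2\bigl(V_B([x_i,z])-V_B([x_0,x_1])\bigr)\pm O(\Delta\varepsilon)$; the left side is handled by mirroring through $[x_1,x_2]$. This produces variable-width intervals, each within $3\Delta\varepsilon$ of the reference value, with no compounding. Absolute values come from counting the intervals, and Lemma~\ref{lem:known-bob-intervals-gives-stackelberg-cut} gives a near-Stackelberg cut on their endpoints. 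A piece contains $\Theta(1/\eta)$ intervals, so the errors add up to $O(\varepsilon/\eta)$. That forces $\varepsilon=\delta^2\eta^2/2$, and balancing exploration $\approx T^{\alpha}\log^2 T/\eta^3$ against $Tk\eta$ gives the stated bound.

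Your first accounting is invalid for the same reason. With per-cell accuracy $\varepsilon\approx 1/n$, a piece made of $\Theta(n)$ cells can be off by $\Theta(1)$. Your $O(k\varepsilon)$ margin therefore presupposes accurate \emph{cumulative} values at the grid points, which your scheme does not provide. Its conclusion, exponent $(2+\alpha)/3$, would also close a gap the paper leaves open. Since you never commit to a scheme, the upper bound is not established.

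The lower-bound reduction fails because spare cuts give Alice information even when they give her no Stackelberg advantage. Take the two-cut instance with $k\geq 4$ cuts. Alice can give Bob high-value cake worth $\frac12+\tau$ under $\sigma^2_0$, including the whole first high-value interval. She then uses two spare cuts to remove from his piece a sliver of length $s\in(\tau/2,2\tau/3)$ ending at a candidate point $z\ell^2$ (for small $\tau$). The unspiked Bob still prefers that piece, since $\frac12+\tau-\frac32 s>\frac12$. The spiked Bob does not, since $\frac12+\tau-2s<\frac12$. So the partition distinguishes $z$ while Alice still gets $\frac23-O(\tau)$. A distinguishing partition then no longer costs her $\Omega(1)$, and the counting argument collapses.

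The paper instead uses $\sigma^k_{w;z}$ from Definition~\ref{def:generalized-spike}, with $k+1$ alternating high/low intervals. There, placing two cuts in the first interval, as a sliver requires, leaves too few cuts for the remaining $k$ intervals. By Lemmas~\ref{lem:alternating-region-too-few-cuts} and~\ref{lem:generalized-spiked-bad-cut-suboptimal}, that already costs $\Omega(1/k)$. Lemma~\ref{lem:generalized-spiked-distinguishers} then shows that every distinguishing partition costs Alice $\Omega(1/k)$ when Bob answers as $\sigma^k_0$. The $1/k$ in Proposition~\ref{prop:k-non-myopic-lower-bound} does not come from spreading a two-cut perturbation. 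It comes from $\ell^k=\Theta(1/k)$, which gives a Stackelberg gap $\Omega(w/k)$, pretending cost $O(w/k)$ per distinguishing round, and per-probe loss $\Omega(1/k)$. It also comes from each partition distinguishing at most $k$ spike centers.
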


\subsubsection{ Private Learning Rate}

When Alice does not know Bob's learning rate, she can still achieve $o(T)$ regret, regardless of the true value of $\alpha$. However, the rate is $O(T/\log{T})$ for constant $k$ and the exponent cannot be improved, as we show in the following theorem.

\begin{restatable}[$k \geq 2$ cuts; Non-Myopic Bob with Private Learning Rate]{thm}{unknownAlpha}
    \label{thm:unknown-alpha-bounds}
    Let $k \geq 2$. Suppose Bob's strategy guarantees his regret over  horizon $T$ is at most $c T^{\alpha}$, for $\alpha \in (-\infty, 1)$ and $c > 0$. There exists a deterministic strategy for Alice, independent of both $\alpha$ and $c$, that guarantees her regret is $O(Tk^{3/4}/\log T)$.
    
    The exponent cannot be improved when $k$ is constant: if for some constant $\beta < 1$ Alice's strategy guarantees her  regret is $O(T^{\beta})$ against such a Bob, then there exists a Bob valuation and strategy that guarantees Bob $O(T^{\beta})$ regret but ensures Alice's regret is $\Omega(T)$.
\end{restatable}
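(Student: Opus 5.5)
Both parts of Theorem~\ref{thm:unknown-alpha-bounds} will be reduced to the public-rate machinery behind Theorem~\ref{thm:k-cut-non-myopic-bounds}: the upper bound by running that strategy with a slowly increasing guess of the exponent, the lower bound by making a slow-learning Bob indistinguishable from a myopic Bob for a constant fraction of the game.

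\textbf{Upper bound.} Alice runs the strategy of Theorem~\ref{thm:k-cut-non-myopic-bounds} with an exponent $\hat\alpha_T$ that tends to $1$ as $T$ grows. Any fixed true $\alpha<1$ is eventually below $\hat\alpha_T$, and a Bob with regret at most $cT^{\alpha}$ also has regret at most $cT^{\hat\alpha_T}$. So for $T$ large (depending on $\alpha$) Alice's regret is $O\bigl(T^{(3+\hat\alpha_T)/4}k^{3/4}(\log T)^{1/2}\bigr)$; the finitely many smaller $T$ contribute $O(1)$ relative to the bound. I would choose
\[
1-\hat\alpha_T = \frac{6\log\log T}{\log T},
\]
which gives $T^{(1-\hat\alpha_T)/4} = (\log T)^{3/2}$. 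The bound then becomes $O\bigl(T k^{3/4}(\log T)^{1/2}/(\log T)^{3/2}\bigr) = O(Tk^{3/4}/\log T)$. For the horizon to be unknown I would use a doubling trick. Alternatively, because strategies here are parameterized by $T$, I would just fix $T$.

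The subtle point is that the constants hidden in Theorem~\ref{thm:k-cut-non-myopic-bounds} depend on $\alpha$. I would therefore re-inspect that proof to check that the bound is uniform for $\alpha$ in a neighborhood of $1$, for instance through factors like $1/(1-\alpha)$. A factor of that kind would cost only $\log T$ and could be absorbed by enlarging the constant $6$. The dependence on $c$ is harmless for fixed $c$ once $T^{\hat\alpha_T-\alpha}\ge c$.

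\textbf{Lower bound.} Suppose Alice's strategy guarantees $O(T^{\beta})$ regret against every sublinear-regret Bob, and in particular against a myopic Bob with any valuation $v$. Fix two valuations $v,v'$ whose $k$-cut Stackelberg optima differ substantially, so that any allocation within a constant $\epsilon$ of optimal for one is at least a constant worse for the other; a construction similar to those in the lower bounds of Theorems~\ref{thm:myopic_Bob} and \ref{thm:2-cut-non-myopic} should work. Against myopic Bob-$v'$, Alice's run $H'$ is near-optimal for $v'$ in all but $O(T^{\beta})$ rounds. Now take a Bob with true valuation $v$ who imitates myopic-$v'$ for the first $T/2$ rounds and then plays myopically for $v$. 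Over the imitation phase, Alice sees exactly the history $H'$, so she plays allocations tailored to $v'$ and suffers $\Omega(1)$ Stackelberg regret against $v$ in nearly every one of those rounds. This gives $\Omega(T)$ regret.

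The main obstacle is Bob's regret during imitation. It equals $\sum_t |V_B(Z_t)-V_B(Z_t^c)|$ over the rounds where his choice under $v$ disagrees with his choice under $v'$. In most rounds of $H'$, Alice offers near-Stackelberg allocations for $v'$, where Bob-$v'$ is almost indifferent. I would design $v$ and $v'$ so that in every such allocation Bob-$v$'s gap is small. The clean approach is to relate the two valuations by a perturbation whose size is calibrated to $T^{\beta-1}$ (or to Alice's exploration frequency), so that each disagreeing round costs Bob-$v$ little. Concretely, I would bound Bob's regret by (the number of rounds where $H'$ is far from $v'$-optimal), which is $O(T^{\beta})$ and costs $O(1)$ each, plus ($T$ rounds times the per-round gap at near-optimal allocations), where that gap is forced to be $O(T^{\beta-1})$ by Alice's own guarantee.

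Making this quantitative is the delicate step. The gap is determined by Alice's guarantee, so $v$ must be chosen adaptively (depending on $T$ and $\beta$) while the Stackelberg-value difference stays $\Omega(1)$. Reconciling a small per-round indifference gap with a constant difference in Stackelberg value is exactly where I expect the multi-cut geometry to be essential. In particular, $v$ and $v'$ should agree on Bob's value of $v'$-optimal pieces while differing on pieces Alice could exploit under $v$.
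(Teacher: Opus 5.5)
\textbf{Upper bound.} This is the paper's argument in different form. Choosing $1-\hat\alpha_T = 6\log\log T/\log T$ means you run the public-rate strategy with effective budget $T^{\hat\alpha_T}=T/\log^6 T$. That is exactly the $f(T)$ the paper plugs into Lemma~\ref{lem:k-cut-non-myopic-general-upper-bound}. The lemma is stated for an arbitrary public budget $f$ with $f\in o(T/\log^2 T)$ and $f\in\Omega(T^{\alpha'})$ for some $\alpha'>-1$. This is precisely the uniformity check you flagged, and no $1/(1-\alpha)$ factor arises.

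You omitted the case $k=2$. Theorem~\ref{thm:k-cut-non-myopic-bounds} covers only $k\ge 3$, because the grid strategy needs three cuts to offer pieces like $[0,x_0]\cup[x_i,z]$. For $k=2$ you must instead run the moving-knife strategy of Theorem~\ref{thm:2-cut-non-myopic} with a budget such as $T/\log^5 T$ (Lemma~\ref{lem:2-cut-non-myopic-general-upper-bound}).

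\textbf{Lower bound.} Here there is a genuine gap. Your first term is the right mechanism: the rounds where Alice is far from $v'$-optimal number at most $O(T^\beta)$ by her own guarantee against a myopic Bob-$v'$. But you then try to make Bob's cost at near-optimal allocations small by letting $v$ be a perturbation of $v'$ of size about $T^{\beta-1}$. That cannot work. Such a perturbation changes every piece's value, and hence Alice's Stackelberg value, by only $O(T^{\beta-1})$. Alice's regret would then be $O(T^\beta)$, not $\Omega(T)$. The tension you name is never resolved, and the valuations are never specified.

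The missing idea is that the second term can be made exactly zero with fixed, $T$-independent valuations. Choose them so that every \emph{distinguishing} partition (one on which the two Bobs choose differently) costs Alice a constant against the imitated valuation. The paper takes $v^1$ equal to $1/2$ on $[0,1/2]$ and $3/2$ on $(1/2,1]$, and $v^2$ equal to $1/4$ and $7/4$ on the same halves. Suppose Bob-$v^1$ takes $\overline{S}$ but Bob-$v^2$ prefers $S$, and set $\ell=\mu(S\cap[0,1/2])$ and $r=\mu(S\cap(1/2,1])$. Then $\frac12\ell+\frac32 r\le\frac12$ and $\frac14\ell+\frac74 r\ge\frac12$ together give $\mu(S)=\ell+r\le\frac12$. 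Alice therefore suffers at least $1/6$ regret against $v^1$, whose Stackelberg value is $2/3$.

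Hence, if Alice's regret against a myopic Bob-$v^1$ is at most $c_A T^\beta$, she makes at most $6c_AT^\beta$ distinguishing partitions. Now let Bob-$v^2$ imitate Bob-$v^1$, switching to myopic play only if his regret exceeds $6c_AT^\beta$. He pays at most $1$ per distinguishing round, so he never switches. Alice sees exactly the $v^1$ history and earns at most $2/3$ per round, while her Stackelberg value against $v^2$ is $5/7$. Her regret is therefore at least $T/21$.

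No multi-cut geometry is needed, and the same argument works for every $k\ge 2$. The spiked densities you allude to would also work, but only with a constant $w$ and via Lemma~\ref{lem:generalized-spiked-distinguishers}. That is again the costliness of distinguishing partitions, not a small per-round gap for Bob.
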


\subsection{Corollary for the Robertson-Webb Model}

Our analysis of the repeated game allows us to characterize the randomized query complexity of finding a Stackelberg solution with $k$ cuts---for constant $k$---in the standard Robertson-Webb (RW) query model for cake cutting.

In the Robertson-Webb model \citep{woeginger2007complexity,RW98}, the protocol interacts with the players using the following types of queries:
\begin{itemize}
	\item $\emph{\textit{Cut}}_i(\alpha)$: Player $i$ cuts the cake at a point $y$ such that $V_{i}([0,y]) = \alpha$, where $\alpha \in [0,1]$ is specified by the protocol.\footnote{Ties are resolved deterministically, typically by choosing the leftmost point satisfying the condition.} The point $y$ becomes a \emph{cut point}.
	\item $\emph{\textit{Eval}}_i(y)$: Player $i$ returns the value $V_{i}([0,y])$, where $y$ is a previously established cut point.
\end{itemize}
	
A protocol issues a sequence of cut and evaluate queries and eventually outputs an allocation demarcated by cut points discovered during its execution. 	The value of a piece $[x,y]$ can be determined with two Eval queries, $Eval_i(x)$ and $Eval_i(y)$.

Since computing an exact Stackelberg allocation with $k \geq 2$ cuts is generally infeasible with a finite number of queries, we focus on finding an approximation.

\begin{definition}
    \label{def:eps-stackelberg}
    An allocation $Z = (Z_A, Z_B)$ demarcated with $k$ cuts is \emph{$\eps$-Stackelberg} if:
    \begin{itemize} 
        \item $Z$ is envy-free; and 
        \item Alice's value is within $\eps$ of her optimal Stackelberg value: $V_A(Z_A) \geq u_A^*(k) - \eps$.
    \end{itemize}
\end{definition}

 The \emph{deterministic query complexity} of a problem is the total number of queries necessary and sufficient for a correct deterministic algorithm to find a solution.
The \emph{randomized query complexity} is the expected number of queries required to find a solution with probability at least $9/10$ for each input, where the expectation is taken over the coin tosses of the protocol.

\medskip 

We have the following bounds, which are matching when $k$ is constant.

\begin{restatable}{thm}{RWQueryModel}
    \label{thm:rw-query-model-bounds}
    Let $\varepsilon > 0$. The randomized query complexity of computing an $\eps$-Stackelberg allocation with $k \geq 2$ cuts in the RW query model is $O(k/\eps)$ and  $\Omega(1/\eps)$.
\end{restatable}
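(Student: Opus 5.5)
The plan is to prove the two bounds separately, using the repeated-game results as black boxes wherever possible.

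\textbf{Upper bound $O(k/\eps)$.} I would build the protocol directly and not simulate the online strategy. First, use Cut queries on Bob to discretize the cake into $m = \Theta(1/\eps)$ intervals, each worth exactly $1/m$ to Bob; this costs $m$ Cut queries. Next, evaluate Alice on every resulting cut point, which costs $m$ Eval queries. Bob's value is then known exactly on unions of grid cells, and Alice's value is known on each cell.

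Now solve the discretized Stackelberg problem offline, with no further queries. Alice should pick at most $\lceil (k+1)/2\rceil$ of her own intervals formed by grid cells, maximizing her value subject to Bob's value of his complement being at least $1/2$. A single boundary cell may be split, and this can be resolved with $O(k)$ extra Cut/Eval queries so that Bob's piece is worth exactly $1/2$; this keeps the allocation envy-free and respects the $k$-cut budget. The $O(k)$ adjustment queries are where the factor $k$ enters, though a sharper construction may absorb them.

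The approximation argument rounds the true Stackelberg $k$-cut. Each of its $k$ cuts is moved to the nearest grid point in the direction that gives Bob more. Each move shifts at most $1/m$ of Bob's value and at most $\Delta/(\delta m)$ of Alice's value, so Alice loses $O(k\Delta/(\delta m))$ in total. Taking $m = \Theta(k/\eps)$ makes the loss at most $\eps$, for $O(k/\eps)$ queries overall, and the protocol is even deterministic. Alternatively, one could invoke the myopic-Bob learning strategy of Theorem~\ref{thm:myopic_Bob}, simulate Bob's choices with Eval queries, and average, but the direct discretization is cleaner.

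\textbf{Lower bound $\Omega(1/\eps)$.} I would reduce from a hidden-location search problem, mirroring the lower-bound instance in Theorem~\ref{thm:myopic_Bob}. Fix Alice uniform. Take a family of $N = \Theta(1/\eps)$ Bob densities that agree with a base density except on one of $N$ disjoint windows. In the perturbed window, a bump of height bounded between $\delta$ and $\Delta$ shifts Bob's value so that the $2$-cut Stackelberg interval must sit at a location determined by the hidden index. An allocation chosen for the wrong index then either violates envy-freeness or costs Alice more than $\eps$. Equivalently, any $\eps$-Stackelberg output reveals the index.

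Each query touches the perturbation only if it lands in, or straddles the boundary of, the hidden window. I would arrange the windows so that a Cut or Eval query reveals information about $O(1)$ indices. Yao's minimax principle applied to the uniform index then gives that identifying it with probability $9/10$ requires $\Omega(N) = \Omega(1/\eps)$ expected queries.

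The main obstacle is designing the perturbation so that it both changes the Stackelberg solution by more than $\eps$ for Alice and is invisible to Cut queries outside its window. Cut answers shift globally once Bob's prefix mass changes. To prevent this, I would make each bump mass-neutral: add $+h$ on the left half of the window and $-h$ on the right half, so that Bob's prefix values outside the window are unchanged. Then I would check that this reshaping still moves the optimal two-cut interval by an amount costing Alice $\Omega(\eps)$.
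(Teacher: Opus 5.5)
Your proposal is correct and follows essentially the paper's route. For the $O(k/\varepsilon)$ bound you discretize on a grid and round the true Stackelberg cuts toward Bob; the paper grids by Alice's value into $\lceil k/\varepsilon\rceil$ cells, which avoids your $\Delta/\delta$ factor, and your extra boundary-splitting queries are not needed. For the $\Omega(1/\varepsilon)$ bound you apply Yao's principle over a hidden mass-neutral spike, and the paper's spike $\sigma^2_{w;z}$ (density $2$ then $1$ inside a density-$3/2$ region, with $w=14\varepsilon$) is exactly your $+h/-h$ bump. The $\Omega(w)$ Stackelberg gain you left to check is the paper's Lemma~\ref{lem:generalized-spiked-stackelberg}: cutting at the spike center hands Bob the density-$2$ half and saves Alice $w\ell^k/3$.
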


Recall that for $k=1$, the query complexity of finding an $\varepsilon$-Stackelberg allocation is $2$, since the Cut-and-Choose protocol (with Bob as the cutter) yields an exact Stackelberg allocation for Alice.

\section{Related Work} \label{sec:related_work}

\paragraph{The cake cutting model.} Introduced by \cite{Steinhaus48}, in the cake cutting model $n$ agents each have a private value density over a ``cake" represented by the interval $[0, 1]$. The goal is to divide the cake among the agents according to some fairness notion like envy-freeness. It has been extensively studied; see \cite{BT96, RW98,Moulin03,socialchoice_book,Pro13} for surveys.

\paragraph{Existence and computational complexity of envy-free allocations.} Envy-free allocations with contiguous pieces can be shown to exist for any number of players $n$ by Sperner's lemma or Brouwer's fixed-point theorem \cite{edward1999rental,Stromquist80}. The computational complexity of finding such allocations is open in many cases. Allowing for disconnected pieces, in the Robertson-Webb (RW) query model a lower bound of $\Omega(n^2)$ queries was given by \cite{Pro09} and an upper bound of $O(n^{n^{n^{n^{n}}}})$ queries was given by \cite{AM16}. Simpler algorithms are known for small numbers of agents, such as the $4$-agent algorithm of \cite{AFMPV18}. \cite{Cheze_whp_ef} gave an algorithm that, with high probability over a certain distribution of the agents' valuations, uses a polynomial number of queries.

On the other hand, there does not exist a protocol in the RW query model for finding an envy-free allocation with contiguous pieces for $n \geq 3$ players \cite{Stromquist08}. The easier task of finding $\varepsilon$-envy-free allocations was studied in works such as \cite{CDP13, PW17, BN19, HR23}. Allowing more general valuation functions, \cite{DQS12} showed that the problem is PPAD-complete. See \cite{GoldbergHS20,FHHH21,Halevi18} for surveys.

\paragraph{Strategic cake cutting.} Truthful cake cutting mechanisms have been studied in the model where agents directly reveal their preferences \cite{truth_justice_and_cake_cutting, BU2023103904, bei2022truthful, Tao22} and in the RW query model \cite{mossel_tamuz, BM15}. Equilibria with strategic agents were studied in \cite{branzei2013equilibrium, NY08}.

\paragraph{Repeated fair division.} A simplified version of our setting was one example analyzed in \cite{AumannMaschler} (page 243). They considered a cake that is uniform except for a single cherry. Alice is assumed to be indifferent to the cherry while Bob could either like or dislike it, and Alice has a prior over his type. They also restrict Alice to two specific cut locations. They analyze what range of outcomes the players can achieve in the repeated cut-and-choose game. This example was generalized by \cite{Branzei24Dueling} to what we call the $1$-cut game in our paper.

The possibility of achieving various fairness notions has been studied in this and related online settings. \cite{Igarashi24Repeated} consider fairness in the aggregate across many repeated divisions. \cite{benade2022dynamic} have a series of indivisible items arrive, each of which must be allocated immediately. Works such as \cite{walsh2011online, kash2014no, friedman2015dynamic} instead have a series of agents arrive and want to fairly divide a single divisible resource.

\paragraph{Learning in repeated Stackelberg games.} The concepts of Stackelberg games and equlibria were introduced by \cite{stackelberg1934marktform} in the context of firms entering a market. They have been applied in settings such as security games \cite{balcan2015commitment, tambe2011security}, online strategic classification \cite{dong2018strategic}, and online principal agent problems \cite{hajiaghayi2023regret}.

Our model also features the online learning task of each player gleaning information about the other player through their actions. \cite{kleinberg2003value} studied this task from the perspective of a seller posting prices for a sequence of buyers. \cite{birmpas2020optimally, gan2019imitative, zhao2023online} consider general games, but where one player starts with perfect information about their opponent and tries to use that asymmetry for their own gain. \cite{Donahue24Decentralized} consider a more cooperative setting where one player is assumed to use a sublinear-regret algorithm and the other tries to minimize their collective regrets.

\paragraph{Exploiting a no-regret learner.} Our results focus on settings where Bob's strategy has a regret bound that Alice tries to exploit. Her goal of reaching her Stackelberg value under such conditions has been studied in other games. \cite{deng2019strategizing} showed that in many bimatrix games, with knowledge of the learner's payoff function, it is possible for such an exploiter to get arbitrarily close to their Stackelberg value. \cite{Zhang2025Steer} showed that this knowledge is necessary in general, but gave sufficient conditions on the learner's algorithm that enable the exploiter to learn it on the fly. On the learner's side, \cite{Arunachaleswaran24Pareto} study which learning algorithms are Pareto optimal against all possible exploiters.

\paragraph{Corrupted stochastic bandits.} Alice's task of finding an optimal partition can be viewed as a bandit problem. In particular, our setting when Bob has a nonzero regret budget is similar to the setting of stochastic bandits with adversarial corruption introduced by \cite{Lykouris18}. Here, the learner (Alice) is faced with a multi-arm stochastic bandit, but the arm rewards go through an adversary (Bob) who can corrupt them before Alice sees them. His ability to do so is constrained by a budget $C$ of the total amount he can modify rewards by over the course of the game. The extension most similar to our setting is \cite{Kang23}, where Alice's options form a continuous space and the true rewards are Lipschitz. Our setting has two main differences from these corrupted bandits: Alice knows what the two possible payoffs are on each arm based on her own valuation, and Bob's corruption is limited not by the amount he changes Alice's payoff but by the amount he changes his own.

\paragraph{Strategic experimentation.} Continuing the bandit similarities, many works have considered the strategic experimentation setting where multiple agents play the same stochastic bandit simultaneously. \cite{bolton1999strategic} consider the perfect monitoring setting where agents see the other agents' actions and rewards. They found phenomena such as agents exploring new arms in order to encourage other agents to do the same. \cite{aoyagi1998mutual, aoyagi2011corrigendum} considered the imperfect monitoring setting where agents only see the other agents' actions and showed that, with two agents and two arms with discrete priors, the agents eventually agree which arm is better. See \cite{horner2017learning} for a survey.

\section{Preliminaries} \label{sec:prelim}
	
	In this section we formally define the notation needed for the proofs. Let $\mu(S)$ denote the Lebesgue measure of a measurable set $S$. Let $[n]$ denote the set $\{1, 2, \ldots, n\}$.
    
    Given a set of cuts $x_1, \ldots, x_k \in [0,1]$ with $0 \leq  x_1 \leq  \ldots x_k \leq  1$, 
   define $x_0 = 0$, $x_{k+1}=1$. The \emph{alternating partition} $Z = (Z_1, Z_2)$ is the division of the cake into two  pieces\footnote{These pieces overlap at the endpoints of the intervals, but as these are finitely many points they do not affect the non-atomic valuations.}, where $Z_1$ contains the intervals $[x_i, x_{i+1}]$ for even $i$, and $Z_2$ contains the intervals $[x_i, x_{i+1}]$ for odd $i$. An example is shown in Figure \ref{fig:alternating-partition}.
Let $m_A$ be Alice's midpoint of the cake and $m_B$ Bob's midpoint.

\begin{figure}[h!]
    \centering
    \includegraphics[width=0.9\textwidth]{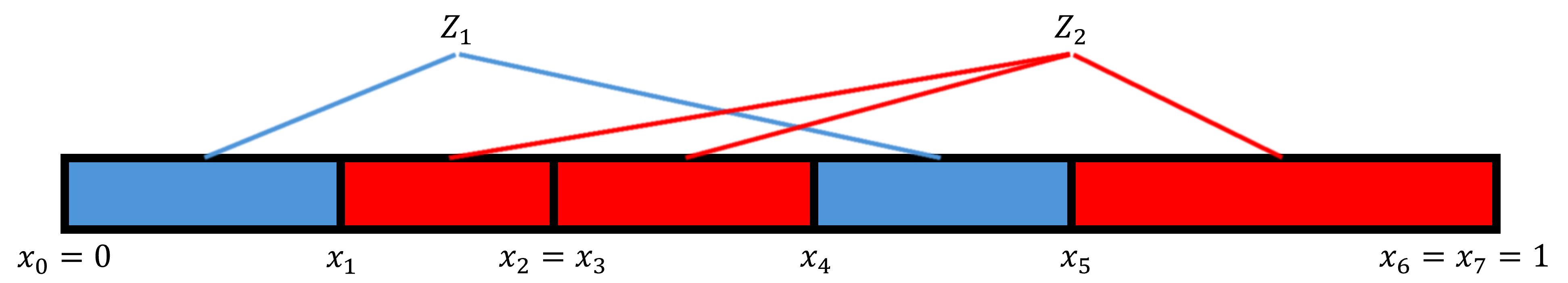}
    \caption{Illustration of an alternating partition for $k=6$. The positive-length intervals of $Z_1$ and $Z_2$ are colored blue and red, respectively. Setting some cut points equal to each other, as with $x_2=x_3$ and $x_6=x_7$, allows Alice to effectively make fewer than $k$ cuts if desired.}\label{fig:alternating-partition}
\end{figure}
    
	\subsection*{History} Recall $T$ is the number of rounds. In each round $t \in [T]$, Alice makes a sequence of cuts $0 \leq a_{t,1} \leq \ldots a_{t,k} \leq 1$, partitions the resulting intervals into two bins, and Bob chooses a bin $b_t \in \{1,2\}$. 
    
   It is without loss of generality to assume that she makes the alternating partition. Any allocation attainable with at most $k$ cuts can be simulated in this model by setting $a_{t,i} = 0$ for some of the indices $i$.

      Let $\vec{a}_t = (a_{t,1}, \ldots, a_{t,k})$ denote the cuts made by Alice  at time $t$. Let  $A_t = (a_1, \ldots, a_t)$ be the history of cuts until the end of round $t$ and $B_t = (b_1, \ldots, b_t)$ the history of choices made by Bob until the end of round $t$.
      
	A history $H = (A_T, B_T)$  denotes an entire trajectory of play.
	
	\subsection*{Strategies} 
	Let $\mathcal{A}= \left\{ a_1, \ldots, a_k \mid 0 \leq a_1 \leq \ldots \leq a_k \leq 1 \right\}$ denote  Alice's set of actions  and $\mathcal{B} = \{1,2\}$ denote Bob's set of actions in any given round.
	A pure strategy for Alice  at time $t$ is a function 
	\[ S_A^t : \mathcal{A}^{t-1} \times \mathcal{B}^{t-1} \times \mathcal{V} \times \mathbb{N} \to \mathcal{A},
	\] such that $S_A^t(A_{t-1}, B_{t-1}, v_A, T)$ is the next vector of cuts made by Alice as a function of the history $A_{t-1}$ of Alice's cuts, the history $B_{t-1}$ of Bob's choices, Alice's valuation $v_A$, and the  horizon $T$.
	
		A pure strategy for Bob at time $t$ is a function 
		\[ S_B^t:  \mathcal{A}^{t} \times \mathcal{B}^{t-1} \times \mathcal{V} \times \mathbb{N} \to \mathcal{B}\,.
		\]
		That  is, Bob observes Alice's cuts and then responds.		
	
	A pure strategy for Alice over the entire time horizon $T$ is denoted $S_A = (S_A^1, \ldots, S_A^T)$ and tells Alice what cuts to make at each time $t$. 
	A pure strategy for Bob over the entire time horizon $T$ is denoted $S_B = (S_B^1, \ldots, S_B^T)$ and tells Bob whether to choose bin $1$ or bin $2$ at each time $t$. 
	
	\subsection*{Rewards and utilities} 
	Suppose Alice has  strategy $S_A$ and Bob has  strategy $S_B$. 
	Let $u_A^t$ and $u_B^t$ be the utility (payoff) experienced by Alice  and Bob, respectively, at round $t$.  The utility of player $i \in \{A,B\}$ is denoted \[
	u_i = u_i(S_A, S_B) =  \sum_{t=1}^{T} u_i^t\,.
	\]
		
	Given a history $H$, let $u_i^t(H)$ be player $i$'s utility in round $t$ under  $H$ and let
	$u_i(H)= \sum_{t=1}^T u_i^t(H)$ be player $i$'s cumulative utility  under $H$.

We often analyze strategy profiles where Alice's valuation is uniform, i.e. $v_A(x) = 1$ for all $x \in [0,1]$. Briefly, we can do so by measuring the cake according to Alice's valuation rather than raw length. This argument is formalized in the following lemma, whose proof can be found in Appendix \ref{app:prelim}.

\begin{restatable}{lemma}{aliceValuationWarping}
    \label{lem:wlog-alice-valuation-warping}
    Let $v_A^1, v_B^1 \in \mathcal{V}$ and $(S_A^1, S_B^1)$ be a strategy pair for the repeated game in either the $k$-cut or measurable-cut model. For every $v_A^2 \in \mathcal{V}$, there exist a density $v_B^2: [0,1] \to [\delta^2/\Delta, \Delta^2/\delta]$ and strategy pair $(S_A^2, S_B^2)$ such that the sequence of Bob's choices and the realized utilities are identical under  $(v_A^1, v_B^1, S_A^1, S_B^1)$ and $(v_A^2, v_B^2, S_A^2, S_B^2)$.    
\end{restatable}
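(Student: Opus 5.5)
The plan is to build a measure-preserving change of coordinates that turns Alice's density $v_A^1$ into $v_A^2$, and then transport Bob's density and both strategies through it. Let $F_j(x) = V_A^j([0,x]) = \int_0^x v_A^j$ for $j \in \{1,2\}$. Since $\delta \le v_A^j \le \Delta$, each $F_j$ is a strictly increasing bi-Lipschitz homeomorphism of $[0,1]$, with derivative in $[\delta,\Delta]$. Define $\phi = F_2^{-1}\circ F_1 : [0,1]\to[0,1]$. This map is an increasing bi-Lipschitz homeomorphism with $\phi(0)=0$, $\phi(1)=1$, and it satisfies $V_A^2(\phi(S)) = V_A^1(S)$ for every measurable $S$. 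That identity is checked first on intervals and then extended by the usual $\pi$-$\lambda$ argument. Next set
\[
v_B^2(y) = v_B^1(\phi^{-1}(y))\,(\phi^{-1})'(y).
\]
This gives $V_B^2(\phi(S)) = V_B^1(S)$ by the change-of-variables formula, which is valid because $\phi^{-1}$ is absolutely continuous. Since $(\phi^{-1})' = F_1^{-1}{}'\circ\cdots$ equals $v_A^2(y)/v_A^1(\phi^{-1}(y))$ almost everywhere, it lies in $[\delta/\Delta,\Delta/\delta]$. Therefore $v_B^2$ takes values in $[\delta^2/\Delta,\Delta^2/\delta]$, as the lemma claims, and it is normalized because $\phi$ maps $[0,1]$ onto itself.

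Then I transport the strategies. In the $k$-cut game, a cut vector $\vec a = (a_1,\dots,a_k)$ is mapped to $\phi(\vec a) = (\phi(a_1),\dots,\phi(a_k))$. Because $\phi$ is increasing, this preserves the ordering of the cuts, and the alternating partition of $\phi(\vec a)$ is exactly the image under $\phi$ of the alternating partition of $\vec a$. In the measurable-cut game, a partition $(Z_1,Z_2)$ is mapped to $(\phi(Z_1),\phi(Z_2))$, which is again a measurable partition because $\phi$ is a homeomorphism that preserves null sets.

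The new strategies are defined by conjugation:
\[
S_A^{2,t}(A_{t-1},B_{t-1},v_A^2,T) = \phi\bigl(S_A^{1,t}(\phi^{-1}(A_{t-1}),B_{t-1},v_A^1,T)\bigr),
\]
\[
S_B^{2,t}(A_t,B_{t-1},v_B^2,T) = S_B^{1,t}(\phi^{-1}(A_t),B_{t-1},v_B^1,T).
\]
Strictly, the strategy maps must be defined for every valuation input, not just the ones above. On all other inputs I let them act arbitrarily, for instance by copying the original strategies. Only the behaviour on the inputs $(v_A^2,v_B^2)$ matters for the claim.

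The proof then goes by induction on $t$. The inductive claim is that the histories satisfy $A^2_t = \phi(A^1_t)$ and $B^2_t = B^1_t$. The base case and the inductive step both follow directly from the definitions, since each player in the second system sees exactly the $\phi$-image of what the corresponding player in the first system sees. Finally, the utilities agree round by round, because $V_A^2(\phi(Z)) = V_A^1(Z)$ and $V_B^2(\phi(Z)) = V_B^1(Z)$ for the piece $Z$ each player receives. These two identities also give that the Stackelberg values and Bob's per-round best responses coincide, so both regrets are preserved. That last point is not needed for the lemma itself, but it is what makes the lemma useful later in the paper.

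The main obstacle is the measure-theoretic care in the measurable-cut case. There I must ensure that $\phi$ maps measurable sets to measurable sets and that the change-of-variables identity holds for arbitrary measurable sets, not just intervals. Both follow from $\phi$ being bi-Lipschitz, hence absolutely continuous with an absolutely continuous inverse. The remaining steps are routine bookkeeping.
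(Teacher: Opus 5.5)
Your proposal is correct and follows the paper's proof essentially step for step. You use the same warping map $F_2^{-1}\circ F_1$, the same transported Bob density (your $v_B^1(\phi^{-1}(y))\,(\phi^{-1})'(y)$ coincides a.e.\ with the paper's $\frac{v_B^1(f^{-1}(y))}{v_A^1(f^{-1}(y))}\,v_A^2(y)$), and the same conjugated strategies; your explicit induction on histories and your measurability and $\pi$-$\lambda$ remarks make rigorous what the paper leaves implicit.
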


\section{The Measurable-Cut Game} \label{sec:measurable}

In this section we give a proof sketch of Theorem \ref{thm:measurable-cut-myopic-lower-bound}, which shows that in the measurable-cut game Alice cannot achieve $O(T^{\alpha})$ regret for any $\alpha < 1$ even against a myopic Bob. We restate it here for reference together with the proof sketch. The full proof can be found in Appendix \ref{app:measurable}.

\measurableMyopicLower*

If we flip the quantifiers in the statement of Theorem~\ref{thm:measurable-cut-myopic-lower-bound}, to say that for every $T \geq T_0$ there exists a Bob valuation, then the statement is immediate. For example, one could concentrate most of Bob's value in $2^T$ small high-value intervals, making it impossible for Alice to learn anything meaningful in just $T$ rounds. We provide a stronger construction, where Bob's valuation is independent of $T$.

\begin{proof}[Proof sketch of Theorem \ref{thm:measurable-cut-myopic-lower-bound}]
    We choose a function $g : \mathbb{N}^* \to [0, 1/2]$ that slowly converges to zero and use it to partition $(0, 1/2]$ into an infinite sequence of intervals $(g(2), g(1)], (g(3), g(2)], \ldots$. For concreteness we choose $g(n)=1/(2 + \log n)$, but this choice only matters for getting our specific regret bound. Let $L_i$ and $R_i$ be the left and right halves of $(g(i+1), g(i)]$, respectively. For each infinite bit vector $s \in \{0, 1\}^{\mathbb{N}^*}$, we then get a possible Bob valuation $v_B^s$ defined by:
    \begin{align}
        v_B^s(x) &= \begin{cases}
            2 & \text{if for some $i \in \mathbb{N}^*$, ($x \in L_i$ and $s_i=0$) or ($x \in R_i$ and $s_i=1$)} \\
            2/3 & \text{otherwise}
        \end{cases}
    \end{align}
    The constants are chosen so that such a Bob would be satisfied receiving only the portions of the cake with value density $2$. By Lemma \ref{lem:wlog-alice-valuation-warping}, we can assume Alice's value density is the uniform density function $v_A(x)=1$ $\forall x \in[0,1]$, so Alice's task is to learn the bits of $s$.

    Rather than reason about Alice's learning rate, though, we aim directly for her utility in each round. We show that it is possible to choose $s$ so that no partition Alice makes in a $T$-round game, for any $T$, is $O(1/\log^2 T)$-Stackelberg.

    To do so, we list out every partition Alice could make using her strategy $S_A$ across all time horizons. Because $S_A$ is deterministic and Alice only receives one bit of feedback from Bob in each round, she can only make at most $2^h$ different partitions in an $h$-round game. Index these partitions so that those made in an $h$-round game are partitions $2^h$ through $2 \cdot 2^h - 1$. We pick out Alice's less-preferred piece from each of these partitions to get a sequence of pieces $(B^j_1)_{j \geq 1}$. We further filter this sequence to those which give Alice any chance of getting a good payoff: we throw out any that came from partitions where, for all $s \in \{0, 1\}^{\mathbb{N}^*}$, against a Bob with value density $v_B^s$ Alice would get at least $1/12$ regret if she made that partition.

    By our construction of $v_B^s$, this filtering leaves only pieces $B^j_1$ that approximately satisfy the following:
    \begin{itemize}
        \item[(i)] For all $i \in \mathbb{N}^*$, either $L_i \subseteq B^j_1$ and $R_i \cap B^j_1 = \emptyset$ or vice versa (i.e., $L_i \cap B^j_1 = \emptyset$ and $R_i \subseteq B^j_1$).
        \item[(ii)] The measure $\mu(B^j_1 \cap [1/2, 1])$ is ``small".
    \end{itemize}
    Of course, approximately satisfying these conditions is less useful than actually satisfying them. We get closer to that goal by transforming each $B^j_1$ in two stages, each of which preserves Bob's preference for $B^j_1$ over $\overline{B^j_1}$ and approximately preserves $V_A(B^j_1)$. These transformations are illustrated in Figure \ref{fig:B-set-transform-unlabeled}.

    \begin{figure}
        \centering
        \includegraphics[width=0.9\textwidth]{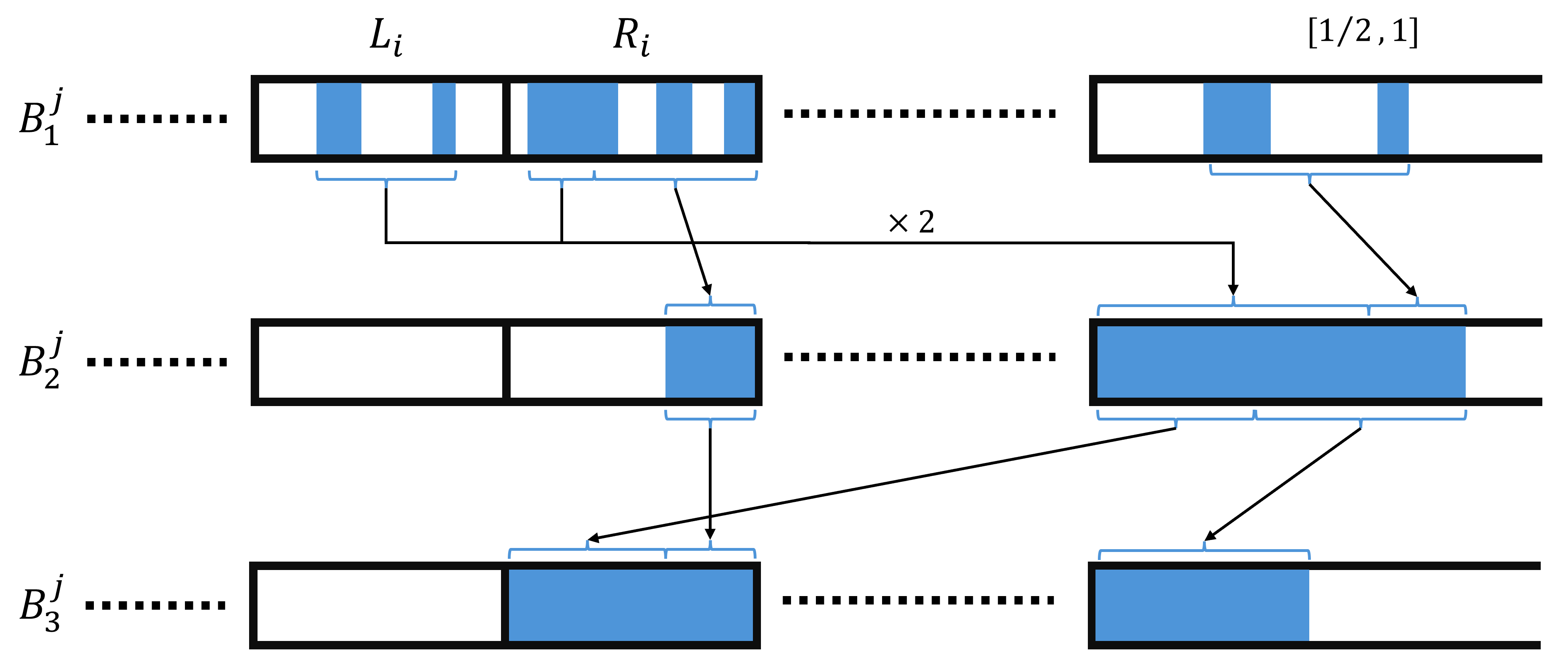}
        \caption{An illustration of the transformations in Theorem \ref{thm:measurable-cut-myopic-lower-bound}, focusing on the mass moved in and out of a single $(L_i, R_i)$ pair. Blue indicates the elements of each of the three sets. The starting $B^j_1$ is an arbitrary piece Alice's strategy could make. From $B^j_1$ to $B^j_2$, an equal amount of mass is removed from $L_i$ and $R_i$ until one of them is empty, in this case $L_i$. Twice this much mass is added to $[1/2, 1]$. From $B^j_2$ to $B^j_3$, mass is moved back into the partially filled $R_i$ to complete it. The resulting $B^j_3$ then contains exactly one set from each $(L_i, R_i)$ pair while still having similar value to $B^j_1$ for both players.} \label{fig:B-set-transform-unlabeled}
    \end{figure}
    
    We first transform $B^j_1$ into $B^j_2$ by transferring mass out of each $(L_i, R_i)$ pair into $[1/2, 1]$. We also double this transferred mass in the process to ensure Bob still prefers $B^j_2$ to $\overline{B^j_2}$. We avoid overfilling $[1/2, 1]$ by condition (ii). We avoid affecting $V_A(B^j_2)$ too much because we only transfer out just enough mass to empty out whichever of $L_i$ or $R_i$ contained less of $B^j_1$, which by condition (i) is small.

    The resulting piece $B^j_2$ now exactly satisfies the following properties:
    \begin{itemize}
        \item For all $i \in \mathbb{N}^*$, either $L_i \cap B^j_2 = \emptyset$ or $R_i \cap B^j_2 = \emptyset$ (or possibly both).
        \item The mass $B^j_2 \cap [1/2, 1]$ is large enough to complete any partially-filled $L_i$ or $R_i$.
    \end{itemize}
    This  naturally suggests the transformation to $B^j_3$: transfer mass from $[1/2, 1]$ into any partially-filled $L_i$ and $R_i$. If for some $i \in \mathbb{N}^*$ both $L_i$ and $R_i$ are empty, we arbitrarily choose one to fill. This is a pure transfer of mass, so $V_A(B^j_3)$ is unaffected and $V_B^s(B^j_3)$ can only increase from $V_B^s(B^j_2)$.

    We now have pieces $B^j_3$ whose mass in $(0, 1/2]$ is divided among the $L_i$ and $R_i$ in a binary way: for each $i \in \mathbb{N}^*$, it contains all of either $L_i$ or $R_i$ and none of the other. We can therefore almost completely describe them by bit vectors $a^j \in \{0, 1\}^{\mathbb{N}^*}$, where $a^j_i=0$ if $L_i \subset B^j_3$ and $a^j_i=1$ otherwise. For arbitrary $s^j \in \{0, 1\}^{\mathbb{N}^*}$ such that the partition $(B^j_1, \overline{B^j_1})$ is $1/12$-Stackelberg for Alice against a Bob with value density $v_B^{s^j}$, we eventually have:
    \begin{align}
        u_A^*(\infty) - V_A(\overline{B^j_3}) &\geq \sum_{i=1}^{\infty} (g(i) - g(i+1)) \left(a^j_i \oplus s^j_i\right)
    \end{align}
    Thus, to ensure Alice has high regret we must find $s^* \in \{0, 1\}^{\mathbb{N}^*}$ that has high Hamming distance from all the $a^j$. We show the existence of such a bit vector by upper-bounding the number of choices for $s^*$ that are too close to each individual $a^j$ and taking a union bound. Our choice of $g$ then ensures Alice receives $\Omega(1/\log^2 T)$ regret per round for a total of $\Omega(T/\log^2 T)$.
\end{proof}
	
\section{The $k$-Cut Game} \label{sec:k_cut_game}

In this section we analyze the $k$-cut game and give proof sketches for  Theorems \ref{thm:myopic_Bob}, \ref{thm:2-cut-non-myopic}, \ref{thm:k-cut-non-myopic-bounds}, and \ref{thm:unknown-alpha-bounds}. The full proofs can be found in Appendix \ref{app:k-cut}.

\subsection{Myopic Bob}

In this section we prove Theorem \ref{thm:myopic_Bob}, which analyzes Alice's optimal strategy when Bob is myopic. Towards this end, we prove the following: an upper bound for $k=2$ cuts (Proposition \ref{prop:two_cuts_myopic_Bob}), an upper bound for $k\geq 3$ cuts (Proposition \ref{prop:k-cut-myopic-upper-bound}), and a lower bound for $k \geq 2$ cuts (Proposition \ref{prop:k-myopic-lower-bound}).

\medskip 
\begin{restatable}[Upper bound for 2 cuts; Myopic Bob]{prop}{twoCutMyopicAlgorithm}  \label{prop:two_cuts_myopic_Bob}
Let $k = 2$ and suppose Bob is myopic. Then Alice has a deterministic strategy that guarantees her Stackelberg regret is $O\bigl(\sqrt{T \log T}\bigr)$.
\end{restatable}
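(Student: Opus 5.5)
By Lemma~\ref{lem:wlog-alice-valuation-warping} I would take Alice's density to be uniform, so $V_A(S)=\mu(S)$; Bob's density is still bounded between two positive constants. With two cuts the alternating partition is an interval $I=[x,y]$ against its complement $\overline{I}$. A Stackelberg allocation comes in one of two forms: either Bob receives an interval of Bob-value exactly $1/2$, or Bob receives the complement of such an interval. For $x$ in the feasible range $[0,m']$, where $m'$ is defined by $V_B([m',1])=1/2$, let $y(x)$ be the unique point with $V_B([x,y(x)])=1/2$, and put $\ell(x)=y(x)-x$. Then
\[
u_A^*(2)=\max\Bigl\{\,1-\min_{x\in[0,m']}\ell(x),\ \max_{x\in[0,m']}\ell(x)\Bigr\}.
\]
Because Bob's density is bounded above and below, $y(\cdot)$ is Lipschitz with constant at most $\Delta/\delta$. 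Hence $\ell$ is Lipschitz with constant $O(1)$. So Alice's task reduces to approximately learning $\ell$ on a grid of left endpoints.

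The strategy is explore-then-commit. Fix a grid $x_j=j/n$ for $j=0,\dots,n$. For each $x_j$, Alice runs a binary search on $y$ by offering the partition $([x_j,y],\overline{[x_j,y]})$. A myopic Bob's choice reveals whether $V_B([x_j,y])\ge 1/2$. After $\lceil\log_2(1/\varepsilon)\rceil$ rounds she has an interval $[\underline y_j,\overline y_j]$ of length at most $\varepsilon$ that contains $y(x_j)$. If $y=1$ is already rejected, she learns that $x_j>m'$ and marks $x_j$ infeasible. Since the feasible set is the interval $[0,m']$, the grid point just below any optimizer $x^*$ is feasible and lies within $1/n$ of it. Each exploration round costs at most $1$ regret, so exploration costs $O(n\log(1/\varepsilon))$ in total.

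In the commit phase, Alice uses the estimates to pick the better of the two forms and a grid index. To guarantee that Bob picks the intended piece, she rounds in her own disfavor.
\begin{itemize}
\item If Bob is to receive the interval, she offers $[x_j,\overline y_j]$; Bob weakly prefers it, since $V_B([x_j,\overline y_j])\ge 1/2$.
\item If Alice is to keep the interval, she offers $[x_j,\underline y_j]$, which has $V_B\le 1/2$, so Bob takes the complement.
\end{itemize}
Combining the Lipschitz bound on $\ell$, the grid spacing, and the binary-search precision, her per-round loss relative to $u_A^*(2)$ is $O(1/n+\varepsilon)$.

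The total regret is therefore $O\bigl(n\log(1/\varepsilon)+T/n+T\varepsilon\bigr)$. Choosing $\varepsilon=1/T$ and $n=\sqrt{T/\log T}$ gives $O(\sqrt{T\log T})$. Alice knows $T$, since it is an argument of her strategy, so these choices are admissible.

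The step I expect to need the most care is the commit-phase robustness. Bob's choice must be exactly as intended: this requires handling ties, which break in Alice's favor but where inclusive bounds are needed, and binary-search outputs that land exactly at value $1/2$. The same step must also handle the boundary of the feasible region near $m'$, where $y(x)$ approaches $1$ and the "keep the interval" form degenerates into a one-cut partition. I would treat these by making all search comparisons one-sided as above and by including $x=m'$'s neighbour in the grid via the infeasibility test. The Lipschitz claim itself should be routine: $V_B([y(x),y(x')])=V_B([x,x'])$ gives $|y(x')-y(x)|\le(\Delta/\delta)|x'-x|$.
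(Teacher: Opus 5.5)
Your proposal is correct and follows essentially the paper's own route: a discretized moving knife over Bob's left endpoint, a binary search for the matching right endpoint of Bob-value $1/2$, and then committing to the best estimated cut, rounded in Bob's favor. The differences are cosmetic and both versions give $O(\sqrt{T\log T})$. You decouple the grid spacing $1/n$ from the search precision $\varepsilon=1/T$, and you use the one-sided bracket endpoints $\underline y_j,\overline y_j$ instead of the paper's explicit shift of $\varepsilon\Delta/\delta$; the bracket version also means your strategy does not need to know $\delta$ or $\Delta$.
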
 

\begin{proof}[Proof sketch]
    Her strategy is to discretize the following moving-knife procedure:
    \begin{itemize}
        \item Ask Bob for his midpoint $m_B$, i.e. the point such that $V_B([0, m_B])=1/2$.
        \item Slide a knife across the cake, from $0$ to $m_B$. For each position $x$ of the knife, Bob places his own knife at a point $y(x)>x$ such that $V_B([x, y(x)])=1/2$.
        \item Alice cuts at points $(x^*, y(x^*))$ that maximize $\max \{V_A([x^*, y(x^*)]), 1-V_A([x^*, y(x^*)])\}$.
    \end{itemize}
    This procedure works because Bob is indifferent between the pieces of any Stackelberg partition and all such partitions are revealed as the knife moves.

    We now discretize the procedure. Let $\varepsilon > 0$ and let $\eta = \Delta \varepsilon$. Alice's strategy is the following:
    \begin{enumerate}[1.]
        \item For $i=0, 1, \ldots,$ let $x_i = \eta \cdot i$ and use Lemma \ref{lem:binary-search} (which essentially does binary search) to locate $y_i \in [0, 1]$ for which Bob values the pieces $[x_i, y_i]$ and $[0, x_i] \cup [y_i, 1]$ equally up to $\varepsilon$ error. Lemma \ref{lem:binary-search} will succeed if and only if $V_B([0, x_i]) < 1/2$, so once it fails we stop. Let $\widetilde{N}$ be the highest index that succeeded.
        \item Let $i^*$ be the index such that the cut at $x^*$ and $y^*$ contains the piece of greatest value to Alice, i.e.
        \begin{align}
            i^* = \argmax_{i \in \{0, \ldots, \widetilde{N}\}} \max \left(V_A([x_i, y_i]), 1-V_A([x_i, y_i])\right)
        \end{align}
        \item Set $\overline{x}$ and $\overline{y}$ to the values of $x_{i^*}$ and $y_{i^*}$, but shifted by a total of $\varepsilon \Delta/\delta$ in the directions that increase the size of the piece Alice values less.
        \item For all remaining rounds, cut at $\overline{x}$ and $\overline{y}$.
    \end{enumerate}
    In computing $(x_{i^*}, y_{i^*})$, Alice has two sources of error compared to the original procedure: the $\varepsilon$ error on every use of Lemma \ref{lem:binary-search} and the $\eta$ error from discretizing to the $x_i$ and $y_i$. These combined only get her $O(\varepsilon)$ away from a truly optimal cut. The shift from $(x_{i^*}, y_{i^*})$ to $(\overline{x}, \overline{y})$ adds another $O(\varepsilon)$ error, but importantly in the direction that incentivizes Bob to take the piece Alice likes less. As a result, Alice guarantees Bob leaves Alice her preferred piece in the remaining rounds.

    Alice's overall regret then comes from the $O(\log(1/\varepsilon)/\eta)$ rounds it took to find $(\overline{x}, \overline{y})$ and the $O(T\varepsilon)$ loss in the remaining rounds. Setting $\varepsilon = \sqrt{(\log T)/T}$, the loss  becomes $O(\sqrt{T \log T})$.
\end{proof}

We use the following primitive to let Alice learn Bob's valuation when he is myopic.

\begin{restatable}{lemma}{binarySearchLemma} \label{lem:binary-search}
    Suppose Bob is myopic. Let $\varepsilon > 0$ and $x_1, \ldots, x_{k-1} \in [0,1]$ such that $x_1 \leq x_2 \leq \ldots x_{k-1}$. 
    Then within $O(\log 1/\varepsilon)$ rounds, Alice can decide whether there exists $x_k^* \in (x_{k-1}, 1)$ such that Bob's value for either piece in the alternating partition induced by cuts $x_1, \ldots, x_{k-1}, x_k^*$ is exactly $1/2$. 
    Moreover, if $x_k^*$ exists then Alice can find  $\widetilde{x}_k \in (x_{k-1},1)$ such that $|\widetilde{x}_k - x_k^*| \leq \varepsilon$.
\end{restatable}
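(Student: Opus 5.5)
Fix the cuts $x_1,\dots,x_{k-1}$ and consider, for $z\in[x_{k-1},1]$, the function
\[
f(z) = V_B\bigl(Z_1(x_1,\dots,x_{k-1},z)\bigr),
\]
Bob's value for the first piece $Z_1$ of the alternating partition with last cut $z$. Moving the last cut from $z$ to $z'>z$ transfers the interval $[z,z']$ from one piece to the other. Which piece gains it is fixed by the parity of $k$: the interval $[x_{k-1},z]$ lies in the piece indexed by the parity of $k-1$. So $f$ is strictly monotone on $[x_{k-1},1]$, with $|f(z')-f(z)| = V_B([z,z'])\in[\delta|z'-z|,\Delta|z'-z|]$ by the density bounds. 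Since $f$ is continuous and strictly monotone, $x_k^*$ with $f(x_k^*)=1/2$ exists in $(x_{k-1},1)$ if and only if $1/2$ lies strictly between $f(x_{k-1})$ and $f(1)$, and it is then unique.

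Alice learns the relevant signs by querying: she plays the partition with cuts $x_1,\dots,x_{k-1},z$, and a myopic Bob picks the piece he values more. This tells her the sign of $f(z)-1/2$, except when $f(z)=1/2$ exactly, where the tie is broken in some fixed way. Her procedure is as follows.
\begin{enumerate}[1.]
\item Query $z=x_{k-1}$ (which equals $k-1$ cuts with a degenerate last cut) and $z=1$.
\item If Bob's choices at the two endpoints are the same piece, declare that $x_k^*$ does not exist. Otherwise $f-1/2$ changes sign on the interval, so $x_k^*$ exists.
\item If $x_k^*$ exists, run bisection: keep an interval $[\ell,r]$ whose endpoints receive different choices from Bob, query the midpoint, and keep the half on which the choice still differs.
\end{enumerate}
After $\lceil\log_2(1/\varepsilon)\rceil$ steps we have $r-\ell\le\varepsilon$, and by monotonicity $x_k^*\in[\ell,r]$. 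Alice outputs $\widetilde{x}_k$ as the midpoint, which lies in $(x_{k-1},1)$. The total is $O(\log 1/\varepsilon)$ rounds, and every query cuts at most $k$ times.

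The main obstacle is ties at the endpoints, where Bob's choice does not reveal a strict sign. If $f(x_{k-1})=1/2$ or $f(1)=1/2$, then by strict monotonicity no interior root exists. Alice should then report nonexistence, but a tie-broken choice could mimic a sign change. I would handle this by inverting the tie-breaking rule, which is fixed (the paper breaks ties in Alice's favor, and Alice knows her own valuation, so she can predict which piece Bob takes when indifferent). She declares existence only if the two endpoint choices differ and are not explainable by an endpoint tie. Equivalently, she queries slightly perturbed endpoints $x_{k-1}+\varepsilon'$ and $1-\varepsilon'$ and absorbs the perturbation into the $\varepsilon$ accuracy. Strict monotonicity guarantees that a single interior tie during bisection does no harm, since it occurs only at $x_k^*$ itself, which then lies in the retained interval either way.
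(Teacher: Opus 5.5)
Your main argument is exactly the paper's proof: strict monotonicity of Bob's value for the piece that grows as the last cut moves right, endpoint queries at $z=x_{k-1}$ and $z=1$ to decide existence, then bisection in $O(\log 1/\varepsilon)$ rounds. Your observation that an interior tie during bisection is harmless is also correct. The paper's proof says nothing about ties at the endpoints; the gap is in the patch you propose for them, which does not do what you claim.

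\textbf{The tie-break patch.} Knowing the tie-break rule removes the ambiguity only when a tie would be broken toward the piece that signals nonexistence. Otherwise a tie and a strict preference produce the same answer. Your rule (declare existence only if the choices are not explainable by a tie) then rejects genuine instances. Take $k=2$, $x_1=0.6$, Alice uniform, and Bob's density $1/2$ on $[0,0.6]$ and $7/4$ on $(0.6,1]$. At $z=1$ Bob strictly prefers $[0.6,1]$ (value $0.7$). Alice values that piece less, so a tie would give the same choice, and your rule reports nonexistence although $x_2^*=0.6+2/7$ exists.

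\textbf{The perturbed-endpoint patch.} It is not equivalent to the first. A sign change between $x_{k-1}+\varepsilon'$ and $1-\varepsilon'$ does certify a root. But its absence rules out roots only in $(x_{k-1}+\varepsilon',1-\varepsilon')$, and a wrong yes/no verdict cannot be absorbed into the accuracy of $\widetilde{x}_k$.

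\textbf{The boundary case cannot be settled.} Suppose the growing piece at $z=1$ is worth exactly $1/2$ to Bob while Alice strictly prefers the other piece, so the tie goes to the growing piece. Assume Bob's density lies strictly between $\delta$ and $\Delta$. For any finite transcript, a finite-dimensional Farkas argument gives an arbitrarily small perturbation of Bob's density, shifting mass into the growing piece, that reproduces every answer yet creates a root in $(x_{k-1},1)$. So no finite strategy decides this case correctly.

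\textbf{The right repair.} It is a weaker conclusion, not a cleverer test: decide existence only in $(x_{k-1}+\varepsilon,1-\varepsilon)$, as in Lemma~\ref{lem:binary-search-with-regret}. That version is all the downstream arguments need; the paper's non-myopic proofs already work with it by absorbing an extra $O(\Delta\varepsilon)$ term.
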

\begin{proof}[Proof sketch]
    The strategy is binary search. Consider the partitions generated by cut points $x_1, \ldots, x_{k-1}, z$ as $z$ moves continuously from $x_{k-1}$ to $1$. In these partitions, one piece grows while the other shrinks. Therefore, a myopic Bob will prefer one piece for small $z$ and the other for large $z$. The desired $x_k^*$ is the crossover point.
\end{proof}

\begin{restatable}[Upper bound for $k \geq 3$ cuts; Myopic Bob]{prop}{kCutMyopicUpper}
    \label{prop:k-cut-myopic-upper-bound}
    Let $k \geq 3$ and suppose Bob is myopic. Then Alice has a deterministic strategy which guarantees her Stackelberg regret is $O(\sqrt{Tk} \log (Tk))$.
\end{restatable}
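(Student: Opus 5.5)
We extend the explore-then-commit scheme of Proposition~\ref{prop:two_cuts_myopic_Bob} from a one-dimensional knife to a $(k-1)$-dimensional grid of cut positions, using Lemma~\ref{lem:binary-search} to locate the last cut. By Lemma~\ref{lem:wlog-alice-valuation-warping} we assume $v_A \equiv 1$; Bob's density then lies in a fixed interval $[\delta', \Delta']$ of constants. Alice's Stackelberg partition can be written as an alternating partition with cuts $x_1^* \le \dots \le x_k^*$ in which Bob values each piece exactly $1/2$ (ties broken for Alice). Any such partition is determined by its first $k-1$ cuts, because $x_k^*$ is then the crossover point of Lemma~\ref{lem:binary-search}.

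\textbf{Exploration.} Fix a grid spacing $\eta$. The naive grid has $(1/\eta)^{k-1}$ points, which is too many for $k \ge 3$, so we do not enumerate it. Instead we exploit that Alice's utility is additive over intervals. Alice's value for piece $Z_1$ is $\sum_{i \text{ even}} (x_{i+1}-x_i)$, and Bob's constraint is $\sum_{i \text{ even}} V_B([x_i,x_{i+1}]) = 1/2$. If Alice learns Bob's cumulative function $F_B(x) = V_B([0,x])$ to accuracy $\varepsilon$ at every grid point, she can solve the Stackelberg problem offline by dynamic programming over the grid. The resulting grid partition is within $O(k\eta + k\varepsilon)$ of optimal, since rounding each of the $k$ cuts to the grid costs $O(\eta)$ in both players' values.

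To learn $F_B(j\eta)$ with a myopic chooser, apply Lemma~\ref{lem:binary-search} with one fixed cut at $j\eta$ together with a few auxiliary cuts. For example, use cuts $j\eta$ and $z$ and binary-search the $z$ for which $V_B([j\eta,z]) = 1/2$. This yields the map $j\eta \mapsto y(j\eta)$, exactly as in the knife procedure. Chaining these half-value points (and using three cuts where needed to compare $[0,j\eta]$ against a reference interval) recovers $F_B$ on the grid up to additive error $O(\varepsilon\Delta)$ per query. The cost is $O(\log(1/\varepsilon))$ rounds per grid point, so $O(\eta^{-1}\log(1/\varepsilon))$ rounds in total. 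Errors accumulate along the chain, so we request accuracy $\varepsilon/k$, or use direct comparisons to avoid chaining; this affects only the logarithmic factor.

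\textbf{Commit.} Take the DP-optimal grid partition and shift its cuts by a total of $O(k\varepsilon\Delta/\delta)$ so that Alice's less-preferred piece strictly grows in Bob's value. A myopic Bob then takes that piece in every remaining round, so each round costs Alice $O(k\eta + k\varepsilon)$.

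\textbf{Balancing.} The total regret is
\[
O\bigl(\eta^{-1}\log(k/\varepsilon) + T k(\eta+\varepsilon)\bigr).
\]
Setting $\varepsilon = \eta = 1/\sqrt{Tk}$ gives $O(\sqrt{Tk}\,\log(Tk))$, which matches the claimed bound.

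The main obstacle is making the learning of $F_B$ at all grid points cost only $O(\eta^{-1}\log)$ rounds with errors that do not blow up. Each binary search only reveals where Bob is indifferent, so a naive chaining of half-value points could compound error. We would first try the direct approach: for each grid point $x$, binary-search the cut $z$ with $V_B([x,z]) = 1/2$. This gives $F_B(z) = F_B(x) + 1/2$, and together with $m_B$ it determines $F_B$ at the points $x$ and $z$ relative to each other. If that is insufficient, we would run a second sweep using two fixed cuts to pin down absolute values. Error control then reduces to the Lipschitz bounds $\delta' \le F_B' \le \Delta'$.
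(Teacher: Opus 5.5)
\textbf{There is a genuine gap: the learning step you flag as the main obstacle is the entire difficulty, and your proposal does not establish it.} Lemma~\ref{lem:binary-search} returns \emph{locations} of indifference points, not numerical values of $F_B$. To turn locations into values of $F_B$ at prescribed grid points, you need some source of known Bob-values, and your concrete suggestion does not supply one.

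The half-value map $x \mapsto y(x)$, with $F_B(y(x)) = F_B(x) + 1/2$, provably cannot do it. Take any increasing bijection $h$ of $[0,1/2]$ with derivative bounded above and below. Set $G = h \circ F_B$ on $[0,m_B]$ and $G(y(x)) = h(F_B(x)) + 1/2$ on $[m_B,1]$. Then $G$ is a valid CDF with bounded density, with the same midpoint and the same half-value map. Every one- or two-cut query compares $F_B(a)$ or $F_B(b)-F_B(a)$ with $1/2$, so it is answered identically under $G$. Hence this data leaves the shape of $F_B$ on $[0,m_B]$ completely undetermined.

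Your three-cut variant, comparing $[0,j\eta]$ against $[b,c]$, only yields $F_B(c)-F_B(b) = 1/2 - F_B(j\eta)$. That is circular unless the value of $[b,c]$ is already known. The ``second sweep with two fixed cuts'' is never specified. So the claim of $O(\log(1/\varepsilon))$ rounds per grid point with $O(\varepsilon\Delta)$ error is unsupported, and the DP/commit/balancing steps rest on it.

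\textbf{The missing idea is the paper's ruler of equal-Bob-value intervals.} Approximate $m_B$ and set $x_0 = \widetilde m_B - \eta$, $x_1 = \widetilde m_B$. Then binary-search the third cut $z$ of the piece $[0,x_0]\cup[x_i,z]$. Since $V_B([0,x_1]) \approx 1/2$, Bob is indifferent exactly when $V_B([x_i,z]) = V_B([x_0,x_1])$. This converts an indifference location into a numerical statement, and it is precisely where $k\ge 3$ is used.

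Repeating this, with the left side mirrored against $[x_1,x_2]$, gives $O(1/\eta)$ consecutive intervals whose Bob-values all equal the reference value up to $O(\Delta\varepsilon)$. Each interval is compared to the same reference, so per-interval errors do not grow along a chain. With this ruler you do not need $F_B$ at Alice's grid points at all. Lemma~\ref{lem:known-bob-intervals-gives-stackelberg-cut} shows that restricting cuts to ruler points and counting intervals is already $O(k\eta + \varepsilon/\eta)$-Stackelberg. (You could also read off $F_B(j\eta)$ to within $O(\eta)$ from the ruler and then run your DP, but the ruler is the substance.)

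\textbf{Your error scaling is also off.} Errors accumulate over $\Theta(1/\eta)$ ruler intervals, not over $k$ cuts. The per-search accuracy must therefore be $\varepsilon = \Theta(\eta^2)$; the paper takes $\varepsilon = \delta^2\eta^2/2$, which the counting argument in Lemma~\ref{lem:known-bob-intervals-gives-stackelberg-cut} also needs. Your choice $\varepsilon = \eta = 1/\sqrt{Tk}$ with accuracy $\varepsilon/k$ would leave a cumulative drift of order $1/k$. The fix costs only a logarithmic factor, so it does not change the final bound.
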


\begin{proof}[Proof sketch]
    We will use the general framework of Proposition \ref{prop:two_cuts_myopic_Bob}: first learn approximate information about Bob's value density $v_B$, then use that approximation to compute a near-Stackelberg partition, then adjust the cut points of that partition to ensure Bob prefers the piece Alice likes less. However, the approximate information is much different here. Alice will find a sequence of points $x_{-\ell}, x_{-\ell+1}, \ldots, x_0, x_1, \ldots, x_r$ such that Bob values each interval $[x_i, x_{i+1}]$ similarly, effectively approximating his entire value density function.

    We set parameters $\eta = (Tk)^{-1/2}$ and $\varepsilon = \delta^2 \eta^2/2$. Alice starts by using Lemma \ref{lem:binary-search} to approximate $m_B$ to within $\varepsilon$, then sets $x_0 = m_B-\eta$ and $x_1 = m_B$. She then maps out the cake to the right of $x_1$ by comparing the value of intervals to $[x_0, x_1]$. She can do so by making pieces of the form $[0, x_0] \cup [x_i, z]$, since if we assume $V_B([0, x_1])=V_B([x_1, 1])=1/2$ we get:
    \begin{align}
        & V_B([0, x_0] \cup [x_i, z]) - V_B([x_0, x_i] \cup [z, 1]) \\ =& \bigl(V_B([0, x_1]) - V_B([x_0, x_1]) + V_B([x_i, z])\bigr) - \bigl(V_B([x_1, 1]) - V_B([x_i, z]) + V_B([x_0, x_1])\bigr) \\
        =& 2\bigl(V_B([x_i, z]) - V_B([x_0, x_1])\bigr) \,.
    \end{align}
    So Bob prefers $[0, x_0] \cup [x_i, z]$ over its complement if and only if he prefers $[x_i, z]$ to $[x_0, x_1]$. Alice can construct $x_2, x_3, \ldots$ in sequence, stopping at $x_r$ once the next point would go past $1$. She then uses the same strategy but mirrored left-right, constructing $x_{-1}, x_{-2}, \ldots, x_{-\ell}$ by comparing their intervals to $[x_1, x_2]$. This method ensures each interval is either directly compared to $[x_0, x_1]$ or indirectly compared to it through $[x_1, x_2]$, so the $\varepsilon$ error on each use of Lemma \ref{lem:binary-search} does not compound.

    Alice then computes an optimal partition assuming Bob's value for each interval $[x_i, x_{i+1}]$ is exactly equal and their valuations sum to $1$. The resulting approximation is within $O(k\eta + \varepsilon/\eta) = O(\sqrt{k/T})$ of truly optimal. Shifting the computed cut points by another $O(\sqrt{k/T})$ in Bob's favor ensures he will take the piece Alice likes less.

    Alice's regret is then $O(\log(1/\varepsilon)/\eta)=O(\sqrt{Tk} \log(Tk))$ from the rounds it took to find her cut points plus $O(\sqrt{Tk})$ from the remaining rounds, for a total of $O(\sqrt{Tk} \log(Tk))$.
\end{proof}

\begin{restatable}[Lower bound for $k \geq 2$ cuts; Myopic Bob]{prop}{kMyopicLowerBound}
    \label{prop:k-myopic-lower-bound}
 Let $k \geq 2$. Suppose Bob is myopic. For every Alice valuation and deterministic Alice strategy, there exists a Bob valuation against which her regret is $\Omega\bigl(\frac{\sqrt{T}}{k^{3/2}}\bigr)$.
\end{restatable}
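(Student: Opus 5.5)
The plan is an adversary (``needle in a haystack'') argument against a deterministic Alice. By Lemma~\ref{lem:wlog-alice-valuation-warping} I will assume Alice's density is uniform, $v_A \equiv 1$, and lower-bound her regret against a family of Bob valuations. The family consists of a base density $v_B^0$ together with perturbations $v_B^1, \ldots, v_B^N$ at a scale $\eps$ to be chosen later.

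\textbf{Construction.} Take a base density $v_B^0$ (bounded in $[\delta,\Delta]$) whose $k$-cut Stackelberg value is attained by a specific, rigid partition. Split a region of the cake into $N \approx c/(k\eps)$ disjoint slots of width $\Theta(\eps)$. For each $j$, define $v_B^j$ by moving $\Theta(\eps)$ of Bob's mass inside slot $j$: lower his density on one half of the slot, raise it on the other, and keep his total value for every slot unchanged. The design requirement is that, for every $j$, Alice has a $k$-cut partition whose value to her exceeds $u_A^*(k)$ under $v_B^0$ by $\Omega(\eps)$. Equivalently, $u_A^*(k)$ under $v_B^j$ is larger by $\Omega(\eps)$, and exploiting this gain requires placing a cut inside slot $j$ at a precise location. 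Partitions that do not ``know'' $j$ should then earn at most the base value plus $o(\eps)$.

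\textbf{Adversary.} The adversary answers every round as a myopic Bob with valuation $v_B^0$ would. Consider a round in which Alice's partition $P_t$ is answered differently under $v_B^j$ and under $v_B^0$. Such a round must satisfy two conditions. First, some cut of $P_t$ lies inside slot $j$, because perturbations that are balanced within a slot leave unchanged the value of any piece containing the whole slot or none of it. Second, Bob's value for the pieces of $P_t$ lies within $O(\eps)$ of $1/2$. Since a $k$-cut touches at most $k$ slots, each round rules out at most $k$ candidates $j$. The structural lemma I need has two parts:
\begin{itemize}
\item[(a)] Any partition that is near-balanced for Bob and places a cut inside some slot, but does not already achieve the $v_B^j$-optimal gain, costs Alice regret $\Omega(\eps)$ against $v_B^j$ for the true $j$ (or even $\Omega(1)$ if it is far from the base Stackelberg structure).
\item[(b)] Any round that fails to exploit the true slot costs Alice $\Omega(\eps)$ relative to $u_A^*(k)$ under $v_B^j$.
\end{itemize}

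\textbf{Counting.} Given the lemma, the count goes as follows. After $t$ rounds at least $N - kt$ candidates remain consistent with everything Alice has seen. As long as $kt < N$, pick $j^*$ among the surviving candidates. Every round so far has cost Alice $\Omega(\eps/k^{1/2})$ or so against $v_B^{j^*}$; the $k$-dependence enters through how much of the $\Theta(\eps)$ gain a partition with $k$ cuts can partially capture. This yields regret roughly $\min\{T, N/k\}\cdot \eps$ up to $k$ factors. With $N \sim 1/(k\eps)$ the two terms balance at $\eps \sim 1/(k^{1/2}\sqrt{T})$. The exact powers of $k$ need to be tracked to land on $\sqrt{T}/k^{3/2}$: one factor $k$ comes from the $k$ slots eliminated per round, and the remaining $k^{1/2}$ from the slot width needed for $k$ cuts to fit within bounded densities.

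\textbf{Main obstacle.} The hard step is the structural lemma, in particular part (a): the perturbations must be designed so that a query can be informative about slot $j$ only if it is itself costly for Alice. Otherwise Alice could do a free binary search and locate $j$ in $O(\log N)$ rounds. Two features must rule this out. Balancing the perturbation inside each slot ensures that only cuts landing inside a slot are informative. Choosing $v_B^0$ so that being near-balanced for Bob forces Alice's partition near the base Stackelberg configuration pins down where informative cuts can be placed. I would first try a $v_B^0$ that is piecewise constant with two levels, and make each slot the exact location of one Stackelberg cut endpoint under $v_B^j$.
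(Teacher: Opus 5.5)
\textbf{There is a genuine gap.} Your framework matches the paper's: uniform Alice, mass-balanced perturbations in disjoint slots, an adversary answering as the base Bob, and at most $k$ candidates eliminated per round. But your lemma (a) charges an informative, near-balanced partition only $\Omega(\eps)$, and with that the counting cannot give a bound that grows with $T$.

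With $N\sim 1/(k\eps)$, your own estimate is $\min\{T,N/k\}\cdot\eps=\min\{T\eps,\,1/k^2\}$. This is $O(1)$ for every choice of $\eps$. The claimed balance at $\eps\sim 1/(k^{1/2}\sqrt T)$ does not exist: setting $T\eps=N\eps/k$ forces $\eps\sim 1/(k^2T)$, and the resulting value is the constant $1/k^2$.

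This is not something that tracking powers of $k$ can repair. If probing a slot costs Alice only $O(\eps)$, she can linearly scan all $N/k$ groups of slots for total cost $O(1/k^2)$ and then exploit. Any construction with that property therefore yields only a constant lower bound. Balancing the perturbation inside each slot already rules out binary search, so that is not the real obstacle. The missing requirement is that every informative round must cost Alice a constant, here $\Omega(1/k)$, independent of the slot width.

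The paper gets this constant cost from two design choices your construction lacks.
\begin{enumerate}
\item $\sigma_0^k$ alternates $k+1$ high- and low-value intervals, with the high part worth just over $1/2$. Consider a distinguishing partition that puts a second cut into the first high-value interval, leaving at most $k-2$ cuts for the other $k$ intervals. By Lemmas~\ref{lem:alternating-region-too-few-cuts} and~\ref{lem:generalized-spiked-bad-cut-suboptimal}, it already leaves Alice at most $2/3-1/(6k)$. So a cheap distinguishing partition can have only its spike cut in the first interval.
\item The spike is asymmetric: density $2$ then $1$, average $3/2$, starting beyond $19/24$ of the first high-value interval. Hence the piece containing $0$ is always worth more under $\sigma_{w;z}^k$ than under $\sigma_0^k$. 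On any distinguishing partition, the base Bob therefore takes the other piece, which is the large piece Alice wanted. She is left with a piece containing at least $19/24$ of a high-value interval, worth at most $2/3-1/(6k)$ (Lemma~\ref{lem:generalized-spiked-distinguishers}).
\end{enumerate}

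So the cost of a probe comes from Bob taking the wrong piece, not from the partition being $\eps$-far from optimal. Your framing of regret as geometric distance from the optimal partition misses exactly this.

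With a probe cost of $1/(6k)$, $k$ candidates eliminated per round, and a per-round loss of $\frac{4}{27k}w$ when the spike is never hit (Lemma~\ref{lem:generalized-spiked-stackelberg}), the two cases give
\[
\min\Bigl\{\tfrac{1}{2wk}\cdot\tfrac{1}{6k},\ \tfrac{4wT}{27k}\Bigr\}.
\]
At $w=\frac{1}{4\sqrt{Tk}}$ this is $\Omega(\sqrt T/k^{3/2})$.
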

\begin{proof}[Proof sketch]
    Our construction for Bob's value density function starts from an \emph{unspiked} function $\sigma_0^k$. This function is piecewise constant and takes on two different values: $3/2$ in so-called \emph{high-value intervals} and some constant $c<1$ in \emph{low-value intervals}. The high- and low-value intervals alternate from left to right, starting from a high-value interval. There are $k+1$ total intervals. Each type of interval shares a common length and these lengths are chosen so that it takes nearly all the high-value intervals together to be worth $1/2$ to Bob. Against such a Bob, the only way for Alice to get close to her Stackelberg value is to put her $k$ cuts approximately on the $k$ interval boundaries. An example of such a function is plotted in red and purple in Figure \ref{fig:spiked-valuation-plot}.

\begin{figure}
    \centering
    \includegraphics[width=0.7\textwidth]{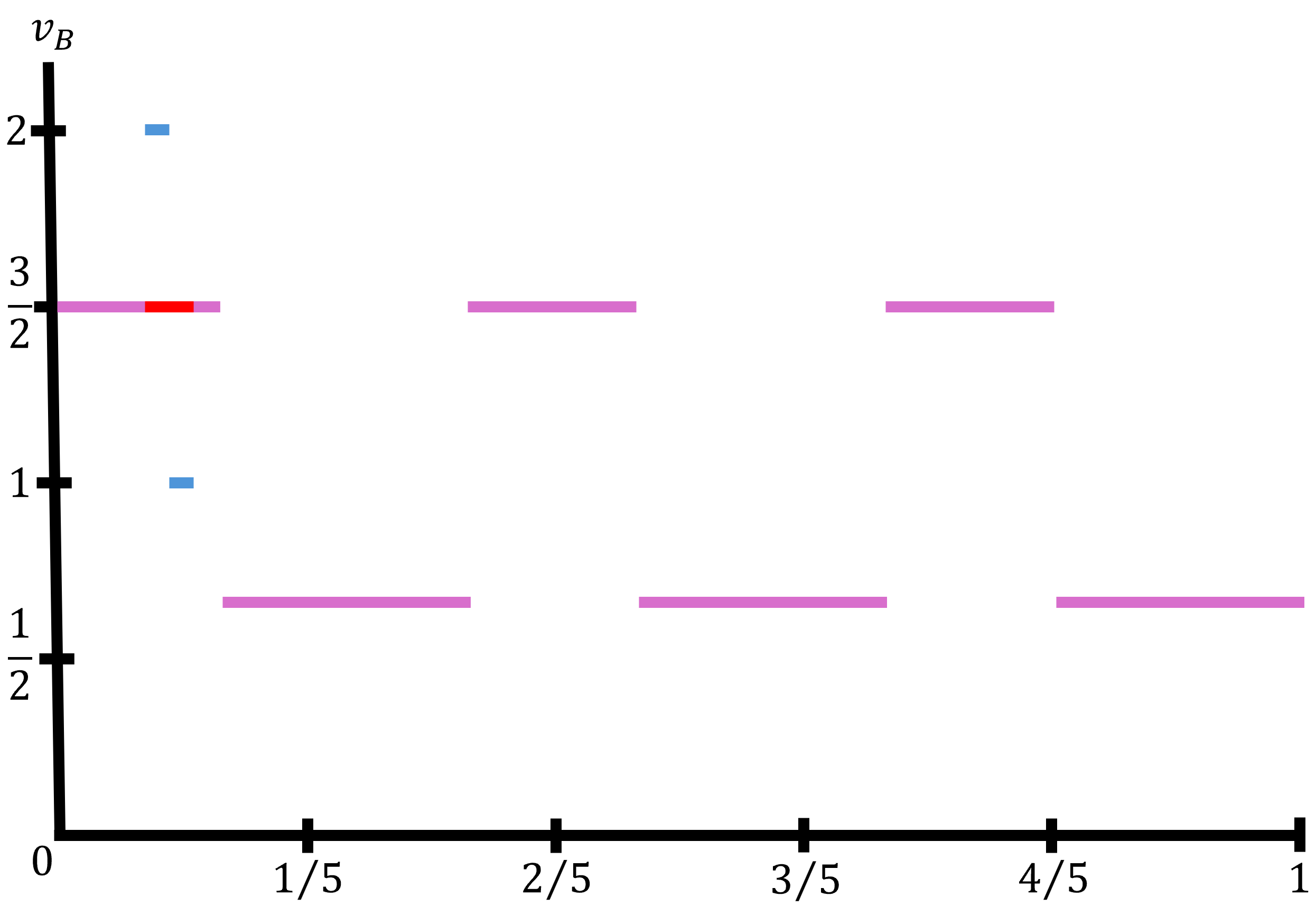}
    \caption{Superimposed plots of the unspiked value density function $\sigma_0^k$ and an example spiked value density $\sigma_{w;z}^k$ for $k=5$. The segments in purple are common to $\sigma_0^k$ and $\sigma_{w;z}^k$. The functions differ only in a small interval centered at $z$ called the spike, where $\sigma_0^k$ maintains its value in red and $\sigma_{w;z}^k$ diverges in blue.}\label{fig:spiked-valuation-plot}
\end{figure}

    We then modify $\sigma_0^k$ by including a ``spike" of nonuniform density in the first high-value interval. Let $\ell^k$ be the length of a high-value interval. The \emph{spiked} value density function $\sigma_{w;z}^k$ is  defined as:
    \begin{align}
        \sigma_{w;z}^k(x) &= \begin{cases}
            2 & \text{if $x/\ell^k \in (z-w, z]$} \\
            1 & \text{if $x/\ell^k \in (z, z+w)$} \\
            \sigma_0^k(x) & \text{otherwise}
        \end{cases}
    \end{align}
    An example is plotted in blue and purple over $\sigma_0^k$ in Figure \ref{fig:spiked-valuation-plot}. Alice's Stackelberg partitions are more restricted against such a Bob: she must put the first cut point at $z\ell^k$ and the other $k-1$ cut points approximately at the last $k-1$ interval boundaries. Her Stackelberg value is, however, greater than that against $\sigma_0^k$ by $\Theta(w/k)$. Therefore, if she wants to get better than $\Theta(Tw/k)$ regret, she must find the spike.

    Finding the spike is difficult mainly because the total value of $((z-w)\ell^k, (z+w)\ell^k)$ is the same between $\sigma_{w;z}^k$ and $\sigma_0^k$, so Alice must place a cut point in the spike in order to detect it. This would still be fine if Alice gets only a little regret with each incorrect guess, but she cannot because of two aspects of the construction:
    \begin{itemize}
        \item The base unspiked $\sigma_0^k$ requires Alice to put her $k$ cut points approximately at the $k$ interval boundaries to get near her Stackelberg value.
        \item The higher-value portion of the spike is on the left side, so if Bobs with value density $\sigma_0^k$ and $\sigma_{w;z}^k$ prefer different pieces of a partition it is because the one with value density $\sigma_{w;z}^k$ prefers the piece containing the left side of the spike.
    \end{itemize}
    Together these conditions show that under any near-optimal partition that would give information about $z$, a Bob with value density $\sigma_0^k$ prefers the piece containing mostly low-value intervals. Alice therefore incurs $\Omega(1/k)$ regret with each such incorrect guess.

    We now set $w=\frac{1}{4\sqrt{Tk}}$ and give Bob a value density function $\sigma_{w;z}^k$ that Alice's strategy never distinguishes from $\sigma_0^k$, or one that she takes the longest to distinguish if she eventually covers all $z$. In each case, Alice's regret is $\Omega(\sqrt{T}/k^{3/2})$:
    \begin{itemize}
        \item If Alice never identifies $z$, she gets $\Omega(w/k)$ regret in all $T$ rounds for a total of $\Omega(\sqrt{T}/k^{3/2})$.
        \item If Alice eventually identifies $z$, she had to cut in every other possible spike first. There are $\Theta(1/w)$ such spikes and she can only cut in $k$ of them with each partition. These $\Theta(1/(wk))$ incorrect guesses each gave her $\Omega(1/k)$ regret for a total of $\Omega(\sqrt{T}/k^{3/2})$.
    \end{itemize}
    This completes the proof sketch.
\end{proof}

\begin{proof}[Proof of  Theorem \ref{thm:myopic_Bob}] The  upper bound for $k=2$ cuts is shown in Proposition \ref{prop:two_cuts_myopic_Bob}. The upper bound for $k\geq 3$ cuts is shown in Proposition \ref{prop:k-cut-myopic-upper-bound}. The  lower bound for $k \geq 2$ cuts is  in Proposition \ref{prop:k-myopic-lower-bound}.
\end{proof}

\subsection{Non-Myopic Bob with Public Learning Rate}
Having investigated the game when Bob is myopic, we now build on it to analyze the setting where Bob uses an algorithm with sub-linear regret.  We start with the scenario where  Bob's regret budget is known.

\paragraph{Two Cuts.} 
For $k=2$ cuts, we completely characterize Alice's optimal regret as $\widetilde{O}(T^{\frac{2+\alpha}{3}})$. To obtain the characterization (Theorem~\ref{thm:2-cut-non-myopic}), we prove separately the upper bound and the lower bounds.

\medskip 

\begin{restatable}[Upper bound for 2 cuts; Non-Myopic Bob with Public Learning Rate]{prop}{twoCutKnownAlphaUpper}
    \label{prop:2-cut-non-myopic-upper-bound-known-alpha}
     Let $k=2$. Suppose Bob's strategy guarantees his regret over time horizon $T$ is at most $c  T^{\alpha}$, for $\alpha \in (-1/2, 1)$ and  $c > 0$. There exists a deterministic strategy for Alice, parameterized by $\alpha$ but independent of $c$, that guarantees her regret is $O(T^{\frac{2+\alpha}{3}} \log^{2/3} T)$.
\end{restatable}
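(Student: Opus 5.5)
We follow the explore-then-commit template of Proposition~\ref{prop:two_cuts_myopic_Bob}, making two changes so that Bob's regret budget cannot mislead Alice. First, every comparison query is \emph{repeated} enough times that Bob cannot lie often. Second, the final committed partition carries a larger \emph{safety margin} in Bob's favor. By Lemma~\ref{lem:wlog-alice-valuation-warping} we take $v_A$ uniform.

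The key observation is the cost of a lie. If Alice queries a partition in which Bob's two pieces differ in his value by at least $\varepsilon$, then answering against his preference costs Bob at least $\varepsilon$ regret. Since his total regret is at most $cT^\alpha$, he can lie on at most $cT^\alpha/\varepsilon$ such rounds.

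\textbf{Exploration.} Alice runs a robust version of the binary search in Lemma~\ref{lem:binary-search}. She works on the discretized knife positions $x_i=\eta i$ and searches for $y_i$ to precision $\varepsilon$. At each binary-search step she asks the same query $m$ times and takes the majority answer. If the true gap at that query is at least $\Omega(\varepsilon)$, a wrong majority costs Bob $\Omega(m\varepsilon)$ regret. If the gap is smaller, either answer keeps the search within $O(\varepsilon)$ of the crossover point. So with $m \approx cT^\alpha/\varepsilon$ (up to constants) every majority answer is informative. Alice cannot use $c$, however. Instead she takes $m = T^{\alpha}/\varepsilon$ times a slowly growing factor (e.g.\ $\log T$), so that for large $T$ the bound on Bob's lies is at most $m/2$ per query. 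A cleaner alternative is a global budget argument: the total number of corrupted majorities is at most $cT^\alpha/(m\varepsilon)$, which is $o(1)$. The exploration then takes $O(m\log(1/\varepsilon)/\eta)$ rounds, each with $O(1)$ regret.

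\textbf{Commitment.} Alice computes $(x_{i^*},y_{i^*})$ as before, so she is within $O(\varepsilon+\eta)$ of $u_A^*(2)$. She then shifts the cuts by $\gamma=\Theta(\varepsilon)+\varepsilon'$ toward Bob. Now Bob strictly prefers the intended piece by at least $\varepsilon'$ in every remaining round, so he can deviate in at most $cT^\alpha/\varepsilon'$ of them. Each deviation costs Alice $O(1)$.

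\textbf{Total regret.} Summing the pieces gives
\[
\widetilde O\!\left(\frac{T^\alpha}{\varepsilon\,\eta}\right)+O\bigl(T(\varepsilon+\eta+\varepsilon')\bigr)+O\!\left(\frac{T^\alpha}{\varepsilon'}\right).
\]
We set $\varepsilon=\eta$, which gives $T^\alpha/\varepsilon^2 \approx T\varepsilon$, so $\varepsilon\approx T^{(\alpha-1)/3}$ and the total is $T^{(2+\alpha)/3}$. The $\log(1/\varepsilon)$ factor from binary search, balanced the same way, yields the $\log^{2/3}T$ factor. The last term, $T^\alpha/\varepsilon'$ with $\varepsilon'=\varepsilon$, equals $T^{(2\alpha+1)/3}\le T^{(2+\alpha)/3}$. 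The restriction $\alpha>-1/2$ keeps $m\ge1$, i.e.\ $T^{\alpha}/\varepsilon\gtrsim 1$; for smaller $\alpha$ repetition is unnecessary.

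\textbf{Main obstacle.} The delicate step is making the repeated binary search robust without knowing $c$ while charging its errors to Bob's regret. The difficulty is that queries near the crossover have tiny gaps, so lies there are nearly free for Bob. The plan is to argue that such queries only move the search within an $O(\varepsilon)$ window, which the final margin $\gamma$ absorbs. It must also be shown that the termination test ``$V_B([0,x_i])<1/2$'' is decided correctly up to $O(\eta)$, so that no Stackelberg cut is missed.
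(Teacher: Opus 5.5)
Your proposal is correct and is essentially the paper's proof: via Lemma~\ref{lem:2-cut-non-myopic-general-upper-bound} with $f(T)=T^\alpha$, the paper repeats each binary-search query $\lceil 4T^\alpha\log T/(\delta\varepsilon)\rceil$ times with a majority vote (Lemma~\ref{lem:binary-search-with-regret}), shifts the committed cuts by $2\eta$ toward Bob, and takes $\eta=\Theta(\varepsilon)$ with $\varepsilon=T^{(\alpha-1)/3}\log^{2/3}T$. The $\log^{2/3}T$ comes from the product of the $\log T$ repetition factor and the $\log(1/\varepsilon)$ search depth, not from the $\log(1/\varepsilon)$ factor alone. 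The two points you flag as remaining work are exactly what the paper handles: the invariant $x^*\in[\ell_i-\varepsilon/4,\,r_i+\varepsilon/4]$ absorbs cheap lies near the crossover, and an intermediate-value argument shows the last successful knife position satisfies $z_N\ge m_B-2\eta$.
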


\begin{proof}[Proof sketch]
    Alice's strategy will be nearly identical to the myopic case in Proposition \ref{prop:two_cuts_myopic_Bob}. The main change is that she can no longer use Lemma \ref{lem:binary-search} because Bob is not myopic. She can still use the binary search idea, though: she just has to repeat each partition enough times that Bob cannot mislead her while staying within his regret bound. Specifically, she repeats each partition $4T^{\alpha} \log T / (\delta \varepsilon)$ times and takes a majority vote. As long as $T^{\alpha} \log T$ exceeds Bob's regret bound and the partition is at least $\varepsilon/2$ away from being even to Bob, this process finds Bob's true preference. The first condition holds for sufficiently large $T$ and the second just means Alice cannot get results more accurate than $\varepsilon/2$ from her binary searches.

    There is also a new source of regret for Alice: Bob can use his regret budget even after Alice has chosen her cut points to lower her utility. Alice can mitigate this regret by shifting her cut points by an additional $\Theta(\varepsilon)$ to ensure Bob not only prefers the piece Alice likes less, but he prefers it by $\Omega(\varepsilon)$. He can therefore only lower Alice's utility by $O(T^{\alpha}/\varepsilon)$ in this way.

    Setting $\varepsilon=T^{\frac{\alpha-1}{3}} \log^{2/3} T$ and $\eta = \varepsilon \Delta/\delta$ equalizes Alice's main regret terms for a bound of $O(T^{\frac{2+\alpha}{3}} \log^{2/3} T)$. This completes the proof sketch.
\end{proof}

We show a lower bound for general $k \geq 2$, which implies the required $\Omega\left(T^{\frac{2+\alpha}{3}}\right)$ when $k=2$.

\begin{restatable}[Lower bound for $k\geq 2$ cuts; Non-Myopic Bob with Public Learning Rate]{prop}{knownAlphaLowerBound}
    \label{prop:k-non-myopic-lower-bound}
    Let $k \geq 2$. Suppose Bob's strategy guarantees his regret over time horizon $T$ is at most $c  T^{\alpha}$, for $\alpha \in (-1/2, 1)$ and $c > 0$. 
    For every Alice valuation and strategy, there exists a Bob valuation and an $O(T^{\alpha})$-regret strategy for Bob so that Alice incurs  $\Omega\left(\frac{T^{\frac{2+\alpha}{3}}}{k}\right)$ regret.
\end{restatable}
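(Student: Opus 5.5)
We build on the spiked construction of Proposition~\ref{prop:k-myopic-lower-bound}, with the spike width $w$ tuned to Bob's regret budget. By Lemma~\ref{lem:wlog-alice-valuation-warping} we take $v_A \equiv 1$. Bob's valuation will be $\sigma^k_{w;z}$ for a location $z$ chosen adversarially at the end. We set
\[
w = \Theta\bigl(T^{\frac{\alpha-1}{3}}\bigr).
\]

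\textbf{Bob's strategy.} Bob pretends to be the unspiked Bob $\sigma_0^k$: in every round he picks the piece a myopic $\sigma_0^k$-Bob would pick. This changes his behaviour only on partitions where $\sigma_0^k$ and $\sigma_{w;z}^k$ disagree. Each such disagreement costs him at most his preference gap. That gap is $O(w)$, since the two densities differ by $O(1)$ on a set of length $O(w\ell^k)$.

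Bob deceives like this only while his accumulated regret stays below $c T^{\alpha}$. Once the budget would be exceeded, he reverts to myopic play for $\sigma_{w;z}^k$. So Bob can "hide" the spike through about $T^{\alpha}/w$ informative rounds.

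We also need Alice to gain nothing from the cheaper probes. A partition whose gap is only $\epsilon \ll w$ makes Bob pay only $\epsilon$ to lie. As in the myopic proof, however, any partition that distinguishes the two Bobs must place a cut inside the spike. If that partition is near-optimal for $\sigma_0^k$, the $\sigma_0^k$-Bob takes the low-value piece, costing Alice $\Omega(1/k)$ in that round. Otherwise the partition is far from Stackelberg for $\sigma_0^k$, and Alice also pays $\Omega(1/k)$ there, since she observes only $\sigma_0^k$ behaviour.

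\textbf{Accounting.} There are $\Theta(1/w)$ disjoint candidate spike locations, and each round places cuts in at most $k$ of them. Pick $z$ to be the location whose spike is probed last, or never probed within Bob's budget. Three regimes follow.

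\begin{itemize}
\item Alice never learns $z$. She then loses the Stackelberg gain $\Theta(w/k)$ in every round, for a total of $\Omega(Tw/k)$.
\item Alice learns $z$ by exhaustive probing. Every probe of a wrong location is a round with $\Omega(1/k)$ regret, giving $\Omega(1/(wk)\cdot 1/k)$. Moreover, the true location must be probed $\Omega(T^{\alpha}/w)$ times before Bob's budget runs out. Each such probe also costs Alice $\Omega(1/k)$, since Bob answers as $\sigma_0^k$; this gives $\Omega(T^{\alpha}/(wk))$.
\item Alice mixes the two. Her regret is then at least the minimum of the two bounds above.
\end{itemize}

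\textbf{Balancing.} Balancing $Tw/k$ against $T^{\alpha}/(wk)$ gives $w = T^{\frac{\alpha-1}{2}}$, which is too optimistic. The real obstacle is that Alice need not probe at full cost. She can make partitions that are only $\epsilon$-unbalanced for $\sigma_0^k$, each costing her about $\epsilon$ rather than $1/k$, and Bob then pays only about $\epsilon$ per lie. Let $n$ be the number of probes at the true spike. Alice pays $n\epsilon$, and Bob's budget forces $n \gtrsim T^{\alpha}/\epsilon$. This is where the cube root comes from.

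The key step, which I expect to be the hardest, is to redesign the spike so that any probe whose answer differs between the two Bobs costs Alice at least on the order of the information it reveals. The target inequality is that Alice's cost per probe is at least Bob's lying cost. With $N$ informative rounds, Bob can hide while $\sum_{\text{probes}} \mathrm{gap} \le T^{\alpha}$. Alice's search over $\Theta(1/w)$ locations with resolution $\epsilon \approx w$ then costs about $w \cdot (1/w) \cdot (T^{\alpha}/w^2)$, which plays the role of the distinguishing cost.

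Equating this with the loss $Tw/k$ from ignoring the spike gives $w^3 \approx T^{\alpha-1}$. That is, $w = T^{\frac{\alpha-1}{3}}$ and total regret $\Omega\bigl(T^{\frac{2+\alpha}{3}}/k\bigr)$. Concretely, I would scale the spike height difference with $w$ and verify through the myopic proof's case analysis that informative partitions incur regret at least proportional to Bob's gap. The restriction $\alpha > -1/2$ ensures $w$ stays below a constant and $T^{\alpha}$ is large enough for the budget argument.
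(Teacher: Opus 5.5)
\textbf{There is a genuine gap.} Your explicit accounting charges Alice for one probe at each wrong location plus $\Theta(T^{\alpha}/w)$ probes at the true one. After balancing, that only gives $\Omega(T^{\frac{1+\alpha}{2}}/k)$, and the repair you propose does not close the difference.

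The missing idea is to exploit the symmetry among spike locations. Alice's strategy is deterministic, and every center looks identical while Bob mimics $\sigma_0^k$. So you can simulate Alice against a myopic $\sigma_0^k$-Bob and set $z$ to be the center that is \emph{last to receive its $n$-th distinguishing partition}, with $n=\lfloor kT^{\frac{2\alpha+1}{3}}\rfloor = \Theta(kT^{\alpha}/w)$. Bob with density $\sigma_{w;z}^k$ mimics $\sigma_0^k$ through those $n$ rounds. By Lemma~\ref{lem:generalized-spiked-similarity}, each lie costs him at most $\frac{4}{3k}w$, not merely $O(w)$ (which would spoil the $k$-dependence), so his regret is $O(T^{\alpha})$. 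By then every one of the $\lfloor 1/(2w)\rfloor$ centers has been distinguished $n$ times, and a single partition distinguishes at most $k$ of them. Hence Alice has made at least $n\lfloor 1/(2w)\rfloor/k$ distinguishing partitions, each costing her $\frac{1}{6k}$ by Lemma~\ref{lem:generalized-spiked-distinguishers}, for a total of $\Omega(T^{\alpha}/(w^2k))$. If instead some center is distinguished fewer than $n$ times, Bob hides that spike for the whole game at cost $O(T^{\alpha})$, and Alice loses $\frac{4}{27k}w$ every round. Balancing $T^{\alpha}/(w^2k)$ against $Tw/k$ is exactly what yields $w=T^{\frac{\alpha-1}{3}}$ and the $\Omega(T^{\frac{2+\alpha}{3}}/k)$ bound. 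It is the wrong locations, each needing $n$ probes, that produce the cube root, not the true one.

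Your diagnosis of the shortfall is also incorrect, and it contradicts your own earlier paragraph. In this construction there are no cheap $\epsilon$-unbalanced probes. A partition can distinguish $z$ only if its pieces are within $\frac{2}{3k}w$ of $1/2$ under $\sigma_0^k$ (by Lemma~\ref{lem:generalized-spiked-similarity}). Lemma~\ref{lem:generalized-spiked-distinguishers} then shows that every such partition leaves Alice at most $\frac{2}{3}-\frac{1}{6k}$ when Bob takes the $\sigma_0^k$-preferred piece. So Alice's cost per probe is always $\Omega(1/k)$, while Bob's cost per lie is always $O(w/k)$, and no redesign of the spike is needed. The redesign you sketch (scaling the spike height with $w$) is in any case left unspecified. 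Likewise, the estimate $w\cdot(1/w)\cdot(T^{\alpha}/w^2)$ is chosen to hit the target exponent rather than derived from a counting argument.
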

\begin{proof}[Proof sketch]
    We use the spiked value density functions of Proposition \ref{prop:k-myopic-lower-bound} as in the myopic case, but we now also have the ability to set Bob's strategy. Let $w = T^{\frac{\alpha-1}{3}}$ and let $n = kT^{\frac{2\alpha+1}{3}}$. We say a partition \emph{distinguishes $z$} if Bobs with value density $\sigma_0^k$ and $\sigma_{w;z}^k$ prefer different pieces of it. We choose $z$ and Bob's strategy in two cases, depending on what Alice does against a myopic Bob with value density $\sigma_0^k$:
    \begin{itemize}
        \item If for all $z$ Alice eventually makes $n$ partitions that distinguish it, let Bob's value density be $\sigma_{w;z}^k$ for the last $z$ to receive an $n$th distinguishing partition. His strategy will be to pretend as if he were a myopic Bob with value density $\sigma_0^k$ until that $n$th distinguishing partition, then switch to being myopic with his true value density. Since distinguishing partitions are within $O(w/k)$ of being even for Bob, his regret is upper-bounded by $O(nw/k)=O(T^{\alpha})$. Alice, on the other hand, gets $\Omega(1/k)$ regret for each time she made a potentially distinguishing partition. There are $\Theta(1/w)$ possible disjoint spikes and each partition can only be distinguishing for $k$ of them, so Alice's regret is $\Omega(n/(wk))=\Omega(T^{\frac{2+\alpha}{3}}/k)$.
        \item If there is some $z$ that Alice distinguishes fewer than $n$ times, let Bob's value density be $\sigma_{w;z}^k$ and have him play myopically as if his value density were $\sigma_0^k$. His only regret is $O(w/k)$ for each of the at most $n$ partitions that distinguish $z$, which is $O(T^{\alpha})$ in total. Alice then only sees choices consistent with $\sigma_0^k$, so in all $T$ rounds she gets at least the $\Theta(w/k)$ difference in Stackelberg value between $\sigma_0^k$ and $\sigma_{w;z}^k$ as regret. Her regret is therefore $\Omega(Tw/k) = \Omega(T^{\frac{2+\alpha}{3}}/k)$.
    \end{itemize}
    This completes the proof sketch.
\end{proof}

\begin{proof}[Proof of Theorem~\ref{thm:2-cut-non-myopic}]
    The upper bound is proved in Proposition~\ref{prop:2-cut-non-myopic-upper-bound-known-alpha} and the lower bound in Proposition~\ref{prop:k-non-myopic-lower-bound}. 
\end{proof}

\paragraph{Three or More Cuts.}
We now analyze Theorem~\ref{thm:k-cut-non-myopic-bounds}, which provides bounds for $k \geq 3$ cuts when Bob's regret budget is known. 
The lower bound is already covered by Proposition \ref{prop:k-non-myopic-lower-bound}. The following proposition shows the upper bound.

\begin{restatable}[Upper bound for $k \geq 3$ cuts; Non-Myopic Bob with Public Learning Rate]{prop}{kCutKnownAlphaUpper} \label{prop:k-cut-known-alpha-upper-bound}
   Let $k \geq 3$. Suppose Bob's strategy guarantees his regret over  horizon $T$ is at most $c  T^{\alpha}$, for $\alpha \in (-1, 1)$ and $c > 0$. There exists a deterministic strategy for Alice, parameterized by $\alpha$ but independent of $c$, that guarantees her regret is $O(T^{\frac{3+\alpha}{4}}k^{3/4}(\log T)^{1/2})$.
\end{restatable}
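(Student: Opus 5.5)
We follow the template of Proposition~\ref{prop:k-cut-myopic-upper-bound}, made robust to a non-myopic Bob the way Proposition~\ref{prop:2-cut-non-myopic-upper-bound-known-alpha} robustified Proposition~\ref{prop:two_cuts_myopic_Bob}. By Lemma~\ref{lem:wlog-alice-valuation-warping} we take $v_A$ uniform. Alice works with a grid scale $\eta$ and a comparison accuracy $\varepsilon$, both to be tuned later. She first approximates $m_B$ and then builds the chain $x_{-\ell},\ldots,x_0,x_1,\ldots,x_r$ exactly as in the myopic $k\geq 3$ proof. Intervals to the right of $x_1$ are compared to $[x_0,x_1]$ through pieces of the form $[0,x_0]\cup[x_i,z]$, and the mirrored comparisons on the left go against $[x_1,x_2]$, so the errors do not compound. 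The one change is that each binary-search query is a \emph{robust query}: Alice repeats the same partition $R = \Theta(T^{\alpha}\log T/\varepsilon)$ times and takes the majority answer. Suppose the partition is at least $\varepsilon/2$ away from even for Bob. Then every minority answer costs him at least $\varepsilon/2$, so he can falsify a query only by spending more than $R\varepsilon/4 > cT^{\alpha}$ of regret, which exceeds his budget for large $T$. This is why $R$ depends on $\alpha$ but not on $c$. Answers on partitions within $\varepsilon/2$ of even may be adversarial, but binary search then still returns a point within $O(\varepsilon)$ of the true crossover, which is all the myopic analysis needs.

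For the error accounting, the chain has $N = \Theta(1/\eta)$ points, each needing $O(\log(1/\varepsilon))$ robust queries. The exploration cost is therefore
\[
O\bigl(\eta^{-1}\log(1/\varepsilon)\,T^{\alpha}\log T/\varepsilon\bigr) = O\bigl(T^{\alpha}\log^2 T/(\eta\varepsilon)\bigr).
\]
Every chain interval has Bob-value equal to $\nu := V_B([x_0,x_1])$ up to $O(\varepsilon)$. Summing over $N\approx 1/(\delta\eta)$ intervals, the estimate $1/N$ for each interval's value has relative error $O(\varepsilon/\eta)$. With $k$ cuts the approximate Stackelberg optimum computed from the chain is then within $O(k\eta + \varepsilon/\eta)$ of $u_A^*(k)$, as in the myopic proof. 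Here $k\eta$ comes from rounding each cut to the grid, and $\varepsilon/\eta$ from the accumulated value errors.

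Alice then shifts the cuts by a further $\Theta(k\eta+\varepsilon/\eta)$ in Bob's favour, so that Bob strictly prefers the piece she likes less by a margin $\gamma = \Theta(k\eta+\varepsilon/\eta)$. Each exploitation round then costs her $O(\gamma)$. Bob can also deviate in these rounds, but each deviation costs him $\gamma$, so the number of deviations is at most $cT^{\alpha}/\gamma$, and they cost Alice $O(T^{\alpha}/\gamma)$ in total. Her overall regret is
\[
O\!\left(\frac{T^{\alpha}\log^2 T}{\eta\varepsilon} + T\Bigl(k\eta+\frac{\varepsilon}{\eta}\Bigr) + \frac{T^{\alpha}}{\gamma}\right).
\]
Balancing $Tk\eta$ against $T\varepsilon/\eta$ forces $\varepsilon \approx k\eta^2$. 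The first term then becomes $T^{\alpha}\log^2 T/(k\eta^3)$. Equating this with $Tk\eta$ gives
\[
\eta^4 = T^{\alpha-1}\log^2 T/k^2, \qquad \text{i.e.}\qquad \eta = T^{\frac{\alpha-1}{4}}(\log T)^{1/2}k^{-1/2}.
\]
With this choice $Tk\eta = T^{\frac{3+\alpha}{4}}k^{1/2}(\log T)^{1/2}$. The last term, $T^{\alpha}/(k\eta)$, is dominated whenever $\alpha<1$. The stated $k^{3/4}$ is at least $k^{1/2}$, so this gives the bound with room to spare. If a slightly lossier error bound such as $O(k\eta+k\varepsilon/\eta)$ is needed for the $k$-cut optimization step, the same balancing gives exactly the $k^{3/4}$ of the statement. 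Because $\eta$ depends only on $T$, $\alpha$, and $k$, the strategy is parameterized by $\alpha$ but independent of $c$, as the statement requires.

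The main obstacle is the robustness claim in the first paragraph, in two respects. First, Bob's regret is measured over the whole horizon, so his deviations are budgeted globally rather than per query; the majority-vote argument must charge them against his single total budget. Second, when a query is near-tied and Bob can flip it for free, the error must stay local: it may move only that one chain point by $O(\varepsilon)$ and must not distort the indirect comparisons routed through $[x_1,x_2]$. To handle this, I would write a robust analogue of Lemma~\ref{lem:binary-search} stating that the returned point is within $O(\varepsilon/\delta)$ of the crossover whenever the majority answers of non-near-tied queries are correct. The myopic perturbation analysis of Proposition~\ref{prop:k-cut-myopic-upper-bound} then applies verbatim.
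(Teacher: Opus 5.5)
Your proposal is correct and follows the paper's route. The paper proves the statement by invoking Lemma~\ref{lem:k-cut-non-myopic-general-upper-bound} with $f(T)=T^{\alpha}$: it runs the chain construction of Proposition~\ref{prop:k-cut-myopic-upper-bound} with every binary search replaced by Lemma~\ref{lem:binary-search-with-regret}, which repeats each query $\lceil 4T^{\alpha}\log T/(\delta\varepsilon)\rceil$ times and takes the majority. That lemma is exactly the ``robust analogue'' you propose. Its claim (a) gives the locality you worry about: a wrong majority can only occur when the queried point is within $\varepsilon/4$ of the crossover. The global-budget concern disappears because a single falsified non-near-tied query already costs Bob more than $cT^{\alpha}$ once $\log T>c$.

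The one genuine difference is the tuning.
\begin{itemize}
\item The paper fixes $\varepsilon=\delta^2\eta^2/2$, so that the hypothesis $\varepsilon<\eta^2/2$ of Lemma~\ref{lem:known-bob-intervals-gives-stackelberg-cut} is (essentially) met. It then balances $Tk\eta$ against $T^{\alpha}\log^2 T/\eta^3$, which gives $\eta\propto k^{-1/4}$ and the stated $k^{3/4}$.
\item You take $\varepsilon\approx k\eta^2$. This shrinks the exploration term by a factor of $k$ and yields the sharper $k^{1/2}$.
\end{itemize}

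That choice violates the hypothesis of Lemma~\ref{lem:known-bob-intervals-gives-stackelberg-cut}, so ``as in the myopic proof'' does not literally cover your $O(k\eta+\varepsilon/\eta)$ claim. You need to redo its counting step. The bound on $n_A-n_B$ in that proof becomes $O(\varepsilon/\eta^2)+r=O(k)+r$ instead of $<1+r$. So you must shift $O(\varepsilon/\eta^2+r)$ intervals to Bob rather than $r$, at a cost of $O(\varepsilon/\eta+r\eta)=O(k\eta)$. You also need $\varepsilon\ll\eta$, i.e.\ $k\eta$ small, so that the chain still has $O(1/\eta)$ intervals. With that short extension your $k^{1/2}$ bound goes through and improves on the paper's exponent in $k$. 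Without it, your fallback $\varepsilon\approx\eta^2$ reproduces the paper's parameters and gives exactly the stated bound.
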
 

\begin{proof}[Proof sketch]
    Similarly to Proposition \ref{prop:2-cut-non-myopic-upper-bound-known-alpha}, we modify the myopic strategy from Proposition \ref{prop:k-cut-myopic-upper-bound} by having Alice make each of her binary search partitions $4T^{\alpha} \log T / (\delta \varepsilon)$ times and taking a majority vote. This change allows Alice to still accurately learn Bob's value density.

    We also have the same new regret source, which we handle by shifting the final cut points an extra $\Theta(\varepsilon/\eta + \eta)$ in Bob's favor to ensure he significantly prefers the piece Alice likes less. Setting $\eta = T^{\frac{\alpha-1}{4}}k^{-1/4} \log^{1/2} T$ and $\varepsilon = \delta^2 \eta^2/2$ then equalizes all of Alice's regret terms for a bound of $O(T^{\frac{3+\alpha}{4}}k^{3/4}(\log T)^{1/2})$. This completes the proof sketch.
\end{proof}

\begin{proof}[Proof of Theorem~\ref{thm:k-cut-non-myopic-bounds}]
    The upper bound follows by Proposition~\ref{prop:k-cut-known-alpha-upper-bound}. The lower bound follows by Proposition~\ref{prop:k-non-myopic-lower-bound}.
\end{proof}

\subsection{Non-Myopic Bob with Private Learning Rate}

We now prove Theorem \ref{thm:unknown-alpha-bounds}, which covers the setting where Bob's regret bound is unknown to Alice. 
Towards this end, we analyze separately the upper bound when $k=2$, the upper bound when $k\geq 3$, and  the lower bound for all $k\geq 2$.

\medskip 

\begin{restatable}[Upper bound for $2$ cuts; Non-Myopic Bob with Private Learning Rate]
{prop}{twoCutUnknownAlphaUpper}
    \label{prop:2-cut-non-myopic-upper-bound-unknown-alpha}
    Let $k = 2$. Suppose Bob's strategy guarantees his regret over  horizon $T$ is at most $c T^{\alpha}$, for $\alpha \in (-\infty, 1)$ and $c > 0$. There exists a deterministic strategy for Alice, independent of both $\alpha$ and $c$, that guarantees her regret is $O(T/\log T)$.
\end{restatable}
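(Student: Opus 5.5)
We will reuse the strategy of Proposition~\ref{prop:2-cut-non-myopic-upper-bound-known-alpha}, but replace the $\alpha$-dependent repetition count with one that grows slowly enough to be affordable yet eventually dominates any budget $cT^{\alpha}$ with $\alpha<1$. Concretely, Alice runs the discretized moving-knife procedure with precision $\varepsilon$ and grid step $\eta=\Delta\varepsilon/\delta$. Each binary-search query of Lemma~\ref{lem:binary-search} is repeated $m$ times and decided by majority vote. The final cuts $(\overline{x},\overline{y})$ are shifted by $\Theta(\varepsilon)$ in Bob's favor, so that he strictly prefers Alice's less-liked piece by a margin $\Omega(\varepsilon)$. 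We choose
\[
\varepsilon=\frac{C}{\log T},\qquad m=\frac{T}{(\log T)^{3}},
\]
for a suitable constant $C$. The exploration phase then uses
\[
O\!\left(\frac{m\log(1/\varepsilon)}{\eta}\right)=O\!\left(\frac{T\log\log T}{(\log T)^{2}}\right)=o\!\left(\frac{T}{\log T}\right)
\]
rounds, and the exploitation loss is $O(T\varepsilon)=O(T/\log T)$.

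\textbf{Correctness of the majority votes.} Each queried partition is at least $\varepsilon/2$ from even for Bob, unless the answer does not matter at the current binary-search precision. Misreporting a majority therefore costs Bob at least about $m\delta\varepsilon/4=\Omega(T/(\log T)^{4})$ regret. For any fixed $\alpha<1$ and $c$, this exceeds $cT^{\alpha}$ once $T\ge T_1(\alpha,c)$. For such $T$, all votes reveal Bob's true preference, and the analysis of Proposition~\ref{prop:two_cuts_myopic_Bob} gives a near-Stackelberg pair $(x_{i^*},y_{i^*})$ within $O(\varepsilon)$ of optimal. For $T<T_1$, the regret is at most $T_1=O(1)$, which is absorbed into the constant. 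This constant may depend on $\alpha$ and $c$, which is allowed since the statement asks only for the asymptotic rate.

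\textbf{Exploitation phase.} Because Bob prefers Alice's less-liked piece by margin $\Omega(\varepsilon)$, each round in which he takes Alice's piece costs him $\Omega(\varepsilon)$ regret. He can do this at most $O(cT^{\alpha}/\varepsilon)=O(T^{\alpha}\log T)=o(T/\log T)$ times. Summing the three terms gives $O(T/\log T)$.

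\textbf{Main obstacle.} The step needing care is that Bob is adaptive: he may spend his budget on some votes and not others, and may respond differently across repetitions. We handle this as in the known-rate case. If a single majority is wrong, Bob has chosen his less-preferred piece in at least $m/2$ rounds with gap $\ge\delta\varepsilon/2$, so one flipped vote alone already exceeds his total budget. Hence no vote can be flipped once $T\ge T_1$, with no union bound needed. It remains to check that $\log(1/\varepsilon)/\eta$ searches times $m$ stays below $T$, which the choice of $m$ ensures.
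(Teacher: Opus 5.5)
Your proposal is correct and takes essentially the same approach as the paper. The paper proves this proposition by instantiating its general two-cut lemma (the moving-knife discretization driven by the majority-vote binary search of Lemma~\ref{lem:binary-search-with-regret}) with $f(T)=T/\log^5 T$, which yields exactly your parameters: $\varepsilon = 1/\log T$, $\Theta(T/\log^3 T)$ repetitions per query, an $O(T\log\log T/\log^2 T)$ exploration cost, and an $O(T\varepsilon)$ exploitation loss.

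One phrasing point: a majority can still be flipped when the queried cut lies within about $\varepsilon/4$ of the crossover. So the right invariant, as in the paper's claim (a), is that the crossover stays within $\varepsilon/4$ of the current search interval, not that no vote is ever flipped. Your correctness paragraph already allows for this.
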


\begin{proof}[Proof sketch]
    We use the strategy from Proposition \ref{prop:2-cut-non-myopic-upper-bound-known-alpha} (for when $\alpha$ is known) but make two changes:
    \begin{itemize}
        \item Increase the number of times Alice makes each binary search partition to $4T/(\delta \varepsilon \log^4 T)$. No matter what $\alpha$ is, for sufficiently large $T$ this will be enough to force truthful responses from Bob.
        \item Set the parameters $\varepsilon=1/\log T$ and $\eta = \varepsilon \Delta/\delta$. These balance the terms in Alice's regret to $O(T/\log T)$ as desired.
    \end{itemize}
    This completes the proof sketch.
\end{proof}

\begin{restatable}[Upper bound for $k \geq 3$ cuts; Non-Myopic Bob with Private Learning Rate]{prop}{kCutNonMyopicUpperBoundUnknownAlpha}
    \label{prop:k-cut-non-myopic-upper-bound-unknown-alpha}
  Let $k \geq 3$. Suppose Bob's strategy guarantees his regret over  horizon $T$ is at most $c T^{\alpha}$, for $\alpha \in (-\infty, 1)$ and $c > 0$. There exists a deterministic strategy for Alice, independent of both $\alpha$ and $c$, that guarantees her regret is $O(Tk^{3/4}/\log T)$.
\end{restatable}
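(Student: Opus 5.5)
We take the strategy of Proposition~\ref{prop:k-cut-known-alpha-upper-bound} (the $k\geq 3$ mapping strategy from Proposition~\ref{prop:k-cut-myopic-upper-bound} with majority-vote repetition) and alter it exactly as Proposition~\ref{prop:2-cut-non-myopic-upper-bound-unknown-alpha} altered the $2$-cut strategy. Every partition used inside a call to Lemma~\ref{lem:binary-search} is now repeated $r = 4T/(\delta\varepsilon\log^4 T)$ times, and Alice takes the majority of Bob's choices. This repetition count no longer depends on $\alpha$ or $c$.

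The first step is to show that the majority vote is truthful whenever the tested partition is at least $\varepsilon/2$ from even for Bob. If Bob answers against his preference in at least $r/2$ of those rounds, each such round costs him at least $\varepsilon/2$. His regret is then at least $\Omega(r\varepsilon) = \Omega(T/\log^4 T)$. Since $\alpha<1$, this exceeds $cT^\alpha$ for all $T\geq T_0(\alpha,c)$, which is a contradiction. For $T<T_0$ the regret is trivially $O(T_0)$, a constant absorbed into the $O(\cdot)$.

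With this in hand, the mapping phase still produces points $x_{-\ell},\ldots,x_r$, each interval having Bob-value $\eta$ up to additive error $O(\varepsilon)$. The errors do not compound, for the same reason as in Proposition~\ref{prop:k-cut-myopic-upper-bound}. The computed partition is therefore $O(k\eta+\varepsilon/\eta)$-Stackelberg. Alice then shifts the cut points by an extra $\Theta(\varepsilon/\eta+\eta)$ in Bob's favour, so that he prefers Alice's less-liked piece by a margin $\Omega(\varepsilon/\eta+\eta)$.

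The regret then splits into three terms:
\begin{itemize}
\item[(i)] exploration: $O(1/\eta)$ searches, each of $O(\log(1/\varepsilon))$ steps, each step repeated $r$ times, giving $O\bigl(T\log(1/\varepsilon)/(\eta\,\varepsilon\log^4 T)\bigr)$;
\item[(ii)] approximation loss: $O\bigl(T(k\eta+\varepsilon/\eta)\bigr)$;
\item[(iii)] post-commitment deviations: at most $cT^\alpha/(\varepsilon/\eta+\eta)$ rounds.
\end{itemize}

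The main obstacle is choosing $\eta,\varepsilon$ so that (i) and (ii) are both $O(Tk^{3/4}/\log T)$ and (iii) is $o(T/\log T)$ for every $\alpha<1$. The natural choice $\varepsilon=\delta^2\eta^2/2$ makes $\varepsilon/\eta=\Theta(\eta)$. Then (ii) is $O(Tk\eta)$ and (i) is $O\bigl(T\log\log T/(\eta^3\log^4 T)\bigr)$. Taking $\eta = k^{-1/4}/\log T$ gives (ii) $=O(Tk^{3/4}/\log T)$ and (i) $=O(Tk^{3/4}\log\log T/\log T)$.

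The stray $\log\log T$ in (i) has to be removed. One option is to use repetition $r=4T/(\delta\varepsilon\log^5 T)$ instead, which keeps truthfulness for large $T$ since $\alpha<1$. Another is to choose $\eta=k^{-1/4}(\log T)^{-1}(\log\log T)^{1/4}$ with a slightly adjusted repetition count. I expect this exponent bookkeeping, that is, balancing the repetition count, $\eta$ and $\varepsilon$ against the $k$-dependence, to be the only delicate part. Term (iii) is then $O(T^\alpha k^{1/4}\log T)$, which is lower order for $T\geq T_0(\alpha,c)$. Summing the three terms gives $O(Tk^{3/4}/\log T)$.
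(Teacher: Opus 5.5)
Your proposal is correct and is essentially the paper's proof: the paper plugs $f(T)=T/\log^6 T$ and $g(T)=\log T$ into Lemma~\ref{lem:k-cut-non-myopic-general-upper-bound}, which yields exactly your first fix (each search partition repeated $4T/(\delta\varepsilon\log^5 T)$ times, with $\eta=k^{-1/4}/\log T$ and $\varepsilon=\delta^2\eta^2/2$), so the exploration cost becomes $O\bigl(Tk^{3/4}\log(k\log T)/\log^2 T\bigr)$ and the stray $\log\log T$ factor disappears. Discard your second option, though: enlarging $\eta$ by $(\log\log T)^{1/4}$ makes the approximation term $O(Tk\eta)$ exceed the target by that same factor, whatever repetition count you use.
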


\begin{proof}[Proof sketch]
    Similarly to Proposition \ref{prop:2-cut-non-myopic-upper-bound-unknown-alpha}, we make two changes to the strategy from Proposition \ref{prop:k-cut-known-alpha-upper-bound} from the known-$\alpha$ setting:
    \begin{itemize}
        \item Increase the number of times Alice makes each binary search partition to $4T/(\delta \varepsilon \log^6 T)$. No matter what $\alpha$ is, for sufficiently large $T$ this will be enough to force truthful responses from Bob.
        \item Set the parameters $\eta = 1/\log T$ and $\varepsilon=\delta^2 \eta^2/2$. These balance the terms in Alice's regret to $O(T/\log T)$ as desired.
    \end{itemize}
    This completes the proof sketch.
\end{proof}

\begin{restatable}[Lower bound for $k \geq 2$ cuts; Non-Myopic Bob with Private Learning Rate]{prop}{unknownAlphaLowerBound}
    \label{prop:unknown-alpha-lower-bound}
    Let $k \geq 2$. Suppose Bob's strategy guarantees his regret over  horizon $T$ is at most $c T^{\alpha}$, for $\alpha \in (-\infty, 1)$ and $c > 0$. Suppose that for some constant $\beta < 1$, Alice uses a strategy $S_A$, independent of both $\alpha$ and $c$, that guarantees her  regret is $O(T^{\beta})$ against such a Bob. Then there exists a Bob valuation and strategy that guarantees Bob's regret is $O(T^{\beta})$  but ensures Alice's regret is $\Omega(T)$ when using $S_A$.
\end{restatable}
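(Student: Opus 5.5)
We prove this by exploiting the fact that $S_A$ must succeed against Bobs with very small regret budgets, which forces it to keep probing, and then letting a Bob who lies within a budget of order $T^{\beta}$ profit from those probes. The construction reuses the spiked densities $\sigma_0^k$ and $\sigma_{w;z}^k$ from Proposition~\ref{prop:k-myopic-lower-bound}, with spike width $w$ a small \emph{constant}. The Stackelberg gap between $\sigma_0^k$ and $\sigma_{w;z}^k$ is then $\Theta(w/k)$, a constant for fixed $k$. Any partition that distinguishes $z$ is within $O(w/k)$ of even for Bob.

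\textbf{Step 1 (myopic Bob forces exploration).} Run $S_A$ against a myopic Bob with density $\sigma_0^k$; his regret is $0\le cT^{\alpha}$ for every $\alpha$. Since Alice's regret must be $O(T^{\beta})$ against \emph{every} admissible Bob, it must in particular be $O(T^{\beta})$ against each myopic $\sigma_{w;z}^k$. There are $\Theta(1/w)$ disjoint spike locations, a constant number.

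Suppose that for some $z$, Alice's play against myopic $\sigma_0^k$ distinguished $z$ in at most $m$ rounds. Consider a Bob with density $\sigma_{w;z}^k$ who imitates the myopic $\sigma_0^k$ Bob throughout. His regret is at most $m\cdot O(w/k)$. Alice sees exactly the $\sigma_0^k$ history, so her regret against him is $\Omega(Tw/k)=\Omega(T)$, because her partitions are near-optimal for $\sigma_0^k$ and hence miss the spike gain in almost all rounds.

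\textbf{Step 2 (case split).} If $m = O(T^{\beta})$ for this $z$, Step 1 is already the desired Bob, with regret $O(T^{\beta})$ and Alice's regret $\Omega(T)$. Otherwise every $z$ is distinguished $\omega(T^{\beta})$ times, and in particular at least $C T^{\beta}$ times for some large $C$. As in Proposition~\ref{prop:k-myopic-lower-bound}, a near-optimal partition that distinguishes some $z$ gives Alice $\Omega(1/k)$ regret against $\sigma_0^k$. So in this case Alice's regret against the myopic $\sigma_0^k$ Bob is at least $\Omega(CT^{\beta}/k)$.

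Choosing $C$ larger than the implied constant in her guarantee gives a contradiction, since that guarantee must also hold against the myopic $\sigma_0^k$ Bob, whose regret budget is zero. Hence the first case must occur for some $z$, and that Bob works.

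\textbf{Main obstacle.} The delicate point is the quantifiers. Alice's $O(T^{\beta})$ guarantee has a constant $C_A$ and must hold for all $T\ge T_0$. I need, for infinitely many $T$ (or all large $T$), a single $z$ distinguished at most $C'T^{\beta}$ times, uniformly. The plan is to take $C'=2C_A k\cdot(\text{per-probe regret constant})^{-1}$. By pigeonhole over the finitely many $z$, some $z$ works for infinitely many $T$; if a fixed valuation is needed for all large $T$, this can be arranged by fixing that $z$ along a subsequence.

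I also need to verify that $\sigma_{w;z}^k$ still satisfies the density bounds. Lemma~\ref{lem:wlog-alice-valuation-warping} handles Alice's general valuation. Finally, I need to check that $\Omega(Tw/k)$ is indeed linear regret when $w$ and $k$ are constants.
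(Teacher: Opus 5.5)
Your argument is correct and follows the same plan as the paper's proof. A Bob with one density imitates a myopic Bob with another. A zero-regret Bob is admissible, and every distinguishing partition costs Alice a constant against the myopic one, so her $O(T^{\beta})$ guarantee against him caps the number of distinguishing rounds at $O(T^{\beta})$. That caps the imitator's regret, while the Stackelberg gap between the two densities makes Alice's regret linear.

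What differs is the pair of densities. The paper uses the two-level $v_B^1$ and its more extreme version $v_B^2$. A two-line computation ($\ell + r \le 1/2$) shows every distinguishing partition costs Alice at least $1/6$ against $v_B^1$, and the Stackelberg values are $2/3$ versus $5/7$, so all constants are independent of $k$. You instead reuse the spike machinery at constant width $w \le 1/48$:
\begin{itemize}
\item Lemma~\ref{lem:generalized-spiked-distinguishers} gives a per-probe cost of $1/(6k)$.
\item Lemma~\ref{lem:generalized-spiked-similarity} bounds the imitator's loss by $4w/(3k)$ per distinguishing round, so his total is at most $8wC_AT^{\beta}$.
\item Lemma~\ref{lem:generalized-spiked-stackelberg} gives a gap of $4w/(27k)$, hence Alice's regret is $\Omega(T/k)$.
\end{itemize}
That is enough for constant $k$, but your route is heavier and has worse $k$-dependence than the paper's.

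A few things to tighten:
\begin{itemize}
\item \textbf{Safety switch.} Your imitator lies unconditionally, so his regret bound holds only against $S_A$. The paper's Bob reverts to myopic play once his cumulative regret exceeds a threshold of order $T^{\beta}$. That makes the bound hold against every Alice strategy, which is the natural reading of ``Bob's strategy guarantees his regret is at most $cT^{\alpha}$'', and the switch never fires against $S_A$. You should add the same switch.
\item \textbf{Quantifiers.} The worry in your last paragraph is unfounded, and the pigeonhole/subsequence fix would only give the bound for infinitely many $T$. For every fixed $z$ and every $T \ge T_0$, Alice's bound $C_AT^{\beta}$ against the myopic $\sigma_0^k$ Bob directly limits the rounds distinguishing $z$ to $6kC_AT^{\beta}$. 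So any $z$ chosen in advance works for all large $T$, and no case split is needed.
\item \textbf{Linear regret.} This step needs no near-optimality claim. The history coincides with the myopic-$\sigma_0^k$ history, so Alice's payoff is at most $2/3$ in every round, against a benchmark of at least $2/3 + 4w/(27k)$.
\item \textbf{Framing.} Your opening framing is inverted. The guarantee against the zero-regret $\sigma_0^k$ Bob forces Alice to probe \emph{rarely}, which is exactly what makes lying cheap. Your remark about myopic $\sigma_{w;z}^k$ Bobs is never used.
\end{itemize}
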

\begin{proof}[Proof sketch]
    The Bob we construct will have one value density but pretend to play myopically under another. He will guarantee his regret is $O(T^{\beta})$ by reverting to playing myopically if he reaches that limit, but the condition on $S_A$ ensures that will never happen.

    Specifically, by Lemma \ref{lem:wlog-alice-valuation-warping} we assume Alice's valuation is the uniform $v_A(x)=1 \,\forall x \in [0, 1]$. Let $v_B^1$ and $v_B^2$ be the following value densities:
    \begin{align}
        v_B^1(x) = \begin{cases}
            1/2 & \text{if $x \in [0, 1/2]$} \\
            3/2 & \text{if $x \in (1/2, 1]$}
        \end{cases} \qquad \mbox{ and } \qquad 
        v_B^2(x) = \begin{cases}
            1/4 & \text{if $x \in [0, 1/2]$} \\
            7/4 & \text{if $x \in (1/2, 1]$}
        \end{cases}
    \end{align}
    Value density $v_B^2$ is a more extreme version of $v_B^1$, so any partitions that Alice could use to distinguish the two will give Alice $\Omega(1)$ regret if Bob takes the piece preferred by $v_B^1$. In particular, against a truly myopic Bob with value density $v_B^1$, the condition on $S_A$ forces Alice to only make $O(T^{\beta})$ partitions that could distinguish the Bobs. Therefore, if Bob's true value density is $v_B^2$, he can successfully pretend to have value density $v_B^1$ against $S_A$ without ever exceeding $O(T^{\beta})$ regret himself. But Alice's Stackelberg value is $1/21$ higher against a Bob with value density $v_B^2$ than $v_B^1$, so across $T$ rounds her regret is $\Omega(T)$.
\end{proof}

\begin{proof}[Proof of Theorem \ref{thm:unknown-alpha-bounds}]
The upper bound for $k=2$ follows by Proposition~\ref{prop:2-cut-non-myopic-upper-bound-unknown-alpha}, the upper bound for $k \geq 3$ follows by Proposition~\ref{prop:k-cut-non-myopic-upper-bound-unknown-alpha}, and the lower bound for $k \geq 3$ follows by Proposition~\ref{prop:unknown-alpha-lower-bound}. 
\end{proof}

\section{The RW Query Model} \label{sec:RW_corollary}

 The analysis for the repeated game allows us to obtain the following bounds for finding Stackelberg allocations in the RW query model. For constant $k$, the randomized query complexity is $\Theta(1/\varepsilon)$. We provide a proof sketch here, while the full proof can be found in Appendix \ref{app:rw-query-model}.

\RWQueryModel*

\begin{proof}[Proof sketch]
    For the upper bound, let $n = \lceil k/\varepsilon \rceil$. We can use $n$ cut queries to partition the cake into $n$ intervals of value $1/n$ to Alice, then use $n$ eval queries to get Bob's value of each interval. We can then enumerate every $k$-cut with cut points on the boundary of these intervals and compute how good they are for Alice. The best of these $k$-cuts is $(k/n)$-Stackelberg and therefore $\varepsilon$-Stackelberg.

    For the lower bound, rather than consider a worst-case input for an arbitrary randomized algorithm we apply Yao's minimax principle \cite{yao_minimax} to consider the best deterministic algorithm for a hard input distribution. We will use the spiked value density functions from Proposition \ref{prop:k-myopic-lower-bound}. Specifically, let $w=14\varepsilon$. Let $\mathcal{D}$ be the distribution of value density functions for Alice and Bob where:
    \begin{itemize}
        \item Alice has the uniform value density $v_A(x)=1 \, \forall x \in [0, 1]$, and
        \item Bob has the value density $\sigma_{w;z}^2$ for uniformly random $z$.
    \end{itemize}
    The only unknown parameter of an input from $\mathcal{D}$ is Bob's spike $z$. Any correct algorithm must find the spike, since the Stackelberg value when Bob has value density $\sigma_0^2$ is $\varepsilon$ lower than when he has value density $\sigma_{w;z}^2$. However, the only way for a cut or eval query to gain information about $z$ is to query a point in the spike. There are $\Theta(1/\varepsilon)$ disjoint spikes possible, so in order to succeed with high probability against inputs from $\mathcal{D}$ any deterministic algorithm must make $\Omega(1/\varepsilon)$ queries with constant probability. By Yao's minimax principle, the same $\Omega(1/\varepsilon)$ bound holds for the randomized query complexity.
\end{proof}

\section{Discussion}

In the measurable-cut game, we leave open whether the true optimal value of Alice's regret is $\Theta(T)$ or $o(T)$ against a myopic Bob. If it is $o(T)$, can a Bob with sub-linear regret force Alice to have $\Omega(T)$ regret?

The exponents in the upper and lower bounds of Theorem \ref{thm:k-cut-non-myopic-bounds} do not match. Closing this gap remains an open question. More broadly, our upper bounds for $k=2$ and $k \geq 3$ use different techniques. Is this split necessary?

\bibliographystyle{ACM-Reference-Format}
\bibliography{ref}

\appendix

\section{Preliminaries} \label{app:prelim}

In this section we include the proofs omitted from Section~\ref{sec:prelim}.

\aliceValuationWarping*

\begin{proof}
    The idea is to stretch different areas of the cake so that $v_A^1$ turns into $v_A^2$; everything else will simply be carried along by that stretching. Specifically, for $i \in \{1, 2\}$ define the function:
    \begin{align}
        V_i(x) &= \int_{0}^x v_A^i(y) \,\mathrm{d}y \,.
    \end{align}
    This is well-defined since both $v_A^1$ and $v_A^2$ are integrable. Then define  $f: [0, 1] \to [0, 1]$ as:
    \begin{align}
        f(x) &= V_2^{-1}(V_1(x)) \,. 
    \end{align}
    The inverse function $V_2^{-1}$ is well-defined because $v_A^2$ is positive and bounded, so its integral is strictly increasing and is Lipschitz.
    
    More strongly, $V_1$ and $V_2$ are integrals of functions with values in $[\delta, \Delta]$, so they are bi-Lipschitz. Therefore, their inverses $V_1^{-1}$ and $V_2^{-1}$ are also bi-Lipshitz, so their composition $f$ is bi-Lipschitz. Because $f$ is bi-Lipschitz, it can be used as a transformation in integration by substitution. This will be used to show that valuations are preserved across the map, if cut points are adjusted accordingly.

    Specifically, let $C$ be one piece of an arbitrary measurable cut. Then:
    \begin{align}
        \int_{C} v_A^1(x) \,\mathrm{d}x &= \int_{f(C)} v_A^1(f^{-1}(x)) (f^{-1})'(x) \,\mathrm{d}x = \int_{f(C)} v_A^1(V_1^{-1}(V_2(x))) \cdot \frac{1}{V_1'(V_1^{-1}(V_2(x)))} \cdot V_2'(x) \,\mathrm{d}x \\
        &= \int_{f(C)} \frac{v_A^1(V_1^{-1}(V_2(x)))}{v_A^1(V_1^{-1}(V_2(x)))} \cdot v_A^2(x) \,\mathrm{d}x 
        = \int_{f(C)} v_A^2(x) \,\mathrm{d}x \,.
    \end{align}
    Therefore, the value of $C$ to an Alice with valuation $v_A^1$ is the same as the value of $f(C)$ to an Alice with valuation $v_A^2$. Accordingly, Alice's strategy $S_A^2$ will be to run $S_A^1$ but apply $f$ to all the cuts that strategy chooses.

    We now need a value density $v_B^2$ and strategy $S_B^2$ for Bob that is equivalent to $v_B^1$ and $S_B^1$ after applying $f$. Bob's strategy $S_B^2$ will be to apply $f^{-1}$ to the cuts he sees and then follow $S_B^1$. Bob's value density $v_B^2$ will be:
    \begin{align}
        v_B^2(x) &= \frac{v_B^1(f^{-1}(x))}{v_A^1(f^{-1}(x))} \cdot v_A^2(x) \; \; \forall x \in [0,1] \,. 
    \end{align}
    The  value density $v_B^2$ is bounded to $[\delta^2/\Delta, \Delta^2/\delta]$ because each of $v_B^1$, $v_A^1$, and $v_A^2$ are bounded to $[\delta, \Delta]$. It is equivalent to $v_B^1$ after applying $f$ because, by integration by substitution, for any piece of cake $C$:
    \begin{align}
        \int_C v_B^1(x) \,\mathrm{d}x &= \int_{f(C)} v_B^1(f^{-1}(x)) (f^{-1})'(x) \,\mathrm{d}x 
        = \int_{f(C)} v_B^1(f^{-1}(x)) \cdot \frac{1}{V_1'(V_1^{-1}(V_2(x)))} \cdot V_2'(x) \,\mathrm{d}x \notag \\
        & = \int_{f(C)} \frac{v_B^1(f^{-1}(x))}{v_A^1(f^{-1}(x))} \cdot v_A^2(x) \,\mathrm{d}x = \int_{f(C)} v_B^2(x) \,\mathrm{d}x \,. 
    \end{align}
    This completes the proof of equivalence.
\end{proof}

\section{The Measurable-Cut Game} \label{app:measurable}

In this section we give the full proof of Theorem \ref{thm:measurable-cut-myopic-lower-bound}, which we restate here.

\measurableMyopicLower*

\begin{proof}[Proof of Theorem~\ref{thm:measurable-cut-myopic-lower-bound}]
Briefly, the Bob valuation we construct will partition $[0, 1/2)$ into infinitely many intervals, concentrating his value within each interval in either the left or right half. By choosing these halves adversarially against the cuts Alice's strategy makes, we can ensure none of them get her too close to her Stackelberg value.

  The partition is defined using a function $f:[0,\infty) \to [0,1/2]$. For the moment, we keep $f$ abstract and assume only that it is twice-differentiable and satisfies:  
        (i) $f(0) = \frac{1}{2}$,
       (ii)  $f'(x) < 0$ and $f''(x) > 0$ for all $x \geq 0$, and 
       (iii) $\lim_{x \to \infty} f(x) = 0$.
    
    Let $v_A$ and $S_A$ be an arbitrary value density and deterministic strategy, respectively, for Alice.
By Lemma \ref{lem:wlog-alice-valuation-warping}, we assume without loss of generality that Alice's valuation is  uniform: $v_A(x) = 1$ for all $x \in [0,1]$.
    
   For each infinite bit vector $s \in \{0, 1\}^{\mathbb{N}^*}$, we construct a Bob valuation parameterized by $s$. 
    Towards this end, let $g: \mathbb{N}^* \to [0, 1/2]$ be defined as $g(x) = f(\log x)$. Then 
    \begin{align} 
    g(1) = f(0) = 1/2 \qquad \mbox{ and } \qquad \lim_{x \to \infty} g(x) = \lim_{x \to \infty} f(\log x) = 0 \,. \label{eq:g-function-start-and-limit}
    \end{align}
    For each infinite bit vector $s \in \{0,1\}^{\mathbb{N}^*}$, we  define a  value density $v_B^s: [0,1] \to \mathbb{R}$ as the following piecewise function:
    \begin{itemize}
        \item Let $v_B^s(x) = 2/3$ for all $x \geq 1/2$ and $x=0$.
        \item For each $i \in \mathbb{N}^*$, let 
        \begin{align}  \label{eq:def_L_i_R_i}
        L_i = \Bigl[g(i+1), \frac{g(i+1)+g(i)}{2}\Bigr) \qquad and \qquad R_i = \Bigl[\frac{g(i+1)+g(i)}{2}, g(i)\Bigr) \,.
        \end{align} 
        Define 
        \begin{align}  
        v_B^s(x) = 2-\frac{4}{3}s_i \; \forall x \in L_i    
        \; \; \mbox{ and } \; \; v_B^s(x) = \frac{2}{3} + \frac{4}{3}s_i \; \forall x \in R_i \,.
        \end{align}
    \end{itemize}
    By \eqref{eq:g-function-start-and-limit}, we have:
    \begin{align}
        \bigcup_{i=1}^{\infty} (L_i \cup R_i) = \bigl(\lim_{i \to \infty} g(i), g(1)\bigr) = (0, 1/2) \,. \label{eq:union_Li_R_i_overall_partition_zero_one_half}
    \end{align}
    Therefore $v_B^s$ is well-defined on $[0, 1]$.
    
    The intuition is that  half of each interval $[g(i+1), g(i))$ (as chosen by $s_i$) has density $2$ and the rest of the cake has density $2/3$. 
    Then Alice's Stackelberg value is $3/4$ regardless of $s$: the intervals on which Bob's value density is $2$ make up $1/4$ of the cake in length, so he would be happy receiving those while Alice gets the remaining $3/4$ of the cake.

Our goal is to show there exists an infinite bit vector $s^*$ so that, against a Bob with value density $v_B^{s^*}$, no partition Alice could make using $S_A$ is $\varepsilon$-Stackelberg for $\varepsilon = 0.01 \cdot |f'(\log(10T))|$.

\paragraph{Enumerating the partitions that Alice could make under $S_A$. Constructing the sequence $(B_1^j)_{j\geq 1}$.}
For each horizon \(h\), let \(P^{(1)}_h,\ldots,P^{(m)}_h\) be the partitions Alice could make using $S_A$ in an $h$-round game. We have  \(m\le 2^h\) since by the end of each round $t$ she can observe at most $2^{t}$ different histories.
For each \(j\in[m]\), let \(\textsc{Less}^{(j)}_h\) denote the piece of \(P^{(j)}_h\) that Alice values less (breaking ties arbitrarily).
 
To collate these pieces across all $h=1,2,\ldots$, we relabel them as
\begin{align} \label{eq:B_1^j_sequence_def}
B^{2^h}_1,\,B^{2^h+1}_1,\,\ldots,\,B^{2^h+m-1}_1,
\end{align}
so indices in \(\{2^h,\ldots,2\cdot 2^h-1\}\) that do not occur correspond to no realized partitions.
We further restrict attention to those indices \(j\) for which the allocation \((\overline{B_1^j}, B^j_1)\) is \(1/12\)-Stackelberg for Alice with uniform density when she faces some Bob valuation \(v_B^s\). This is without loss of generality: if \((\overline{B_1^j}, B^j_1)\) is not even \(1/12\)-Stackelberg for Alice with any \(v_B^s\), then it cannot be \(\varepsilon\)-Stackelberg, so discarding it cannot make Alice's task easier. Formally, define 
\begin{align}
    \mathcal{J} & = \left\{j \in \mathbb{N} \mid \text{$B^j_1$ is defined in \eqref{eq:B_1^j_sequence_def} and } \right. \notag \\
    & \qquad  \qquad \; \; \; \left. \text{$\exists s \in \{0, 1\}^{\mathbb{N}^*}$ s.t. $(\overline{B^j_1}, B^j_1)$ is $\frac{1}{12}$-Stackelberg for Alice when Bob has  density $v_B^s$}\right\}
    \label{eq:cal-J-set-definition}
\end{align}

\paragraph{Converting the sequence $(B_1^j)_{j \geq 1}$ into the sequence $(B_2^j)_{j \geq 1}$.} For each index $j \in \mathcal{J}$, consider the corresponding piece \(B^j_1\).
We transform \(B^j_1\) into a more structured piece without improving Alice's position by much; the structured form will be easier to analyze. Our overall process is illustrated in Figure \ref{fig:B-set-transform}.
In the first stage we define a modified piece \(B^j_2\) as follows:
    \begin{itemize}
        \item For each $i \in \mathbb{N}^*$, let $m^j_i = \min \left\{\mu\left(B^j_1 \cap L_i\right), \mu\left(B^j_1 \cap R_i\right)\right\}$, where $\mu$ denotes the Lebesgue measure.
        \item We create $B_2^j$ from $B_1^j$ as follows. For each $i \in \mathbb{N}^*$, remove $m^j_i$ mass from each of $L_i$ and $R_i$. Make up this loss with $4m^j_i$ additional mass placed in $[1/2, 1]$. We also take this opportunity to arrange all the mass in convenient intervals. Formally, 
 define $z^j_1  := \mu(B^j_1 \cap [1/2, 1])$. For each $i \in \mathbb{N}^*$, define 
\begin{align} \label{eq:cases_E_F_G_j_defs}
\begin{cases}
\; E^j_i := \bigl[g(i+1), g(i+1) + \mu(B^j_1 \cap L_i) - m_i^j \bigr); \\ 
\; F^j_i  := \bigl[g(i) - \mu(B^j_1 \cap R_i) + m_i^j, g(i) \bigr); \\ 
\; G^j := \bigl[\frac{1}{2}, \frac{1}{2} + z^j_1 + 4\sum_{i=1}^{\infty} m_i^j\bigr] \,.
\end{cases}
\end{align} 
Now let  
\begin{align}
    B^j_2 := G^j \cup \left( \bigcup_{i=1}^{\infty} (E^j_i \cup F^j_i) \right) \,.
\end{align}
\end{itemize}

\begin{figure}
    \centering
    \includegraphics[width=0.95\textwidth]{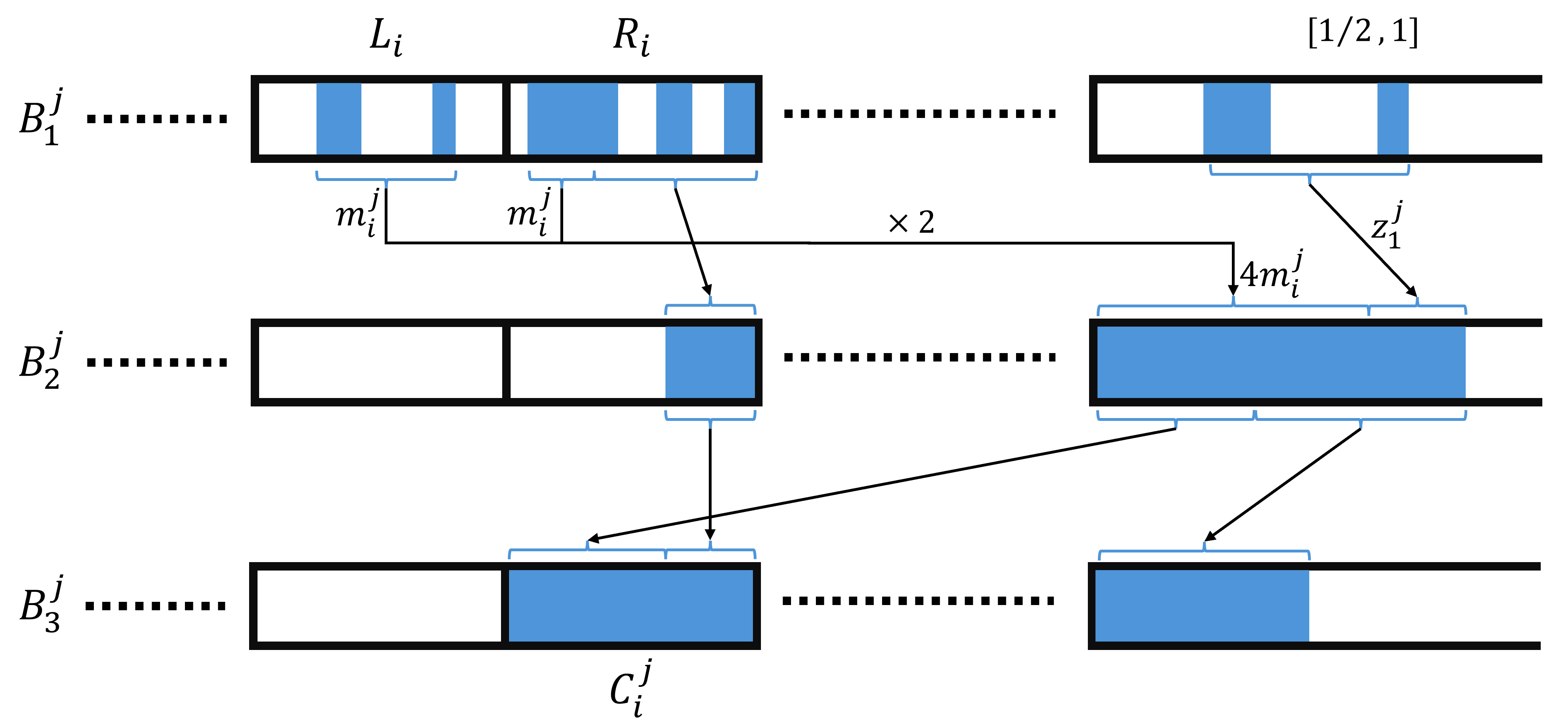}
    \caption{An illustration of the process to convert $B_1^j$ into $B_3^j$, focusing on a particular $L_i$ and $R_i$. The blue areas are the elements of their respective sets. Arrows indicate how much mass is moved and where. Mass from the other intervals would also be moved into $[1/2, 1]$ in $B^j_2$, but it is omitted for readability.}\label{fig:B-set-transform}
\end{figure}

\paragraph{Analyzing the properties of $(B_2^j)_{j \geq 1}$.}    Next we show that $B^j_2$ fits in $[0, 1]$ and analyze how the transformation affects $V_A(B^j_2)$. Since $j \in \mathcal{J}$, there exists an  infinite bit vector  $s^j \in \{0, 1\}^{\mathbb{N}^*}$ with the property  that $(\overline{B^j_1}, B^j_1)$ is $1/12$-Stackelberg for Alice with uniform valuation when Bob has value density $v_B^{s^j}$.

    Consider an arbitrary such $s^j$. By definition of $B^j_1$,  Alice prefers $\overline{B_1^j}$ to $B^j_1$, so we must have $V_B^{s^j}(B^j_1) \geq 1/2$.
    We use the notation $W_{i}^{j} := L_i$ if $s_i^j=1$ and $W_{i}^{j} := R_i$ otherwise. Then  
    \begin{align}
        \frac{1}{2} &\leq V_B^{s^j}(B^j_1) \label{eq:V-b-s-j-greater-than-one-half} \\
        &= 2\mu(B^j_1) - \frac{4}{3} \mu(B^j_1 \cap [1/2, 1]) - \frac{4}{3} \sum_{i=1}^{\infty} \mu(B^j_1 \cap W^j_i) \explain{By definition of $v_{B}^{s^j}.$}\\
        &\leq 2\mu(B^j_1) - \frac{4}{3} z^j_1 - \sum_{i=1}^{\infty} m^j_i \,. \explain{By definition of $z_1^j$ and $m_i^j$.}
    \end{align}
    Solving for $\mu(B^j_1)$ gives
    \begin{align}
        \mu(B^j_1) &\geq \frac{1}{4} + \frac{2}{3}\left(z^j_1 + \sum_{i=1}^{\infty} m_i^j\right) \,. \label{eq:B-j-1-measure-greater-than-1-4}
    \end{align}
    By definition of $\mu$ and $V_A$, we have that (i) $\mu(B^j_1) = V_A(B^j_1)$ and (ii)  Alice's Stackelberg value is $u_A^*(\infty)=3/4$. Then we can lower-bound Alice's regret in terms of $z^j_1$ and the $m^j_i$ as follows:
    \begin{align}
        u_A^*(\infty) - V_A\left(\overline{B^j_1}\right) = \frac{3}{4} - 1 + \mu(B^j_1) \geq \frac{2}{3}\left(z^j_1 + \sum_{i=1}^{\infty} m_i^j\right) \,. \label{eq:alice-stackelberg-from-B-1-j}
    \end{align}
    Since $(\overline{B^j_1}, B^j_1)$ is $1/12$-Stackelberg  for Alice when Bob has value density  $v_B^{s^j}$, we have that $u_A^*(\infty) - V_A\left(\overline{B^j_1}\right) \leq 1/12$, so 
    \begin{align}
        z^j_1 + \sum_{i=1}^{\infty} m_i^j \leq \frac{3}{2}\left(u_A^*(\infty) - V_A\left(\overline{B^j_1}\right)\right) \leq \frac{1}{8} \,. 
    \end{align}
    
    In particular, we have $\mu(G^j) = z^j_1 + 4\sum_{i=1}^{\infty} m_i^j \leq 4 \cdot 1/8 = 1/2$, which implies $B_2^j \subseteq [0, 1]$. Furthermore, Alice's valuation for $B_2^j$ satisfies:
    \begin{align}
        V_A(B_2^j) &= \mu(G^j) + \sum_{i=1}^{\infty} \left(\mu(E^j_i) + \mu(F^j_i)\right) \\
        &= z^j_1 + 4\sum_{i=1}^{\infty}m^j_i + \sum_{i=1}^{\infty} \left(\mu(B^j_1 \cap L_i) - m^j_i + \mu(B^j_1 \cap R_i) - m^j_i\right)\\
        &= \left[z^j_1 + \sum_{i=1}^{\infty} \left(\mu(B^j_1 \cap L_i) + \mu(B^j_1 \cap R_i)\right)\right] + 2\sum_{i=1}^{\infty} m^j_i \,. \label{eq:all_parts_of_B1_j}
    \end{align}
Since the expression in square brackets in \eqref{eq:all_parts_of_B1_j} represents $\mu(B^j_1)=V_A(B^j_1)$, by using \eqref{eq:alice-stackelberg-from-B-1-j} we obtain:
    \begin{align} 
        V_A(B^j_2) &= V_A(B^j_1) + 2\sum_{i=1}^{\infty} m^j_i \label{eq:V-A-B2j-greater-than-V-A-B1j} 
        \leq V_A(B^j_1) + 3\left(u_A^*(\infty) - V_A\left(\overline{B^j_1}\right)\right) \,. 
    \end{align}
    So if $S_A$ used the partition $(B^j_2, \overline{B^j_2})$ instead of $(B^j_1, \overline{B^j_1})$, it would at most quadruple Alice's Stackelberg regret. We can therefore analyze the game when Alice's strategy $S_A$ produces the partition $(B^j_2, \overline{B^j_2})$  instead of $(B^j_1, \overline{B^j_1})$ and divide our regret bound by $4$ at the end.

    This transformation also preserves Bob's valuation of the piece $B_2^j$ compared to $B_1^j$. Specifically, since for each $i \in \mathbb{N}^*$ we remove $2m_i^j$ mass of average density $4/3$ from $L_i \cup R_i$ and add $4m_i^j$ mass of density $2/3$ to $[1/2, 1]$, we have:
    \begin{align}
        V_B^s(B^j_1) = V_B^s(B^j_2) \qquad \text{for all } s \in \{0, 1\}^{\mathbb{N}^*} \,. \label{eq:B-j-1-value-equals-B-j-2-value}
    \end{align}

    \paragraph{Converting the sequence $(B_2^j)_{j \geq 1}$ into the sequence $(B_3^j)_{j \geq 1}$.} We further transform $B^j_2$ into $B^j_3$ so that, for every $i \in \mathbb{N}^*$, either $B^j_3 \cap L_i = L_i$ and $B^j_3 \cap R_i = \emptyset$ or vice versa. This is conceptually simpler to achieve than the previous change: to get $B^j_3$, just transfer mass from $[1/2, 1]$ into the appropriate intervals. Formally,  
    \begin{itemize} 
    \item Let $z^j_2 := \mu(B^j_2 \cap [1/2, 1])$. 
    \item For each $i \in \mathbb{N}^*$, define $C^j_i := L_i$ if $\mu(B^j_2 \cap L_i) > 0$ and $C^j_i := R_i$ otherwise.
    \item Let $\widetilde{G}^j := \left[\frac{1}{2}, \frac{1}{2} + z^j_2 - \sum_{i=1}^{\infty} \mu\left(C^j_i \setminus (C^j_i \cap B^j_2)\right)\right]$.
    \end{itemize}
    Then define:
    \begin{align}
        B^j_3 := \bigl(\bigcup_{i=1}^{\infty} C^j_i\bigr) \cup \widetilde{G}^j \,.
    \end{align}
    We show the interval $\widetilde{G}^j$ is well-defined. By \eqref{eq:union_Li_R_i_overall_partition_zero_one_half} we have $\sum_{i=1}^{\infty} \mu(L_i \cup R_i) = \mu((0, 1/2)) = 1/2$. Moreover, by the definition in \eqref{eq:def_L_i_R_i}, we have $\mu(L_i) = \mu(R_i)$, so $\sum_{i=1}^{\infty} \mu(C^j_i) = 1/4$. Then:
    \begin{align}
        \mu(\widetilde{G}^j) & = z^j_2 - \sum_{i=1}^{\infty} \mu\left(C^j_i \setminus (C^j_i \cap B^j_2)\right) = \left[z^j_2 + \sum_{i=1}^{\infty} \mu(C^j_i \cap B^j_2)\right] - \sum_{i=1}^{\infty} \mu(C^j_i) = \mu(B^j_2) - \frac{1}{4}  \notag \\
        & = V_A(B_2^j) - \frac{1}{4} \explain{By definition of $V_A$} \notag \\
        & \geq V_A(B_1^j) - \frac{1}{4} \explain{By \eqref{eq:V-A-B2j-greater-than-V-A-B1j}} \\
        & \geq 0 \,. \explain{Since $V_A(B_1^j) = \mu(B^j_1) \geq 1/4$ by \eqref{eq:B-j-1-measure-greater-than-1-4}}
    \end{align}
    So $\widetilde{G}^j$ is well-defined. Furthermore, we get $\mu(B^j_3) = \mu(B^j_2)$, so $V_A(B^j_3) = V_A(B^j_2)$.
    
    We construct $B^j_3$ from $B^j_2$ by moving mass from $[1/2, 1]$ to $[0, 1/2]$. Since, for all $s \in \{0, 1\}^{\mathbb{N}^*}$, we have $v_B^{s}(x) \geq 2/3$ for all $x \in [0, 1/2]$ and $v_B^s(x)=2/3$ for all $x \in [1/2, 1]$, we get:
    \begin{align}
        V^s_B(B^j_3) \geq V^s_B(B^j_2) \qquad \text{for all } s \in \{0, 1\}^{\mathbb{N}^*} \,. \label{eq:B-j-3-value-greater-than-B-j-2-value}
    \end{align}
    Combining \eqref{eq:B-j-1-value-equals-B-j-2-value} with \eqref{eq:B-j-3-value-greater-than-B-j-2-value} yields:
    \begin{align}
        V^s_B(B^j_3) \geq V^s_B(B^j_1) \qquad \text{for all } s \in \{0, 1\}^{\mathbb{N}^*} \,. \label{eq:B-j-3-value-greater-than-B-j-1-value}
    \end{align}
    \paragraph{Estimating the value of $B_3^{j}$ for Alice and Bob.}
 We now have pieces in a convenient form. Define the following measures of $B^j_3$:
    \begin{itemize}
        \item For each $i \in \mathbb{N}^*$, let $\ell_i^j := \mu(B^j_3 \cap L_i)$ 
            and  $r_i^j := \mu(B^j_3 \cap R_i)$.
        \item Let $z^j_3 := \mu(B^j_3 \cap [1/2, 1])$.
    \end{itemize}
    By \eqref{eq:B-j-3-value-greater-than-B-j-1-value} and \eqref{eq:V-b-s-j-greater-than-one-half}, we have 
    \begin{align}  \label{eq:Bobs_value_for_B_3_j_according_to_s_j_is_at_least_one_half}
    V_B^{s^j}(B^j_3) \geq V_B^{s^j}(B^j_1) \geq 1/2 \,.
    \end{align}

Directly computing $V_B^{s^j}(B^j_3)$, we decompose the integral over the regions where Bob's density is constant. Recall that Bob's density $v_B^{s^j}$ is $2/3$ on $[1/2, 1]$, $2 - \frac{4}{3} s^j_i$ on $L_i$, and $\frac{2}{3} + \frac{4}{3} s^j_i$ on $R_i$. Substituting the measures $z_3^j$, $\ell_i^j$, and $r_i^j$, we have:
\begin{align}
    V_B^{s^j}(B^j_3) &= \frac{2}{3}z^j_3 + \sum_{i=1}^{\infty} \left[ \ell_i^j \left(2 - \frac{4}{3} s^j_i\right) + r_i^j \left(\frac{2}{3} + \frac{4}{3} s^j_i\right) \right]  \,. \label{eq:vb-b3-explicit-sum}
\end{align}
We regroup the terms in the sum of \eqref{eq:vb-b3-explicit-sum} to separate the parts that depend on $s_i^j$ from those that do not and then factor out $(\ell_i^j - r_i^j)$ to isolate the dependence on $s_i^j$, obtaining:
\begin{align}
   V_B^{s^j}(B^j_3) &=  \frac{2}{3}z^j_3 + \sum_{i=1}^{\infty} \left[ \frac{4}{3}(\ell_i^j + r_i^j) + \frac{2}{3}(\ell_i^j - r_i^j) - \frac{4}{3} s^j_i (\ell_i^j - r_i^j) \right] \notag \\
   & =  \frac{2}{3}z^j_3 + \frac{4}{3}\sum_{i=1}^{\infty} (\ell_i^j + r_i^j) + \frac{2}{3}\sum_{i=1}^{\infty} (\ell_i^j - r_i^j)(1 - 2s^j_i) \,. \label{eq:Bob_value_for_B_3_j_according_to_s_j}  
\end{align}
We relate this to Alice's valuation for the complement piece $\overline{B_3^j}$, which is:
\begin{align} \label{eq:Alice_value_for_complement_of_B_3_j_according_to_s_j}
    V_A(\overline{B_3^j}) &= 1 - z^j_3 - \sum_{i=1}^{\infty} (\ell_i^j + r_i^j)\,.
\end{align}
Re-arranging Bob's value in \eqref{eq:Bob_value_for_B_3_j_according_to_s_j} to factor out Alice's value for the complement in \eqref{eq:Alice_value_for_complement_of_B_3_j_according_to_s_j}, we obtain
\begin{align}
    V_B^{s^j}(B^j_3) 
    &= \frac{2}{3} - \frac{2}{3}V_A(\overline{B^j_3}) + \frac{2}{3} \sum_{i=1}^{\infty} (\ell_i^j + r_i^j) + \frac{2}{3}\sum_{i=1}^{\infty} (\ell_i^j - r_i^j)(1 - 2s^j_i) \,.
\end{align}
By \eqref{eq:Bobs_value_for_B_3_j_according_to_s_j_is_at_least_one_half}, we have $ V_B^{s^j}(B^j_3) \geq 1/2$, so solving for $V_A(\overline{B_3^j})$ gives
\begin{align} \label{eq:V-A-B-j-3-complement-less-than-sums-of-l-and-r}
    V_A(\overline{B^j_3}) &\leq \frac{1}{4} + \left(\sum_{i=1}^{\infty} \ell_i^j + r_i^j\right) + \left(\sum_{i=1}^{\infty} (\ell_i^j-r_i^j)(1-2s^j_i)\right) \,. 
\end{align}
    For each $i \in \mathbb{N}^*$, either $\ell_i^j=\mu(L_i)=\mu(R_i)$ and $r_i^j=0$ or $\ell_i^j=0$ and $r_i^j=\mu(R_i)=\mu(L_i)$. 
    Then  
    \begin{align} 
        \sum_{i=1}^{\infty} \ell_i^j + r_i^j = \sum_{i=1}^{\infty} \mu(L_i \cup R_i)/2 = \mu((0, 1/2))/2 = 1/4 \,. \label{eq:sum_l_i_j_and_r_i_j}
    \end{align}
    
   Moreover, we can rewrite $\ell_i^j$ and $r_i^j$ in terms of variables $a_i^j \in \{0, 1\}$ where $a_i^j = 0$ if $\ell_i^j = \mu(L_i)$ and $a_i^j = 1$ otherwise. Then  
   \begin{align}
       \sum_{i=1}^{\infty} (\ell_i^j-r_i^j)(1-2s^j_i) &= \sum_{i=1}^{\infty} \mu(L_i) \cdot (1-2a_i^j)(1-2s^j_i) \,. \label{eq:sum-l-minus-r-in-terms-of-a}
   \end{align}
   Rewriting \eqref{eq:V-A-B-j-3-complement-less-than-sums-of-l-and-r} using \eqref{eq:sum_l_i_j_and_r_i_j} and \eqref{eq:sum-l-minus-r-in-terms-of-a}, we get:
    \begin{align}
        V_A(\overline{B^j_3}) &\leq \frac{1}{4} + \frac{1}{4} + \sum_{i=1}^{\infty} \mu(L_i) \cdot (1-2a_i^j)(1-2s^j_i) \\
        &= \frac{1}{2} + \left(\sum_{i=1}^{\infty} \mu(L_i) \right) - 2\sum_{i=1}^{\infty} \mu(L_i) \cdot \left(\frac{1 - (1-2a_i^j)(1-2s^j_i)}{2}\right) \\
        &= \frac{3}{4} - 2\sum_{i=1}^{\infty} \frac{g(i)-g(i+1)}{2} \left(\frac{1 - (1-2a_i^j)(1-2s^j_i)}{2}\right) \explain{By definition of $L_i$ and $g$} \\
        &= \frac{3}{4} - \sum_{i=1}^{\infty} \left(g(i)-g(i+1)\right) \left(a_i^j \oplus s^j_i\right) \label{eq:measurable-va-b3-bounded-by-sj}
    \end{align}
    \paragraph{Choosing $s^*$.} 
    Define a sequence of infinite bit vectors $\widetilde{\vec{a}} = \widetilde{a}^1, \widetilde{a}^2, \ldots,$, such that $\widetilde{a}^j = a^j$ for $j \in \mathcal{J}$ and $\widetilde{a}^j $ is all zeroes for $j \not \in \mathcal{J}$.
By Lemma \ref{lem:far-bitstring} invoked with the sequence $\widetilde{\vec{a}}$, we obtain an infinite vector $s^*$ that has the properties guaranteed by the lemma.

If $(\overline{B^j_1}, B^j_1)$ is not $1/12$-Stackelberg for Alice when  Bob has  density $v_B^{s^*}$, then it is not $\varepsilon$-Stackelberg either. If it is $1/12$-Stackelberg, then $s^*$ is a valid choice for $s^j$, so the conclusions of \eqref{eq:measurable-va-b3-bounded-by-sj} hold with $s^j=s^*$. In other words, we have:
\begin{align}  
V_A(\overline{B^j_3}) &\leq \frac{3}{4} - \sum_{i=1}^{\infty} (g(i) - g(i+1))(a_i^j \oplus s^*_i) \,.\label{eq:V-A-B-3-j-in-terms-of-s-star}
\end{align}

    \paragraph{Bounding Alice's regret in \eqref{eq:measurable-va-b3-bounded-by-sj} using the properties of $s^*$.} Let $N$ a power of $2$ be  large enough so that Lemma \ref{lem:far-bitstring} applies. Since we give asymptotic bounds, we can analyze the regret when the horizon is $T \geq N$. 
    
    Let $n$ be the largest power of $2$ that is still at most $T$. We therefore have $T \leq 2n$. Moreover, using \eqref{eq:V-A-B-3-j-in-terms-of-s-star}, we can upper bound Alice's value for $\overline{B^j_3}$ as follows:
    \begin{align}
        V_A(\overline{B^j_3}) &\leq \frac{3}{4} - \sum_{i=1}^{\infty} (g(i) - g(i+1))(a_i^j \oplus s^*_i) \notag \\
        & \leq \frac{3}{4} - \sum_{i=1}^{10n} (g(i) - g(i+1))(a_i^j \oplus s^*_i) \,. \label{eq:V-A-B-3-j-complement-sum-up-to-10n}
    \end{align}
    The sequence $(B^i_1, \overline{B^i_1})_{i \geq 1}$ consists of all the partitions Alice could make across all possible time horizons. The partitions that could arise for horizon $T$ appear in this sequence at indices $2^T, \ldots, 2 \cdot 2^T - 1$, so we restrict our attention to $j \in \mathcal{J} \cap [2 \cdot 2^T]$. We then have $j \leq 2 \cdot 2^{2n}$. 
    
    Because $s^*$ was chosen by Lemma \ref{lem:far-bitstring}, at least $n/2$ terms of the sum in \eqref{eq:V-A-B-3-j-complement-sum-up-to-10n} are non-zero. Since $g$ is concave up, these non-zero terms are decreasing, so we can upper-bound \eqref{eq:V-A-B-3-j-complement-sum-up-to-10n} by assuming the non-zero terms  are exactly the last $n/2$ terms of the sum:
    \begin{align}
        V_A(\overline{B^j_3}) &\leq \frac{3}{4} - \sum_{i=10n-n/2}^{10n} (g(i) - g(i+1)) \leq \frac{3}{4} - (g(19n/2)-g(10n)) \,. 
    \end{align}
    Because Alice's Stackelberg value is $3/4$ when Bob has density $v_B^{s^*}$, the loss $g(19n/2) - g(10n)$ is a lower bound on Alice's Stackelberg regret in each round. To estimate it, we use Taylor's theorem to express $g(19n/2)$ in terms of $g(10n)$:
    \begin{align}
        g(19n/2) &= f(\log(19n/2)) 
        = f(\log(10n) + \log(19/20))  \notag \\ 
        &= f(\log(10n)) + f'(\log(10n))\log(19/20) + \frac{1}{2}f''(c) \log(19/20)^2 \,,
    \end{align}
    for some $c \in [\log(19n/2), \log(10n)]$. Then, since $f''(x) > 0$ for all $x$:
    \begin{align}
        g(19n/2) - g(10n) &= f'(\log(10n))\log(19/20) + \frac{1}{2}f''(c) \log(19/20)^2 \\
        &> f'(\log(10n))\log(19/20) = |f'(\log(10n))| \cdot \log(20/19) \\
        &\geq |f'(\log(10T))| \cdot \log(20/19) \,.
    \end{align}
    Where the last line follows from $n \leq T$ and $|f'|$ being decreasing. Therefore, reintroducing the factor of $4$ from the $B^j_1$-$B^j_2$ transformation, Alice's Stackelberg regret is at least $|f'(\log(10T))| \cdot \log(20/19)/4$ per round. Setting $f(x) = 1/(x+2)$ and summing across $T$ rounds, her total regret is $\Omega(T / (\log T)^2)$, which completes the proof.
\end{proof}

\begin{lemma}
    \label{lem:far-bitstring}
    Consider an arbitrary sequence  $\vec{a} = a^1, a^2, \ldots \in \{0, 1\}^{\mathbb{N}^*}$ of infinite bit vectors. There exist a constant $n_0 > 0$ independent of $\vec{a}$ and an infinite bit vector $s = s(\vec{a})\in \{0, 1\}^{\mathbb{N}^*}$ such that:
    \begin{itemize}
        \item for every $n \geq n_0$ that is a power of $2$ and all $i \in [2 \cdot 2^{2n}]$, the Hamming distance between the first $10n$ bits of $s$ and the first $10n$ bits of $a^i$ is at least $n/2$.
    \end{itemize}
\end{lemma}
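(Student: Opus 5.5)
The plan is to build $s$ block by block, with one disjoint block of coordinates for each power of two $n \geq n_0$. Only that block is used to certify the Hamming-distance requirement for $n$. For a power of two $n$, let the block be $I_n = \{5n+1, \ldots, 10n\}$, which has $5n$ coordinates. For consecutive powers of two these blocks are disjoint and tile $\{5n_0+1, 5n_0+2, \ldots\}$, since the block for $2n$ is $\{10n+1,\ldots,20n\}$. So the bits of $s$ can be chosen independently on each block, and the bits with index at most $5n_0$ can be set arbitrarily (say to $0$). The Hamming distance between the first $10n$ bits of $s$ and of $a^i$ is at least the distance restricted to $I_n$. It therefore suffices to choose $s|_{I_n}$ so that it differs from $a^i|_{I_n}$ in at least $n/2$ positions, for every $i \in [2\cdot 2^{2n}]$.

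For a fixed $n$ this is a counting (union bound) argument over the $2^{5n}$ possible assignments of $s|_{I_n}$. For a single $i$, the number of assignments within distance less than $n/2$ of $a^i|_{I_n}$ is $\sum_{d < n/2} \binom{5n}{d}$. By the standard entropy bound this is at most $2^{5n H(1/10)}$, where $H$ is the binary entropy. Since $H(1/10) \approx 0.469$, this is at most $2^{2.35 n}$. Taking a union over the at most $2^{2n+1}$ indices $i$, the number of bad assignments is at most $2^{2n+1+2.35n} = 2^{4.35n + 1}$. This is strictly less than $2^{5n}$ once $n$ exceeds a small absolute constant. That constant is $n_0$, and it clearly does not depend on $\vec{a}$. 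Hence a good assignment of $I_n$ exists for each power of two $n \geq n_0$.

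Concatenating the good block assignments over all powers of two $n\ge n_0$ defines the infinite vector $s$. Because the blocks are disjoint, no limiting or compactness argument is needed. Each requirement for $n$ is met by its own block, independently of the other blocks.

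The only delicate step is the numerical one. The block must be long enough that the binomial tail times the number of vectors, $2^{2n+1}$, stays below $2^{|I_n|}$. At the same time the block must fit inside the first $10n$ coordinates and be disjoint from the blocks of other powers of two. Taking $|I_n| = 5n$ with the threshold $n/2$ (a $1/10$ fraction) gives exponent $5(1-H(1/10)) \approx 2.65 > 2$, so the margin is comfortable. I would verify this with the explicit bound $\sum_{d\le \lambda m}\binom{m}{d}\le 2^{mH(\lambda)}$ for $\lambda\le 1/2$.
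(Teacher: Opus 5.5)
Your proposal is correct and follows essentially the same route as the paper. Both use disjoint dyadic blocks of $5n$ coordinates inside the first $10n$, choose each block by a union bound over the $2\cdot 2^{2n}$ vectors $a^i$, and set the leading bits arbitrarily. The only difference is the binomial tail estimate. You use the entropy bound $2^{5nH(1/10)}$, while the paper uses $(n/2+1)\binom{5n}{\lfloor n/2\rfloor}\le (n/2+1)(10e)^{n/2}$ and then checks that $4\sqrt{10e}<2^5$; both bounds suffice.
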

\begin{proof}
    We  choose the bits of $s$ in dyadic intervals. Let $n$ be a large arbitrary power of $2$; we will later choose how large $n$ needs to be.
    Let $I_n = \{5n, 5n+1, \ldots, 10n-1\}$. 
For a vector $v$, let $v|_{I_n}$ denote the restriction of $v$ to the indices in $I_n$.
    Let $|\cdot|$ denote the Hamming weight of a bit vector. We will choose $s|_{I_n}$ with the stronger guarantee that 
    \begin{align} 
    \Bigl|s|_{I_n} \oplus a^i|_{I_n}\Bigr| \geq n/2 \; \; \forall i \in [2 \cdot 2^{2n}]\,.
    \end{align} 
    The choices of $s|_{I_n}$ which fail this condition for any particular $a^i$ are those where fewer than $n/2$ of the bits are flipped between them. These can be counted as follows:
    \begin{align}
        \Bigl|\left\{v \in \{0, 1\}^{5n} \mid \bigl|a^i|_{I_n} \oplus v\bigr| < n/2\right\}\Bigr| &\leq \sum_{j=0}^{\lfloor n/2 \rfloor} \binom{5n}{j} \notag \\
        & 
        \leq \sum_{j=0}^{\lfloor n/2\rfloor} \binom{5n}{\lfloor n/2 \rfloor} \notag \\
        & \leq \left(\frac{n}{2}+1\right) \cdot (10e)^{n/2} \explain{Since $\lfloor n/2 \rfloor \leq n/2$ and $\binom{n}{k} \leq (ne/k)^k$.}
    \end{align}
   Then, by a union bound:
    \begin{align} \label{eq:result_of_union_bound_on_v_and_i}
        \Bigl|\left\{v \in \{0, 1\}^{5n} \mid \exists i \in [2 \cdot 2^{2n}] \mbox{ s.t. } \bigl|a^i|_{I_n} \oplus v\bigr| < n/2 \right\}\Bigr| &\leq 2 \cdot 2^{2n} \left(\frac{n}{2}+1\right) \cdot (10e)^{n/2} \,. 
    \end{align}
    Consider the right-hand side of \eqref{eq:result_of_union_bound_on_v_and_i}. Taking the $n$-th root and letting $n$ approach infinity, we get:
    \begin{align}
        \lim_{n \to \infty} \left(2 \cdot 2^{2n} \left(\frac{n}{2}+1\right) \cdot (10e)^{n/2}\right)^{1/n} &= \lim_{n \to \infty} (n+2)^{1/n} \cdot 4\sqrt{10e} = 4\sqrt{10e} < 2^5 \,. 
    \end{align}
    In particular, there exists a constant $n_0 > 0$ such that for $n \geq n_0$, we have:
    \begin{align}
        \Bigl|\left\{v \in \{0, 1\}^{5n} \mid \exists i \in [2 \cdot 2^n] \mbox{ s.t. } \bigl|a^i|_{I_n} \oplus v\bigr| < n/2 \right\}\Bigr| &< 2^{5n}
    \end{align}
    But $|\{0, 1\}^{5n}| = 2^{5n}$, so there must exist $v^* \in \{0, 1\}^{5n}$ such that  $\bigl|a^i|_{I_n} \oplus v^*\bigr| \geq n/2$ for all $i \in [2 \cdot 2^n]$. 
    
    Let $s|_{I_n} = v^*$. This assignment does not overlap with the blocks of $s$ reserved for other values of $n$, so we can consistently choose the bits of $s$ to the right of the location $5n_0$. The bits in locations situated left of $5n_0$ can be chosen arbitrarily, as they can only increase Hamming distances from $s$. This completes the proof.
\end{proof}

\section{The $k$-Cut Game} \label{app:k-cut}

In the following subsections we give full proofs of the propositions required for Theorems \ref{thm:myopic_Bob}, \ref{thm:2-cut-non-myopic}, \ref{thm:k-cut-non-myopic-bounds}, and \ref{thm:unknown-alpha-bounds}.

\subsection{Upper Bounds for Myopic Bob}

We start with the following helper lemma.

\binarySearchLemma*

\begin{proof}
    We can assume that $x_{k-1} < 1$; otherwise, then trivially no such $x_k^*$ can exist.
    
    For a fixed  $x_k$, let $Z = (Z_1, Z_2)$ denote the alternating partition induced by cut points $x_1, \ldots, x_k$. For convenience, we let $Y_1 \in \{Z_1, Z_2\}$ denote the piece that grows when $x_k$ moves right.
    
    To find if the required $x_k^*$ exists, Alice first cuts at points 
    $x_1, \ldots, x_{k-1}, 1$. This partition maximizes the length of $Y_1$. If Bob nevertheless prefers $Y_2$, no such $x_k^*$ exists. Similarly, Alice can make a second cut at $x_1, \ldots, x_{k-1}, x_k=x_{k-1}$ to maximize the size of $Y_2$. If Bob nevertheless prefers $Y_1$, no such $x_k^*$ exists.

    With those edge cases handled, Alice can now perform binary search to identify the point where Bob switches from preferring $Y_1$ to preferring $Y_2$. The initial feasible interval is $[x_{k-1}, 1]$. In each of the next rounds, she sets $x_k$ to the midpoint of the feasible interval and makes the alternating partition. If Bob prefers $Y_1$, she can update the right endpoint. If Bob prefers $Y_2$, she can update the left endpoint. She continues this process until the feasible interval has length at most $2\varepsilon$, and then returns its midpoint $\widetilde{x}_k$. 
    Since $x_k^*$ is located in the last feasible interval found, we have $|x_k^* - \widetilde{x}_k| \leq \varepsilon$ as required and 
    the process takes $O(\log 1/\varepsilon)$ rounds. 
\end{proof}

\twoCutMyopicAlgorithm*

\begin{proof}
Alice's strategy can be seen as the discretization of a moving-knife procedure that we describe next; afterwards, we explain how to discretize it. 
The moving knife procedure is essentially Austin's procedure and works as follows:
\begin{itemize}
\item Bob is asked to cut the cake in half according to his valuation using  the query  $Cut_{Bob}(1/2)$. Bob answers by reporting his midpoint $m_B$.
\item A referee slides a knife across the cake, from $0$ to $m_B$. For each position $x$ of the referee knife, Bob places his own knife at a point $y(x) > x$ such that $V_B([x, y(x)]) = 1/2$. Then Alice records her value for the better of the two pieces $[x, y(x)]$ and $[0,1] \setminus [x, y(x)]$, defined as 
\begin{align} \alpha(x) = \max\Bigl(V_A([x, y(x)]), V_A([0,1] \setminus [x, y(x)])\Bigr)\,.
\end{align} 
\item Alice cuts at points $x^*$ and $y(x^*)$ such that $x^*$  maximizes $\alpha(x^*)$.
\end{itemize}

This procedure finds a Stackelberg cut because the pieces  $[x, y(x)]$ represent all the intervals of value $1/2$ to Bob. Therefore, each Stackelberg cut gives either Alice or Bob one of these pieces, and Alice chooses whichever one is best for her.

Now we give a discretization of the moving knife procedure.
    Let $ \varepsilon >0 $ and let $\eta = \Delta \varepsilon$. Alice's strategy is the following:
    \begin{enumerate}[1.]
        \item For $i=0, 1,  \ldots$, let $x_i = \eta \cdot i$ and use Lemma \ref{lem:binary-search} to locate $y_i \in [0,1]$ for which Bob  values the pieces $[x_i, y_i]$ and $[0, x_i] \cup [y_i, 1]$ equally up to $\varepsilon$ error. Lemma \ref{lem:binary-search} will succeed if and only if $V_B([0, x_i]) < 1/2$, so once it fails we stop. Let $\widetilde{N}$ be the highest index that succeeded.
         
        \item Let $i^*$ be the index such that the cut at $x_{i^*}$ and $y_{i^*}$ contains the side of greatest value to Alice, i.e. 
        \begin{align} i^* = \argmax_{i \in \{0, \ldots, \widetilde{N}\}} \max \bigl(V_A([x_i, y_i]), V_A([0, x_i] \cup [y_i, 1]) \bigr) \,.
        \end{align}
        \item Set $\overline{x}$ and $\overline{y}$ to the values of $x_{i^*}$ and $y_{i^*}$,
        but shifted  by a total of $\varepsilon \Delta/\delta$ in the directions that increase the size of the side Alice values less. Formally: 
        \begin{itemize}
            \item If Alice prefers $[x_{i^*}, y_{i^*}]$ to $[0, x_{i^*}] \cup [y_{i^*}, 1]$, then  choose $\overline{x} \geq x_{i^*}$ and $\overline{y} \leq y_{i^*}$ such that $(\overline{x} - x_{i^*}) + (y_{i^*} - \overline{y}) = \varepsilon \Delta/\delta$.
            \item Else, choose $\overline{x} \leq x_{i^*}$ and $\overline{y} \geq y_{i^*}$ such that $(x_{i^*} - \overline{x}) + (\overline{y} - y_{i^*}) = \varepsilon \Delta/\delta$.
        \end{itemize}
        \item For all remaining rounds, cut at $\overline{x}$ and $\overline{y}$.
    \end{enumerate}
    
    Alice's overall regret will come from three parts: (i) the loss incurred in step 1, (ii) the loss from $\overline{x}$ and $\overline{y}$ not quite being a Stackelberg cut, and (iii) the potential loss from Bob taking the  side preferred by Alice in the remaining rounds. We estimate each part.
    \begin{enumerate}[(i)]
        \item Each search following Lemma \ref{lem:binary-search} can be done in $O(\log 1/\varepsilon)$ rounds. Finding all the $y_i$ therefore takes $O\left(\frac{\log 1/\varepsilon}{\eta}\right)$ rounds.
        \item Let $x^*$ and $y^*$ be cut points that achieve Alice's Stackelberg value. Thus Bob equally values the pieces $[x^*, y^*]$ and $[0, x^*] \cup [y^*, 1]$. Then there exists $i \in \{0, \ldots, \widetilde{N}\}$ such that $|x_i - x^*| \leq \eta$, so $|y_i - y^*| \leq \eta/\delta + \varepsilon$. Cutting at $x_i$ and $y_i$ then gets Alice close to her Stackelberg value, so cutting at $x_{i^*}$ and $y_{i^*}$ is at least that good. More precisely:
        \begin{align}
            u_A^*(2)&\leq  \left(\eta + \frac{\eta}{\delta} + \varepsilon\right) \Delta  + \max \bigl(V_A([x_i, y_i]), V_A([0, x_i] \cup [y_i, 1]) \bigr) \notag \\
            &\leq \left(\eta + \frac{\eta}{\delta} + \varepsilon\right) \Delta  + \max \bigl(V_A([x_{i^*}, y_{i^*}]), V_A([0, x_{i^*}] \cup [y_{i^*}, 1]) \bigr) \,.
        \end{align}
        The shifting of $\varepsilon \Delta/\delta$ in Bob's favor adds $O(\varepsilon \Delta^2 / \delta)$ loss to Alice, so Alice's loss compared to her Stackelberg value over the $T$ rounds is $O(T \eta \Delta / \delta)$.
        \item Shifting the cut points in Bob's favor ensures Bob will always take the side Alice wants him to. Bob values each side of the cut at $x_{i^*}$ and $y_{i^*}$ as within $\Delta \varepsilon$ of $1/2$. Therefore, shifting the cut points to $\overline{x}$ and $\overline{y}$ to make one side bigger ensures he will take it, so Alice experiences no regret from this source.
    \end{enumerate}
    Adding the nonzero terms from (i) and (ii) together, we get a regret bound of 
        $O\left(\frac{\log 1/\varepsilon}{\eta} + \frac{T\eta \Delta}{\delta}\right) \,.$
    Setting  $\varepsilon = \sqrt{(\log T)/T}$ yields  a regret bound of $O\left(\sqrt{T \log T}\right)$.
\end{proof}

\begin{restatable}{lemma}{knownBobIntervalLemma}\label{lem:known-bob-intervals-gives-stackelberg-cut}
    Let $n \in \mathbb{N}$ and let $I_1, \ldots, I_n$ be disjoint sub-intervals of $[0, 1]$. Let $U = [0, 1] \setminus \bigcup_{i=0}^n I_i$ be the cake uncovered by the sub-intervals. Suppose there exist $\eta \in (0, 1)$,  $\varepsilon \in (0, \eta^2/2)$, and $r \in \mathbb{N}^*$ such that $|V_B(I_i) - \eta| \leq \varepsilon$ for all $i \in [n]$ and $V_B(U) \leq r(\eta + \varepsilon)$. Then there exists a $(k+2r)(\eta + \varepsilon)\Delta/\delta$-Stackelberg $k$-cut for Alice that does not cut in the interior of any $I_i$ and where Bob prefers the piece containing more of the sub-intervals.
\end{restatable}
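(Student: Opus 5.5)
Start from an exact Stackelberg $k$-cut and round its cut points to sub-interval boundaries, moving each cut in the direction that favors Bob's piece. Let $Z^*=(Z^*_A,Z^*_B)$ be an optimal $k$-cut with cut points $x_1^*\le\cdots\le x_k^*$, $V_B(Z^*_B)=1/2$ and $V_A(Z^*_A)=u_A^*(k)$.

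\textbf{Step 1 (rounding).} Each cut point $x_j^*$ lying in the interior of some $I_i$ is moved to whichever endpoint of $I_i$ enlarges $Z^*_B$. Cuts lying in $U$ or on boundaries stay put. Moving one cut changes the pieces only inside a single $I_i$, so Bob's piece gains at most $V_B(I_i)\le\eta+\varepsilon$ from each moved cut. The piece also only grows, so Alice's loss is the Lebesgue measure transferred, which is at most $V_B(I_i)/\delta\le(\eta+\varepsilon)/\delta$ per cut, hence at most $(\eta+\varepsilon)\Delta/\delta$ in Alice's value. Over $k$ cuts Alice loses at most $k(\eta+\varepsilon)\Delta/\delta$, and $V_B(Z_B)\ge 1/2$ still holds. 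If two moved cuts collide, the alternating partition simply has fewer effective cuts, so this is harmless.

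\textbf{Step 2 (``more sub-intervals'' condition).} We must make sure Bob's piece contains more of the $I_i$'s than Alice's piece. The rounded partition contains every $I_i$ wholly in one piece. Let $p$ be the number of $I_i$ in $Z_B$ and $q$ the number in $Z_A$, so $p+q=n$. Bob's value satisfies $V_B(Z_B)\le p(\eta+\varepsilon)+V_B(U)$, and $V_B(Z_A)\ge q(\eta-\varepsilon)$. If the counts go the wrong way (or we want strict preference), we enlarge $Z_B$ further. We can shift a boundary cut of $Z_B$ across adjacent whole sub-intervals, or across pieces of $U$, transferring up to $2r$ more intervals' worth of Bob-value. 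This costs Alice at most $2r(\eta+\varepsilon)\Delta/\delta$, giving the total $(k+2r)(\eta+\varepsilon)\Delta/\delta$.

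The reason $2r$ suffices is as follows. $Z_B$ is worth at least $1/2$ and $Z_A$ at least $q(\eta-\varepsilon)$, with $1=V_B(Z_A)+V_B(Z_B)$. Comparing the two sides, the count imbalance $q-p$ is controlled by the uncovered mass $V_B(U)\le r(\eta+\varepsilon)$ together with the error terms $n\varepsilon$. The condition $\varepsilon<\eta^2/2$ together with $n\le 1/(\eta-\varepsilon)$ makes $n\varepsilon<\eta$, so the accumulated errors are worth less than one interval. Hence $q-p\le 2r$ up to a rounding of one. Moving roughly $r$ (at most $2r$) intervals from Alice's side to Bob's side then restores $p>q$, and $V_B(Z_B)>1/2$ keeps Bob preferring that piece.

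\textbf{Main obstacle.} The hard part is Step 2: doing the transfer with only $k$ cuts, without introducing new cut points. The plan is to slide one existing cut point (say the one adjacent to the largest run of $Z_A$) monotonically, passing over whole $I_i$'s and the $U$-gaps between them, until the count condition holds. We then check that the sliding never needs more than $2r$ intervals' worth of Bob-value. If a run of $Z_A$ is exhausted before that, we continue with the next cut. Each interval passed costs Alice at most $(\eta+\varepsilon)\Delta/\delta$, and the $U$-portions passed are covered by the $r(\eta+\varepsilon)$ budget. This is why the bound carries $2r$ rather than $r$.
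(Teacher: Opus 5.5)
Your overall route is the paper's. You round the Stackelberg cuts outward to sub-interval endpoints, which costs at most $k(\eta+\varepsilon)$ of Bob-value. You then move whole sub-intervals from Alice's side to Bob's, charge all uncovered cake to the single budget $V_B(U)\le r(\eta+\varepsilon)$, and convert Bob-value to Alice's loss via $\Delta/\delta$. Your sliding argument for doing the transfer with the existing $k$ cuts is sound, and it is more explicit than the paper, which simply asserts that $r$ sub-intervals can be shifted.

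The gap is in the counting step, which is exactly where the constant $(k+2r)$ is decided.

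\begin{itemize}
\item Since $V_B(U)$ already uses $r(\eta+\varepsilon)$ of the budget, you may pass over at most $r$ whole sub-intervals. That requires $q-p\le 2r-1$.
\item You only claim ``$q-p\le 2r$ up to a rounding of one'' and ``at most $2r$'' shifted intervals. Shifting $2r$ intervals gives $(k+3r)$, and $q-p\le 2r$ alone gives at best $(k+2r+1)$.
\item Your estimate itself is too weak. Comparing $V_B(Z_B)\ge V_B(Z_A)$ and absorbing $n\varepsilon<\eta$ gives $(q-p)\eta\le n\varepsilon+r(\eta+\varepsilon)$, so $q-p<1+r+r\varepsilon/\eta$. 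For $r=1$ this does not rule out $q-p=2$, which would need two shifts.
\end{itemize}

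The fix is to pivot on $1/2$ separately for each piece, as the paper does. From $V_B(Z_B)\ge 1/2$ you get $p\ge \frac{1/2-V_B(U)}{\eta+\varepsilon}\ge \frac{1}{2(\eta+\varepsilon)}-r$. From $V_B(Z_A)\le 1/2$ you get $q\le \frac{1}{2(\eta-\varepsilon)}$. Subtracting,
\[
q-p\;\le\;\frac{\varepsilon}{\eta^2-\varepsilon^2}+r\;<\;1+r,
\]
where the last step uses $\varepsilon<\eta^2/2$ and $\eta<1$.

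Hence $q-p\le r$, so $\lfloor r/2\rfloor+1\le r$ shifts suffice. The total Bob-value moved is then at most $k(\eta+\varepsilon)+r(\eta+\varepsilon)+V_B(U)\le (k+2r)(\eta+\varepsilon)$, which gives the stated bound.
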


\begin{proof}
    Let $y_1 \leq \ldots \leq y_k$ be cut points that achieve Alice's Stackelberg value. Let $Z_A$ be the piece of the alternating partition at $y_1, \ldots, y_k$ that Alice prefers and let $Z_B$ be the other piece. We will move the cut points to grow $Z_B$ and ensure the resulting partition satisfies the lemma conditions.
    
    We first must ensure no cut point is in the interior of a sub-interval. Let $C \subseteq [0, 1]$ consist of the endpoints of all the sub-intervals along with all of $U$. Let $y'_1, \ldots, y'_k$ be the $y_i$, but each rounded to the nearest point in $C$ in the direction that weakly grows $Z_B$. Let $Z'_B$ and $Z'_A$ be the resulting pieces of the alternating partition at $y'_1, \ldots, y'_k$. 

    We now must ensure that $Z'_B$ contains more of the sub-intervals. Let $n_B$ be the number of sub-intervals in $Z'_B$ and let $n_A$ be the number of sub-intervals in $Z'_A$. In order to satisfy Bob's preferences while each sub-interval has value $\eta \pm \varepsilon$, we must have:
\begin{itemize}
\item $n_B \geq \frac{\frac{1}{2} - V_B(U)}{\eta + \varepsilon} \geq \frac{\frac{1}{2} - r(\eta + \varepsilon)}{\eta + \varepsilon} = \frac{1}{2(\eta + \varepsilon)} - r$;  
\item  $n_A \leq \frac{\frac{1}{2}}{\eta - \varepsilon} = \frac{1}{2(\eta - \varepsilon)}$.
    \end{itemize}
    Since $\max(\varepsilon, \varepsilon^2) < \eta^2/2$, we can bound their difference as follows:
    \begin{align}
        n_A - n_B &\leq \frac{1}{2(\eta - \varepsilon)} - \frac{1}{2(\eta + \varepsilon)} + r = \frac{\varepsilon}{\eta^2 - \varepsilon^2} + r < 1 + r \,. 
    \end{align}
    Since $r \geq 1$, shifting an extra $r$ sub-intervals from $Z'_A$ to $Z'_B$ is enough to make $Z'_B$ have more sub-intervals. Let $Z^*_A$ and $Z^*_B$ be the resulting pieces after this shift. Compared to the original $Z_B$ and $Z_A$, we have now given at most $k+r$ sub-intervals and all of $U$ to Bob:
    \begin{align}
        V_B(Z^*_B \setminus Z_B) &= V_B(Z^*_B \setminus Z'_B) + V_B(Z'_B \setminus Z_B) \\
        &\leq r(\eta + \varepsilon) + V_B((Z^*_B \setminus Z'_B) \cap U) + k(\eta + \varepsilon) + V_B((Z'_B \setminus Z_B) \cap U) \\
        &\leq (k+r)(\eta + \varepsilon) + V_B(U) \\
        &\leq (k+2r)(\eta + \varepsilon) \,.
    \end{align}
    Since Alice's value density is within a factor of $\Delta/\delta$ of Bob's, her loss compared to $Z_A$ satisfies:
    \begin{align}
        V_A(Z_A \setminus Z^*_A) &\leq \frac{\Delta}{\delta} V_B(Z^*_B \setminus Z_B) \leq (k+2r)(\eta + \varepsilon)\frac{\Delta}{\delta}\,.
    \end{align}
Thus the partition $(Z^*_A, Z^*_B)$ is $(k+2r)(\eta + \varepsilon)\Delta/\delta$-Stackelberg.
\end{proof}

\kCutMyopicUpper*

\begin{proof}
    Alice's strategy has three parts:
    \begin{itemize}
        \item Find Bob's midpoint $m_B$.
        \item Use the knowledge of $m_B$ to construct cuts that allow her to find many small intervals of (nearly) equal value to Bob, and map out Bob's value density with them.
        \item Compute the optimal cut given this approximate information about Bob's value density and cut there for the rest of the rounds.
    \end{itemize}

    Specifically, given parameters $\eta = (Tk)^{-1/2}$ and $\varepsilon = \delta^2 \eta^2 / 2$, Alice's strategy is as follows.

    \begin{enumerate}
        \item Use Lemma \ref{lem:binary-search} to get an approximation $\widetilde{m}_B$ of Bob's midpoint $m_B$ to within $\varepsilon$.
        \item Set $x_0 = \widetilde{m}_B - \eta$ and $x_1 = \widetilde{m}_B$. Set $i=1$. Repeat the following until the loop exits:
        \begin{enumerate}
            \item Use Lemma \ref{lem:binary-search} to find the $z$ for which $V_B((0, \widetilde{m}_B - \eta) \cup (x_i, z))=1/2$, to within $\varepsilon$. If the search fails, exit the loop.
            \item Set $x_{i+1} = z$ and increment $i$.
        \end{enumerate}
        \item Reset $i=0$. Repeat the following until the loop exits:
        \begin{enumerate}
            \item Use Lemma \ref{lem:binary-search} to find the $z$ for which $V_B((z, x_i) \cup (x_2, 1))=1/2$, to within $\varepsilon$. This can be accomplished by treating the cake as reflected left-right. If the search fails, exit the loop.
            \item Set $x_{i-1} = z$ and decrement $i$.
        \end{enumerate}
        \item Let $-\ell$ be the index of the last negative-index $x$ and let $r$ be the index of the last positive-index $x$. Let $a^*_1, \ldots, a^*_k$ be cut points chosen from the set $[0, x_{-\ell}] \cup \{x_{-\ell}, \ldots, x_r\} \cup [x_r, 1]$ in order to maximize Alice's value of the piece containing fewer of the intervals $(x_i, x_{i+1})$ for $i \in \{-\ell, \ldots, r-1\}$.
        \item Let $a_1, \ldots, a_k$ be cut points that deviate from $a_1^*, \ldots, a_k^*$ by exactly $\frac{3\Delta\varepsilon}{\delta^2 \eta} + \Delta \eta/\delta$ distance in ways that make Alice's less-preferred piece bigger. Cut there for the remaining rounds.
    \end{enumerate}
    We will eventually use Lemma \ref{lem:known-bob-intervals-gives-stackelberg-cut} on the intervals $(x_i, x_{i+1})$ to see how good Alice's cut is. To do so, we need to bound how similarly Bob values these intervals. Our reference point will be $(x_0, x_1) = (\widetilde{m}_B - \eta, \widetilde{m}_B)$. Each iteration of the loop in step 2 aims to find a new interval $(x_i, x_{i+1})$ of similar value. To see how well it does so, fix $i \in \{1, \ldots, r - 1\}$ and let $z^*$ be the exact solution Lemma \ref{lem:binary-search} aimed to find, i.e. such that $V_B((0, x_0) \cup (x_i, z^*))=1/2$. We then have $|x_{i+1} - z^*| \leq \varepsilon$. By the choice of $x_0$ and $x_1$, we also have:
    \begin{align}
        \frac{1}{2} &= V_B((0, x_0) \cup (x_i, z^*)) = V_B((0, x_1)) - V_B((x_0, x_1)) + V_B((x_i, z^*)) \\
        &= V_B((0, \widetilde{m}_B)) + \left(V_B((x_i, z^*)) - V_B((x_0, x_1))\right) \,.
    \end{align}
    By the construction of $\widetilde{m}_B$ in step 1, we also know that $|m_B - \widetilde{m}_B| \leq \varepsilon$, so $|V_B((0, \widetilde{m}_B)) - 1/2| \leq \Delta \varepsilon$. By the triangle inequality, we therefore have:
    \begin{align}
        \left|V_B((x_i, x_{i+1})) - V_B((x_0, x_1))\right| \leq& \left|V_B((x_i, x_{i+1})) - V_B((x_i, z^*))\right| + \left|V_B((x_i, z^*)) - V_B((x_0, x_1))\right| \notag \\
        \leq& \Delta \varepsilon + \Delta \varepsilon = 2\Delta \varepsilon \,.
    \end{align}
    We also know that the search failed when trying to find $x_{r+1}$, so $V_B((0, x_0) \cup (x_r, 1)) \leq 1/2$. Therefore:
    \begin{align}
        V_B((x_r, 1)) &\leq \frac{1}{2} - V_B((0, x_0)) = \frac{1}{2} - V_B((0, \widetilde{m}_B)) + V_B((x_0, x_1)) \leq V_B((x_0, x_1)) + \Delta \varepsilon \,. \label{eq:vb-xr-1-less-than-vb-x0-x1-plus-delta-epsilon}
    \end{align}
    Moving to to the negative $x$'s, fix $i \in \{-\ell+1, \ldots, 0\}$ and let $z^*$ be the exact solution Lemma \ref{lem:binary-search} aimed to find, i.e. such that $V_B((z^*, x_i) \cup (x_2, 1)) = 1/2$. We then have $|x_{i-1} - z^*| \leq \varepsilon$. By the choice of $x_1$ and $x_2$, we also have:
    \begin{align}
        \frac{1}{2} &= V_B((z^*, x_i) \cup (x_2, 1)) = V_B((x_1, 1)) - V_B((x_1, x_2)) + V_B((z^*, x_i)) \notag \\
        &= V_B((0, \widetilde{m}_B)) + \left(V_B((z^*, x_i)) - V_B((x_1, x_2))\right) \,. 
    \end{align}
   By the construction of $\widetilde{m}_B$, we have $|V_B((\widetilde{m}_B, 1)) - 1/2| \leq \Delta \varepsilon$. By the analysis of the positive $x$'s, we have $|V_B((x_0, x_1)) - V_B((x_1, x_2))| \leq 2\Delta \varepsilon$. Then by the triangle inequality we have:
    \begin{align}
        \left|V_B((x_{i-1}, x_i)) - V_B((x_0, x_1))\right| &\leq \Delta \varepsilon + 2\Delta \varepsilon = 3\Delta \varepsilon \,.
    \end{align}
    Similarly to $(x_r, 1)$, since the search failed when trying to find $x_{-\ell-1}$, it must be the case that $V_B((0, x_{-\ell}) \cup (x_2, 1)) \leq 1/2$. Therefore:
    \begin{align}
        V_B((0, x_{-\ell})) &\leq \frac{1}{2} - V_B((x_2, 1)) = \frac{1}{2} - V_B((\widetilde{m}_B, 1)) + V_B((x_1, x_2)) \leq V_B((x_0, x_1)) + 3\Delta \varepsilon \,. \label{eq:vb-0-xminusl-less-than-3-delta-epsilon}
    \end{align}
    Combining \eqref{eq:vb-xr-1-less-than-vb-x0-x1-plus-delta-epsilon} and \eqref{eq:vb-0-xminusl-less-than-3-delta-epsilon}, the cake uncovered by any of the intervals has value at most $2V_B((x_0, x_1)) + 4\Delta \varepsilon$ to Bob. Invoking Lemma \ref{lem:known-bob-intervals-gives-stackelberg-cut} with interval values of $V_B((x_0, x_1)) \pm 3\Delta\varepsilon$ and uncovered cake of value at most $2(V_B((x_0, x_1)) + 3\Delta\varepsilon)$, we get that there exists a $(k+4)(V_B((x_0, x_1))+3\Delta\varepsilon)\Delta/\delta$-Stackelberg cut among those that $a_1^*, \ldots, a_k^*$ was chosen from. Therefore, the cut $a_1^*, \ldots, a_k^*$ is at least that good. Factoring in the extra transfer to Bob in step 5 and using $V_B((x_0, x_1)) \leq \Delta \eta$, we get that the cut $a_1, \ldots, a_k$ is $O(k\eta + \varepsilon/\eta) = O(\sqrt{k/T})$-Stackelberg if Bob takes the piece Alice expects him to.

    To show Bob will take his expected piece, let $n = r+\ell$ be the number of sub-intervals. Using the upper bounds for Bob's value of each part of the cake, we have $(n+2)(V_B((x_0, x_1))+3\Delta\varepsilon) \geq 1$, so $n \geq 1/(V_B((x_0, x_1)) + 3\Delta\varepsilon) - 2$. Under the cut $a_1^*, \ldots, a_k^*$, Alice expects Bob to take the piece containing at least $n/2$ sub-intervals, each of which has value at least $V((x_0, x_1)) - 3\Delta\varepsilon$ to Bob. Then:
    \begin{align}
        \frac{1}{2} - \frac{n}{2}(V_B((x_0, x_1)) - 3\Delta\varepsilon) &\leq \frac{1}{2} - \frac{1}{2} \left( \frac{V_B((x_0, x_1)) - 3\Delta\varepsilon}{V_B((x_0, x_1)) + 3\Delta\varepsilon} + 2(V_B((x_0, x_1)) - 3\Delta\varepsilon)\right) \\
        &= \frac{3\Delta\varepsilon}{V_B((x_0, x_1)) + 3\Delta\varepsilon} + V_B((x_0, x_1)) - 3\Delta\varepsilon \\
        &\leq \frac{3\Delta\varepsilon}{\delta \eta} + \Delta \eta \,. \explain{Since $v_B$ has bounded density and  $|[x_0, x_1]|=\eta$.}
    \end{align}
    Therefore, the adjustment in step 5 gives Bob enough extra cake to guarantee he prefers the piece with more of the sub-intervals.

    Alice therefore accumulates $O(\sqrt{Tk})$ regret after finding her cut points $a_1, \ldots, a_k$. Finding those cut points took $O(n)$ uses of Lemma \ref{lem:binary-search}. Each use of Lemma \ref{lem:binary-search} takes $O(\log(1/\varepsilon)) = O(\log(Tk))$ rounds. We can upper-bound $n$ by using the fact that $n$ intervals of value at least $V_B((x_0, x_1)) - 3\Delta\varepsilon$ fit into $[0, 1]$, so:
    \begin{align}
        n &\leq \frac{1}{V_B((x_0, x_1)) - 3\Delta\varepsilon} \leq \frac{1}{\delta \eta - 3\Delta\varepsilon} \in O(1/\eta) = O(\sqrt{Tk}) \,.
    \end{align}
    Therefore, Alice's regret from this strategy is $O(\sqrt{Tk} + \sqrt{Tk} \log(Tk)) = O(\sqrt{Tk} \log(Tk))$.
\end{proof}

\subsection{Upper Bounds for Non-Myopic Bob}\label{sec:general-bob-upper-bounds}

In this section we give proofs of the upper bounds of Theorems \ref{thm:2-cut-non-myopic}, \ref{thm:k-cut-non-myopic-bounds}, and \ref{thm:unknown-alpha-bounds}. Alice's strategies are similar to the myopic case, but using the following lemma instead of Lemma \ref{lem:binary-search} to learn Bob's value density.

\begin{restatable}{lemma}{binarySearchWithRegret} \label{lem:binary-search-with-regret}
Let $T$ be the time horizon.   Suppose Bob's strategy guarantees  regret  at most $c \cdot f(T)$, where $f: \mathbb{N} \to \mathbb{R}$  is a public function and $c > 0$ is a constant private to Bob. Let $\varepsilon > 0$ and $x_1, \ldots, x_{k-1} \in [0, 1]$ such that $x_1 \leq x_2 \leq \ldots x_{k-1}$. Suppose $\varepsilon \leq f(T)$ and let $g: \mathbb{N} \to \mathbb{R} \in \omega(1)$. 
Then there exists $T_0 = T_0(g, c)$ such that if $T \geq T_0$, then within $O\bigl(\frac{f(T)}{\varepsilon} \log(1/\varepsilon) g(T)\bigr)$ rounds Alice can do one of the following:
    \begin{itemize}
        \item find $\widetilde{x}_k \in [x_{k-1}, 1]$ such that 
        Bob equally values both pieces of the alternating partition induced by cuts $x_1, \ldots, x_{k-1}, x_k^*$ for some $x_k^* \in [\widetilde{x}_k-\varepsilon, \widetilde{x}_k +\varepsilon] $.
        \item determine that no such $x_k^*$ exists in $(x_{k-1} + \varepsilon, 1 - \varepsilon)$.
    \end{itemize}
\end{restatable}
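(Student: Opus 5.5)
We run the binary search of Lemma~\ref{lem:binary-search}, but replace each single query by a block of repeated identical queries and take a majority vote. Concretely, let $Y_1$ be the piece of the alternating partition at $x_1,\ldots,x_{k-1},z$ that grows as $z$ increases, and set $D(z) = V_B(Y_1) - V_B(Y_2)$. Then $D$ is continuous and strictly increasing, with slope between $2\delta$ and $2\Delta$. Each query point $z$ is repeated $M = \lceil f(T) g(T)/\varepsilon \rceil$ times, and Alice records the majority answer. Since there are $O(\log(1/\varepsilon))$ queries, the round count is as claimed.

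The key step is a truthfulness claim: whenever $|D(z)| \geq \delta\varepsilon$, the majority answer on a block at $z$ equals Bob's myopic preference. If not, Bob chose his less preferred piece at least $M/2$ times in that block. Each such choice costs him exactly $|D(z)| \geq \delta\varepsilon$, so his regret is at least
\[
\frac{M \delta \varepsilon}{2} \geq \frac{\delta}{2} f(T) g(T) .
\]
This exceeds $c f(T)$ as soon as $g(T) > 2c/\delta$, which holds for all $T \geq T_0(g,c)$ because $g \in \omega(1)$. That contradicts Bob's regret guarantee. The argument needs no independence across blocks: a single block already exhausts the budget, so every block with $|D(z)| \geq \delta\varepsilon$ must be answered truthfully.

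Points with $|D(z)| < \delta\varepsilon$ are exactly the points within distance $\varepsilon/2$ of the root $x_k^*$, when a root exists, because the slope of $D$ is at least $2\delta$. This is the main obstacle: answers in that band may be adversarial, so the search has to tolerate some lies. We maintain the invariant that the root, if it exists, lies in the current interval extended by $\varepsilon/2$ on each side. A wrong answer at a midpoint $z$ forces $|z - x_k^*| < \varepsilon/2$, so after that update the root is still within $\varepsilon/2$ of the retained interval. Once the interval has length at most $\varepsilon$, we return its midpoint $\widetilde{x}_k$. The root then lies within $\varepsilon/2 + \varepsilon/2 = \varepsilon$ of $\widetilde{x}_k$, and lies in $[x_{k-1},1]$, which gives the first outcome.

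Existence is handled as in Lemma~\ref{lem:binary-search}, with endpoint queries at $z = x_{k-1} + \varepsilon$ and $z = 1 - \varepsilon$. Suppose the root $x_k^*$ lies in $(x_{k-1}+\varepsilon, 1-\varepsilon)$. Then both endpoints are more than $\varepsilon$ from it, so $|D| > 2\delta\varepsilon$ there and both answers are truthful; they show opposite preferences and we proceed to the search. Conversely, if the endpoint answers do not show opposite preferences, then no root exists in that open interval, which gives the second outcome. If they do show opposite preferences but no true root lies in $[x_{k-1},1]$, one endpoint answer was a lie. That endpoint is then within $\varepsilon/2$ of where $D$ would vanish, so $D$ is still near zero there, and the same invariant produces a valid $\widetilde{x}_k$ or a valid declaration. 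Details of this edge case are routine.
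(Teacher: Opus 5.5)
Your core mechanism matches the paper's. You use blocks of identical queries with a majority vote, the regret-budget argument that any block at a point with $|D(z)|\ge\delta\varepsilon$ must be answered truthfully, and a binary search whose invariant lets the root sit slightly outside the current interval. The paper uses $\lceil 4f(T)g(T)/(\delta\varepsilon)\rceil$ repetitions, slack $\varepsilon/4$ and stopping length $\varepsilon/2$; your $\varepsilon/2$ version is equivalent.

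\textbf{The gap is in the existence test.} You claim that a root $x_k^*\in(x_{k-1}+\varepsilon,1-\varepsilon)$ is more than $\varepsilon$ from both query points $x_{k-1}+\varepsilon$ and $1-\varepsilon$. This is false: the root can lie just inside one of them, e.g.\ $x_k^*=1-\varepsilon-\eta$ with $\eta>0$ tiny. This is realizable for suitable $x_1,\ldots,x_{k-1}$ and $v_B$. Then $0<D(1-\varepsilon)\le 2\Delta\eta$, so Bob can pick $Y_2$ throughout that block at total cost at most $2\Delta\eta M$. That fits in $cf(T)$ once $\eta$ is below roughly $c\varepsilon/(2\Delta g(T))$. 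The block at $x_{k-1}+\varepsilon$ truthfully reports $Y_2$ as well, so both blocks agree and you declare that no root exists in $(x_{k-1}+\varepsilon,1-\varepsilon)$, which is wrong. Your closing ``routine'' paragraph does not address this, and it is not a valid argument on its own terms. When no root lies in $[x_{k-1},1]$ your invariant is vacuous, and your search has no branch that outputs a declaration, so it would return an invalid $\widetilde{x}_k$.

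\textbf{The repair is small.} First dispose of $1-x_{k-1}<2\varepsilon$: the open interval is then empty and you may declare at once. Otherwise query at $x_{k-1}+\varepsilon/2$ and $1-\varepsilon/2$.
\begin{itemize}
\item A root in $(x_{k-1}+\varepsilon,1-\varepsilon)$ is more than $\varepsilon/2$ from both points. So $|D|>\delta\varepsilon$ there, both answers are truthful, and they disagree.
\item If $D$ has no zero on $[x_{k-1},1]$, it has constant sign and slope at least $2\delta$, which again gives $|D|>\delta\varepsilon$ at both points. For example, $D(x_{k-1}+\varepsilon/2)>\delta\varepsilon$ if $D>0$, and $D(x_{k-1}+\varepsilon/2)<-2\delta(1-x_{k-1}-\varepsilon/2)\le-3\delta\varepsilon$ if $D<0$. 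So both answers are truthful and agree.
\end{itemize}
Hence agreeing answers justify the declaration. Disagreeing answers certify a root in $[x_{k-1},1]$, which lies within $\varepsilon/2$ of $[x_{k-1}+\varepsilon/2,1-\varepsilon/2]$. Your invariant therefore holds at the start of the search on that interval, and the rest of your argument goes through.

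The paper handles this step differently. It makes no endpoint queries and runs the search on all of $[x_{k-1},1]$. It declares non-existence exactly when one end of the search interval never moved. Its claim (b) shows that when no root exists every block is answered truthfully, so the search collapses onto one end. Correctness of the declaration then follows from the invariant together with the stopping length.
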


\begin{proof}
    We can assume that $1 - x_{k-1} \geq 2\varepsilon$; otherwise, trivially no such $x_k^*$ can exist.
    
    Alice's algorithm is similar to that of Lemma \ref{lem:binary-search}, but where she makes each cut enough times to make it impossible for Bob to fool her and stay within his regret bound. Specifically, she starts by instantiating $i=0$, then setting $\ell_0=x_{k-1}$ and $r_0=1$. Then, as long as $|r_i - \ell_i| > \varepsilon/2$, she does the following:
    \begin{enumerate}
        \item Set $z_i = (\ell_i + r_i) / 2$ as the midpoint of the current interval.
        \item For each of the next $\lceil 4f(T) g(T) / (\delta \varepsilon)\rceil$ rounds, cut at $[x_1, \ldots, x_{k-1}, z_i]$. Let $P_i$ and $Q_i$ be the pieces of the alternating partition with these cut points that Bob chose most and least often, respectively, in this set of rounds. 
        \item  If $[\ell_i, z_i] \subseteq P_i$, then set $\ell_{i+1} = \ell_i$ and $r_{i+1} = z_i$. Otherwise, set $\ell_{i+1} = z_i$ and $r_{i+1} = r_i$.
        \item Increment $i$.
    \end{enumerate}
    Let $n$ be the final index $i$ of the algorithm. If $\ell_n = x_{k-1}$ or $r_n = 1$, Alice returns that no such $x_k^*$ exists in $(x_{k-1} + \varepsilon, 1 - \varepsilon)$. Otherwise, she returns $\widetilde{x}_k = (r_n + \ell_n)/2$.

    \medskip 
    
Let $T_0$ be such that $g(T) > c$ for all $T \geq T_0$. It exists because $g \in \omega(1)$. To analyze Alice's algorithm, we first claim the following:
 \begin{description}
     \item[(a)] If there exists $x_k^* \in [x_{k-1}, 1]$ such that Bob equally values both sides of the alternating partition induced by cuts $x_1, \ldots, x_{k-1}, x_k^*$, then $x_k^* \in [\ell_i - \varepsilon/4, r_i + \varepsilon/4]$.
 \end{description}

    We prove  claim (a) by induction on $i$. As a base case, when $i=0$ the interval $[\ell_0 - \varepsilon / 4, r_0 + \varepsilon / 4]$ contains the entire search area $[x_{k-1}, 1]$, so if $x_k^*$ exists it will be in there.

    Now assume that claim (a) holds for an arbitrary $i$. If the required $x_k^*$ does not exist, the inductive hypothesis is vacuously true.  Otherwise, we consider two cases:
    \begin{itemize}
        \item Case 1: Bob strictly prefers $P_i$. Then $P_i$ contains $x_k^*$. Alice's setting of $\ell_{i+1}$ and $r_{i+1}$ ensures the interval $[\ell_{i+1} - \varepsilon / 4, r_{i+1} + \varepsilon / 4]$ contains $P_i \cap [\ell_{i} - \varepsilon / 4, r_{i} + \varepsilon / 4]$. The induction hypothesis gives $x_k^* \in [\ell_{i} - \varepsilon / 4, r_{i} + \varepsilon / 4]$, so  
        $x_k^* \in [\ell_{i+1} - \varepsilon / 4, r_{i+1} + \varepsilon / 4]$ as required.
        \item Case 2: Bob weakly prefers $Q_i$. Then Bob chose the piece he prefers less at least $\frac{2f(T)g(T)}{\delta \varepsilon}$ times. Denote $V_B(Q_i) = {1}/{2}-\beta$. Then Bob's utility loss is  at least 
\begin{align} \label{eq:Bob_utility_loss_Pi_minus_Qi}
(V_B(P_i) - V_B(Q_i)) \frac{2f(T)g(T)}{\delta \varepsilon}  =   \frac{4 \beta f(T)g(T)}{\delta \varepsilon}\,. 
\end{align}
        
      Bob's loss cannot exceed his regret bound, so continuing from \eqref{eq:Bob_utility_loss_Pi_minus_Qi} gives:
      \begin{align} \label{eq:c_FT_at_least_stuff}
          \frac{4 \beta f(T)g(T)}{\delta \varepsilon} \leq c \cdot f(T)\,.
     \end{align} 
    Applying $g(T)>c$ yields  $\beta < \delta \varepsilon/4$. 

    We know $x_k^* \in P_i$. Let $I$ be the interval with endpoints $z_i$ and $x_k^*$. By definition of $\beta$, we have $V_B(I) = \beta$. 
Since Bob's density is lower bounded by $\delta$, we have 
\begin{align}
    \frac{\delta \varepsilon}{4} > \beta = V_B(I) \geq \delta |I| = \delta  |z_i - x_k^*| \,. 
\end{align}
Thus $|z_i - x_k^*| < \varepsilon/4$. Since $z_i \in [\ell_{i+1}, r_{i+1}]$, we have $x_k^* \in [\ell_{i+1} - \varepsilon/4, r_{i+1} + \varepsilon/4]$ as required.
    \end{itemize}
    In both cases the inductive step holds, which completes the proof of claim (a).

We  now  claim the following:
\begin{description}
    \item[(b)] If there is no $x_k^* \in [x_{k-1}, 1]$ such that Bob equally values both sides of the alternating partition induced by cuts $x_1, \ldots, x_{k-1}, x_k^*$, then $\ell_n= x_{k-1}$ or $r_n = 1$.
\end{description}

We show that if the requirements of claim (b) are met, then for each  $i \in \{0, \ldots, n-1\}$  Bob always prefers $P_i$, recalling that $P_i$ is the piece that Bob chose most often for index $i$. 

Assume towards a contradiction that Bob prefers $Q_i$ for some index $i$. Then  Bob chose $P_i$ (the piece he prefers less) at least $\frac{2f(T)g(T)}{\delta \varepsilon}$ times. Denote $V_B(P_i) = {1}/{2}-\beta$. Then Bob's utility loss is  at least 
\begin{align} \label{eq:Bob_utility_loss_Qi_minus_Pi}
(V_B(Q_i) - V_B(P_i)) \frac{2f(T)g(T)}{\delta \varepsilon}  =   \frac{4 \beta f(T)g(T)}{\delta \varepsilon}\,. 
\end{align}
        
      Since Bob's loss is at  most $c f(T)$, we obtain 
      \begin{align} \label{eq:c_FT_at_least_stuff_contradiction}
          \frac{4 \beta f(T)g(T)}{\delta \varepsilon} \leq c \cdot f(T)\,.
     \end{align} 
    Applying $g(T)>c$ yields  $\beta < \delta \varepsilon/4$.
    Let $I = Q_i \cap [\ell_i, r_i]$. Since $z_i$ is the midpoint of $[\ell_i, r_i]$, interval $I$ is the left or right half of $[\ell_i, r_i]$. Because no suitable $x_k^*$ exists, even adding $I$ to $P_i$ would not make him prefer it to $Q_i$, so $V_B(I) \leq \beta$. Since Bob's density is lower bounded by $\delta$, we get    
    \begin{align}
     \delta \cdot \frac{r_i - \ell_i}{2} = \delta \cdot |I|  \leq V_B(I) \leq  \beta < \delta \varepsilon/4,
    \end{align}
    so $r_i - \ell_i < \varepsilon/2$. But this was precisely Alice's stopping condition, so the iteration for index $i$ could not have happened. Thus for each index $i$ Bob prefers $P_i$.

    Since no suitable $x_k^*$ exists, there is no threshold where Bob switches which piece he prefers. Then choosing $P_i$ either consistently tells Alice to recurse left or consistently tells Alice to recurse right. Therefore, we either have $\ell_n = x_{k-1}$ or $r_n = 1$. This proves claim (b).
    
    We can now show the correctness of Alice's algorithm. 
\begin{itemize} 
\item If $\ell_n=x_{k-1}$ or $r_n=1$, then Alice returns that no such $x_k^*$ exists in $(x_{k-1} + \varepsilon, 1 - \varepsilon)$. She cannot be wrong: if she were, then by claim (a) and the stopping condition $r_n - \ell_n \leq \varepsilon/2$, we would have $\ell_n > x_{k-1} + \varepsilon/4$ and $r_n < 1 - \varepsilon/4$, which is a contradiction.
\item Else, Alice returns $\widetilde{x}_k = (r_n +\ell_n)/2$. In this case  $\ell_n \neq x_{k-1}$ and $r_n \neq 1$. By claim (b) there  exists a suitable $x_k^* \in [x_{k-1},1]$. By claim (a) we have  $x_k^* \in [\ell_n - \varepsilon/4, r_n + \varepsilon/4]$. Thus $\widetilde{x}_k \in [x_{k-1},1]$ and $| \widetilde{x}_k - x_k^*| \leq \varepsilon$ as required.
\end{itemize}
  Since Alice's algorithm has  $O\bigl({f(T)g(T)}/{\varepsilon}\bigr)$ rounds for each  $i \in \{0, \ldots, n-1\}$ and $n \leq \log_2 (2/\varepsilon)$, the total number of rounds is  $O\bigl(\frac{f(T)}{\varepsilon} \log(1/\varepsilon) g(T)\bigr)$  as desired.
\end{proof}

\begin{restatable}{lemma}{twoCutNonMyopicGeneralUpperBound}
    \label{lem:2-cut-non-myopic-general-upper-bound}
    Let $k=2$. Let $f : \mathbb{N} \to \mathbb{R}$ be a function known to both players such that $f(x) \in o(x/\log^2 x)$ and $f(x) \in \Omega(x^{\alpha})$ for some $\alpha > -1/2$. Suppose Bob's strategy guarantees his regret is at most $c \cdot f(T)$, where $c > 0$ is a constant private to Bob. Then Alice has a deterministic strategy which guarantees her regret is $O\bigl((f(T))^{1/3}T^{2/3}\log^{2/3} T\bigr)$.
\end{restatable}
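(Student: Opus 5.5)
We will follow the discretized moving-knife strategy from the proof of Proposition~\ref{prop:two_cuts_myopic_Bob}, replacing every call to Lemma~\ref{lem:binary-search} with a call to Lemma~\ref{lem:binary-search-with-regret}. We instantiate the latter with the public function $f$ and a slowly growing $g(T)=\log T$, so that $g\in\omega(1)$ and there is a threshold $T_0(g,c)$ beyond which Bob's lies cannot mislead the searches. Fix a precision $\varepsilon$ and grid step $\eta=\varepsilon\Delta/\delta$. For $x_i=\eta i$, Alice locates $y_i$ with $V_B([x_i,y_i])=1/2$ up to $\varepsilon$ error, stopping when the search reports failure. She then picks $i^*$ maximizing her value of the better side and shifts $(x_{i^*},y_{i^*})$ toward Bob's side. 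In contrast to the myopic case, the total shift is $C\varepsilon\Delta/\delta$ for a constant $C$ large enough that Bob strictly prefers his intended piece by a margin of at least $\varepsilon$, not merely weakly. She commits to these cuts for the remaining rounds.

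The regret splits into three terms, which we bound in order.

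\emph{Exploration.} There are $O(1/\eta)$ searches, each costing $O\bigl(\frac{f(T)}{\varepsilon}\log(1/\varepsilon)\log T\bigr)$ rounds. With per-round loss $O(1)$, this gives $O\bigl(\frac{f(T)\log^2 T}{\varepsilon^2}\bigr)$ when $\log(1/\varepsilon)=O(\log T)$.

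\emph{Approximation.} Exactly as in part (ii) of Proposition~\ref{prop:two_cuts_myopic_Bob}, the discretization and search errors, together with the shift, cost $O(\varepsilon)$ per round, so $O(T\varepsilon)$ in total. One new point must be handled here. Lemma~\ref{lem:binary-search-with-regret} can miss solutions within $\varepsilon$ of the endpoints, and its failure certificate is weaker than in Lemma~\ref{lem:binary-search}. We therefore need to argue that the grid indices that succeed still include some $i$ with $|x_i-x^*|\le\eta$ for the Stackelberg cut $(x^*,y^*)$, up to an $O(\varepsilon)$ loss. Stackelberg cuts with an endpoint within $\varepsilon$ of $x_i$ or of $1$ can be perturbed at $O(\varepsilon)$ cost, which should handle these edge cases.

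\emph{Exploitation.} In the commit phase, each round Bob takes Alice's preferred piece costs him at least $\Omega(\varepsilon)$, so this happens at most $O(f(T)/\varepsilon)$ times, costing Alice $O(f(T)/\varepsilon)$. This is dominated by the exploration term.

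Balancing $f\log^2T/\varepsilon^2$ against $T\varepsilon$ gives
\[
\varepsilon=\bigl(f(T)\log^2 T/T\bigr)^{1/3}
\]
and total regret $O\bigl(f(T)^{1/3}T^{2/3}\log^{2/3}T\bigr)$.

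The growth hypotheses are exactly what make this choice admissible. The condition $f=o(T/\log^2T)$ gives $\varepsilon\to0$. The condition $f=\Omega(T^{\alpha})$ with $\alpha>-1/2$ gives $\varepsilon\ge T^{-\Theta(1)}$, so $\log(1/\varepsilon)=O(\log T)$. Lemma~\ref{lem:binary-search-with-regret} also requires $\varepsilon\le f(T)$. This follows since $\varepsilon^3/f^3=\log^2T/(Tf^2)\to0$ precisely when $f=\omega(T^{-1/2}\log T)$, which is where $\alpha>-1/2$ is used.

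The main obstacle is the robust bookkeeping in the approximation step. We must show that, despite one-sided failures of the robust search, Bob's arbitrary behaviour on near-indifferent partitions during exploration, and the $\varepsilon$-accurate $y_i$, the chosen $(x_{i^*},y_{i^*})$ is $O(\varepsilon)$-Stackelberg. We must also show that the $C\varepsilon$ shift yields an $\Omega(\varepsilon)$ preference margin for Bob. We would handle both by triangle inequalities using the density bounds $\delta\le v_B\le\Delta$, mirroring Proposition~\ref{prop:two_cuts_myopic_Bob}.
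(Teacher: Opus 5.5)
Your proposal is correct and is essentially the paper's proof: the same discretized moving-knife strategy driven by Lemma~\ref{lem:binary-search-with-regret} with $g(T)=\log T$, grid step $\eta=\varepsilon\Delta/\delta$, a shift toward Bob by a constant multiple of $\eta$, and the same choice $\varepsilon=(f(T)\log^2 T/T)^{1/3}$, which balances $f(T)\log^2 T/\varepsilon^2$ against $T\varepsilon$. The paper resolves the edge case you flag by arguing about where the loop stops, rather than by perturbing the Stackelberg cut. If the first failure occurred at a grid point $z<m_B-\eta$, then $V_B([0,z]\cup[1-\varepsilon,1])<1/2\le V_B([0,z]\cup[m_B,1])$ would give a solution in $[m_B,1-\varepsilon)\subseteq(z+\varepsilon,1-\varepsilon)$, contradicting the lemma's failure certificate. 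Hence the last successful point satisfies $z_N\ge m_B-2\eta$, and every Stackelberg left cut $\hat z\le m_B$ lies within $2\eta$ of a successfully searched grid point.
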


\begin{proof}
Alice's strategy is similar to that of Proposition \ref{prop:two_cuts_myopic_Bob} and it is based on the moving knife procedure described there.
Let $\varepsilon = (f(T))^{1/3}T^{-1/3}\log^{2/3}T$ and $\eta = \varepsilon \Delta/\delta$. Let $g: \mathbb{N} \to \mathbb{R}$ be $g(x) = \log{x}$ and assume $T$ is sufficiently large that Lemma \ref{lem:binary-search-with-regret} can be used.
Alice's strategy is as follows:
    \begin{enumerate}[1.]

    \item Initialize $i=0$. Repeat the next step until it terminates:
    \begin{itemize}
        \item Invoke Lemma \ref{lem:binary-search-with-regret} with $k=2$, the function $g$, and cut point  $z_i =  \eta \cdot i$. We satisfy $\varepsilon \leq f(T)$ because of the choice of $\varepsilon$ and the restriction $f(T) \in \Omega(T^{\alpha})$ for some $\alpha > -1/2$. If the lemma returns  $y_i \in [z_i, 1]$ such that 
        Bob equally values pieces $[z_i, y_i^*]$ and $[0, z_i] \cup [y_i^*, 1]$  for some $y_i^* \in [y_i-\varepsilon, y_i +\varepsilon]$, then increment $i$. Else, set ${N}=i-1$ and stop.
    \end{itemize}
        \item Let $\ell$ be the index such that the cut at $z_{\ell}$ and $y_{\ell}$ contains the piece of greatest value to Alice, i.e.
           $ \ell = \argmax_{i \in \{0, \ldots, {N}\}} \max \bigl(V_A([z_i, y_i]), V_A([0, z_i] \cup [z_i, 1])\bigr) \,.$
        \item Set $\overline{z}$ and $\overline{y}$ to the values of $z_{\ell}$ and $y_{\ell}$,
        but shifted  by a total of $2\eta$ in the directions that increase the size of the piece  Alice values less. Formally:
        \begin{itemize}
            \item If Alice prefers $[z_{\ell}, y_{\ell}]$ to $[0, z_{\ell}] \cup [y_{\ell}, 1]$, then choose $\overline{z} \geq z_{\ell}$ and $\overline{y} \leq y_{\ell}$ such that $(\overline{z} - z_{\ell}) + (y_{\ell} - \overline{y}) = 2\eta$.
            \item Else, choose $\overline{z} \leq z_{\ell}$ and $\overline{y} \geq y_{\ell}$ such that $(z_{\ell} - \overline{z}) + (\overline{y} - y_{\ell}) = 2\eta$.
        \end{itemize}
        \item For all remaining rounds, cut at $\overline{z}$ and $\overline{y}$.
    \end{enumerate}

\paragraph{Estimating $z_N$.}  Before analyzing Alice's regret, we estimate $z_{{N}}$. To truly discretize the moving-knife procedure, we need $z_{{N}} \approx m_B$, where $m_B$ is Bob's midpoint. We claim that $z_{{N}} \geq m_B - 2\eta$. Suppose for contradiction that $z_{{N}} < m_B - 2\eta$ $(\dag)$. Then the execution of Lemma \ref{lem:binary-search-with-regret} that failed was at index $N+1$ with cut point $z_{N+1} = (N+1) \eta$. Adding $\eta$ to both sides of $(\dag)$ we get $z_N + \eta < m_B - \eta$. Since $\eta = \varepsilon \Delta/\delta$, we get $z_{N+1} < m_B - \varepsilon \Delta/\delta$. Since Bob's density is lower bounded by $\delta$ we have  $V_B([z_{N+1}, m_B]) > \varepsilon \Delta$.  Bob's density is also at most $\Delta$, so     \begin{align}
        V_B([0, z_{N+1}] \cup [1-\varepsilon, 1]) &= V_B([0, m_B]) - V_B([z_{N+1}, m_B]) + V_B([1-\varepsilon, 1]) 
        < \frac{1}{2} - \varepsilon \Delta + \varepsilon \Delta = \frac{1}{2} \,. \notag 
    \end{align} 
    Since $V_B([0, z_{N+1}] \cup [m_B, 1]) > 1/2$, there exists $y_{N+1}^* \in (z_{N+1} + \varepsilon, 1 - \varepsilon)$ by the intermediate value theorem such that $V_B([0, z_{N+1}] \cup [y_{N+1}^*, 1]) = 1/2$. But then 
    Alice's strategy should have succeeded when invoking Lemma  \ref{lem:binary-search-with-regret} in iteration $N+1$. However, this contradicts the definition of $N$, so the assumption must have been false and  $z_{{N}} \geq m_B - 2\eta$.
        
\paragraph{Estimating Alice's regret.}  We now estimate Alice's overall regret in three parts: (i) the loss incurred in step 1; (ii) the loss from $\overline{z}$ and $\overline{y}$ not quite being a Stackelberg cut; and (iii) the loss from Bob choosing Alice's more preferred piece once Alice has started cutting at $\overline{z}$ and $\overline{y}$.
    \begin{enumerate}[(i)]
        \item Each invocation of Lemma \ref{lem:binary-search-with-regret} in step 1 takes $O\left(\frac{f(T)}{\varepsilon} \log(1/\varepsilon) \log T\right)$ rounds.
        
        Since $\eta \in o(1)$ by the choice of $\varepsilon$ and the restriction $f(T) \in o(T/\log^2 T)$, we have $N = O(1/\eta)$, so finding all the points $y_0, \ldots, y_N$  takes $O\left(\frac{f(T)}{\varepsilon \eta} \log(1/\varepsilon) \log T\right)$ rounds.
        \item Let $\widehat{z}$ and $\widehat{y}$ be cut points where Alice achieves her  Stackelberg value. Thus Bob equally values the pieces $[\widehat{z}, \widehat{y}]$ and $[0, \widehat{z}] \cup [\widehat{y}, 1]$. Since $\widehat{z} \in [0, m_B]$ and  $z_{{N}} \geq m_B - 2\eta$, there exists $i \in \{0, \ldots, {N}\}$ such that $|z_i - \widehat{z}| \leq 2\eta$. Then since $V_B([z_i, y_i^*]) = V_B([\widehat{z}, \widehat{y}]) = 1/2$, we have $|y_i^* - \widehat{y}| \leq |z_i - \widehat{z}|\Delta/\delta \leq 2\eta \Delta/\delta$. We also have $|y_i - y_i^*| \leq \varepsilon$, so $|y_i - \widehat{y}| \leq 2\eta\Delta/\delta + \varepsilon$. Cutting at $z_i$ and $y_i$ then gets Alice close to her Stackelberg value, so cutting at $z_{\ell}$ and $z_{\ell}$ is at least that good. More precisely:
        \begin{align}
            u_A^*(2) &\leq \left(2\eta + \frac{2\eta\Delta}{\delta} + \varepsilon\right)\Delta + \max \bigl(V_A([z_i, y_i]), V_A([0, z_i] \cup [y_i, 1])\bigr) \\
            &\leq \left(2\eta + \frac{2\eta\Delta}{\delta} + \varepsilon\right)\Delta + \max \bigl(V_A([z_{\ell}, y_{\ell}]), V_A([0, z_{\ell}] \cup [y_{\ell}, 1])\bigr)
        \end{align}
        The shifting of $2\eta$ in Bob's favor only adds another $O(\eta)$ loss to Alice, so Alice's loss compared to her Stackelberg value over the $T$ rounds is $O(T \eta)$.
        \item By the construction of $y_{\ell}$, we have $|V_B([z_{\ell}, y_{\ell}])-1/2| \leq \Delta \varepsilon$. Therefore, shifting the cut points $(z_{\ell}, y_{\ell})$ by $2\eta$ to $(\overline{z}, \overline{y})$ to make one piece bigger ensures he prefers that piece by at least $2\eta \delta - \Delta \varepsilon = \eta \delta$. He can then only afford to take the piece Alice likes less $O(f(T) / (\eta\delta))$ times. This term is dominated by the loss from (i), so we can ignore it.
    \end{enumerate}
    Adding all these together, we get a regret bound of:
        $O\left(\frac{f(T)}{\eta \varepsilon} \log(1/\varepsilon) \log T + T\eta\right)$.
    Our choices of $\varepsilon = (f(T))^{1/3}T^{-1/3}\log^{2/3}T$ and $\eta = \varepsilon \Delta/\delta$ now yield the regret bound:
    \begin{align}
        O\left((f(T))^{1/3}T^{2/3} \left(\log\left(\frac{T}{f(T) \log^2 T}\right)\log^{-1/3}T + \log^{2/3} T\right)\right)\,.
    \end{align}
    But since $f(T)$ is at least inverse polynomial in $T$, both terms in the sum are $O(\log^{2/3} T)$, yielding the desired regret bound.
\end{proof}

\twoCutKnownAlphaUpper*

\begin{proof}
    Invoking Lemma \ref{lem:2-cut-non-myopic-general-upper-bound} with $f(T) = T^{\alpha}$ immediately gives the desired regret bound.
\end{proof}

\twoCutUnknownAlphaUpper*

\begin{proof}
    Invoking Lemma \ref{lem:2-cut-non-myopic-general-upper-bound} with $f(T) = T/\log^5 T$ immediately gives the desired regret bound.
\end{proof}

\begin{restatable}{lemma}{kCutNonMyopicGeneralUpper}
    \label{lem:k-cut-non-myopic-general-upper-bound}
    Let $k \geq 3$. Let $f : \mathbb{N} \to \mathbb{R}$ be a function known to both players such that $f(x) \in o(x/\log^2 x)$ and $f(x) \in \Omega(x^{\alpha})$ for some $\alpha > -1$. Suppose Bob's strategy guarantees his regret is at most $c \cdot f(T)$, where $c > 0$ is a constant private to Bob. Then Alice has a deterministic strategy which guarantees her regret is $O((f(T))^{1/4} T^{3/4} k^{3/4} \log^{1/2} T)$.
\end{restatable}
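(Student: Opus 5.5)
We follow the blueprint of Proposition~\ref{prop:k-cut-myopic-upper-bound}, replacing every call to Lemma~\ref{lem:binary-search} with Lemma~\ref{lem:binary-search-with-regret} (using $g(x)=\log x$), exactly as Lemma~\ref{lem:2-cut-non-myopic-general-upper-bound} did for two cuts. The parameters are
\[
\eta = (f(T))^{1/4}T^{-1/4}k^{-1/4}\log^{1/2}T, \qquad \varepsilon=\delta^2\eta^2/2 .
\]
Alice first locates $\widetilde m_B$ and sets $x_0=\widetilde m_B-\eta$, $x_1=\widetilde m_B$. She then builds $x_2,x_3,\ldots$ to the right by searching for $z$ with $V_B((0,x_0)\cup(x_i,z))=1/2$, and builds $x_{-1},x_{-2},\ldots$ to the left by the mirrored searches against $(x_1,x_2)$. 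Next she computes the best $k$-cut $a^*$ with cut points in $[0,x_{-\ell}]\cup\{x_{-\ell},\ldots,x_r\}\cup[x_r,1]$. Finally she shifts it in Bob's favor by $\Theta(\varepsilon/\eta+\eta)$, enlarged by a constant factor relative to step 5 of Proposition~\ref{prop:k-cut-myopic-upper-bound}, and plays the shifted cut for the remaining rounds. We need $\varepsilon\le f(T)$ for the lemma, and this holds because $f(T)\in\Omega(T^\alpha)$ with $\alpha>-1$ makes $\eta^2\lesssim f(T)$ for large $T$.

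\textbf{Approximation analysis.} The analysis carries over with two differences. First, Lemma~\ref{lem:binary-search-with-regret} returns a point within $\varepsilon$ of a true crossing. Second, it may instead report ``no crossing in $(x_{k-1}+\varepsilon,1-\varepsilon)$''. When a loop terminates, the uncovered end pieces $(x_r,1)$ and $(0,x_{-\ell})$ are therefore bounded by $V_B((x_0,x_1))+O(\Delta\varepsilon)$, not $V_B((x_0,x_1))+\Delta\varepsilon$. The interval-equality estimates $|V_B((x_i,x_{i+1}))-V_B((x_0,x_1))|\le 3\Delta\varepsilon$ (up to constants) go through verbatim, since the errors are never compounded. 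Lemma~\ref{lem:known-bob-intervals-gives-stackelberg-cut} with $r=2$ then shows that $a^*$ is $O(k\eta+\varepsilon/\eta)=O(k\eta)$-Stackelberg. The computation at the end of Proposition~\ref{prop:k-cut-myopic-upper-bound} shows that after a shift of $\frac{3\Delta\varepsilon}{\delta^2\eta}+\Delta\eta/\delta$ Bob weakly prefers the intended piece. Doubling the shift makes him prefer it by $\Omega(\eta)$, which is what handles a non-myopic Bob.

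\textbf{Regret accounting.} The total regret is the sum of three terms.
\begin{itemize}
\item[(i)] The learning phase uses $O(1/\eta)$ searches, since each interval has Bob-value $\ge\delta\eta-O(\varepsilon)$. Each search costs $O\bigl(\frac{f(T)}{\varepsilon}\log(1/\varepsilon)\log T\bigr)$ rounds, so this phase contributes $O\bigl(\frac{f(T)\log^2T}{\eta^3}\bigr)$, using $\varepsilon\asymp\eta^2$ and $\log(1/\varepsilon)=O(\log T)$.
\item[(ii)] The exploitation rounds lose $O(Tk\eta)$ against the Stackelberg value.
\item[(iii)] Bob prefers the intended piece by $\Omega(\eta)$, so he can deviate at most $O(f(T)/\eta)$ times. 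Each deviation costs Alice $O(1)$, so this term is $O(f(T)/\eta)$ and is dominated by (i).
\end{itemize}
Balancing $f(T)\log^2T/\eta^3$ against $Tk\eta$ gives $\eta^4=f(T)\log^2T/(Tk)$, matching the choice above. The total regret is then $O\bigl(Tk\eta\bigr)=O\bigl((f(T))^{1/4}T^{3/4}k^{3/4}\log^{1/2}T\bigr)$. The assumption $f\in o(x/\log^2x)$ guarantees $\eta=o(1)$, so the discretization makes sense.

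\textbf{Main obstacle.} The delicate step is the termination and boundary analysis with the weaker guarantee of Lemma~\ref{lem:binary-search-with-regret}. A failed search only certifies the absence of a crossing in $(x_{k-1}+\varepsilon,1-\varepsilon)$. I would handle this as in the ``Estimating $z_N$'' argument of Lemma~\ref{lem:2-cut-non-myopic-general-upper-bound}: if the true leftover value exceeded $V_B((x_0,x_1))+c\Delta\varepsilon$, the intermediate value theorem would produce a crossing strictly inside the allowed range, contradicting the failure. This bounds the uncovered cake by $2(V_B((x_0,x_1))+O(\Delta\varepsilon))$, which is exactly the hypothesis Lemma~\ref{lem:known-bob-intervals-gives-stackelberg-cut} needs. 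A related concern is that the approximate midpoint $\widetilde m_B$ comes from a possibly ``no crossing'' search. However, the first search (varying $x_1$ alone) always has a crossing at $m_B$, and $m_B$ lies in the interior by density bounds.
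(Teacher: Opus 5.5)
Your proposal is correct and takes essentially the same approach as the paper. Both run the myopic strategy of Proposition~\ref{prop:k-cut-myopic-upper-bound} with Lemma~\ref{lem:binary-search-with-regret} (using $g=\log$) in place of Lemma~\ref{lem:binary-search}, with the same $\eta$ and $\varepsilon=\delta^2\eta^2/2$, an extra $\Theta(\eta)$ shift giving Bob an $\Omega(\eta)$ preference margin (you double the shift where the paper adds $\Delta\eta/\delta$), and the same three-term accounting $O\bigl(Tk\eta + f(T)/\eta + f(T)\log T\log(1/\varepsilon)/(\eta\varepsilon)\bigr)$.
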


\begin{proof}
    Alice's strategy uses the same idea as Proposition \ref{prop:k-cut-myopic-upper-bound} but with Lemma \ref{lem:binary-search-with-regret} using $g(T)=\log T$ instead of Lemma \ref{lem:binary-search}.
    Specifically, given parameters $\eta = (f(T))^{1/4} T^{-1/4} k^{-1/4}(\log T)^{1/2}$ and $\varepsilon = \delta^2 \eta^2 / 2$, Alice's strategy is as follows.

    \begin{enumerate}
        \item Use Lemma \ref{lem:binary-search-with-regret} to get an approximation $\widetilde{m}_B$ of Bob's midpoint $m_B$ to within $\varepsilon$. Assuming $T$ is sufficiently large, we can use Lemma \ref{lem:binary-search-with-regret} with these parameters since $\varepsilon = \delta^2 \sqrt{f(T)/(Tk)} (\log T)/2 <f(T)$ for $f(T) \in \Omega(T^{\alpha})$ for some $\alpha > -1$.
        \item Set $x_0 = \widetilde{m}_B - \eta$ and $x_1 = \widetilde{m}_B$. Set $i=1$. Repeat the following until the loop exits:
        \begin{enumerate}
            \item Use Lemma \ref{lem:binary-search-with-regret} to find the $z$ for which $V_B((0, \widetilde{m}_B - \eta) \cup (x_i, z))=1/2$, to within $\varepsilon$. If the search fails, exit the loop.
            \item Set $x_{i+1} = z$ and increment $i$.
        \end{enumerate}
        \item Reset $i=0$. Repeat the following until the loop exits:
        \begin{enumerate}
            \item Use Lemma \ref{lem:binary-search-with-regret} to find the $z$ for which $V_B((z, x_i) \cup (x_2, 1))=1/2$, to within $\varepsilon$. This can be accomplished by treating the cake as reflected left-right. If the search fails, exit the loop.
            \item Set $x_{i-1} = z$ and decrement $i$.
        \end{enumerate}
        \item Let $-\ell$ be the index of the last negative-index $x$ and let $r$ be the index of the last positive-index $x$. Let $a^*_1, \ldots, a^*_k$ be cut points chosen from the set $[0, x_{-\ell}] \cup \{x_{-\ell}, \ldots, x_r\} \cup [x_r, 1]$ in order to maximize Alice's value of the piece containing fewer of the intervals $(x_i, x_{i+1})$ for $i \in \{-\ell, \ldots, r-1\}$.
        \item Let $a_1, \ldots, a_k$ be cut points that deviate from $a_1^*, \ldots, a_k^*$ by exactly $3\Delta\varepsilon/(\delta^2 \eta) + 2\Delta \eta/\delta$ distance in ways that make Alice's less-preferred piece bigger. Cut there for the remaining rounds.
    \end{enumerate}
    The analysis is almost identical to Proposition \ref{prop:k-cut-myopic-upper-bound}. We need to bound how similarly Bob values the intervals $(x_i, x_{i+1})$. Our reference point will be $(x_0, x_1) = (\widetilde{m}_B - \eta, \widetilde{m}_B)$. Each iteration of the loop in step 2 aims to find a new interval $(x_i, x_{i+1})$ of similar value. To see how well it does so, fix $i \in \{1, \ldots, r - 1\}$ and let $z^*$ be the exact solution Lemma \ref{lem:binary-search-with-regret} aimed to find, i.e. such that $V_B((0, x_0) \cup (x_i, z^*))=1/2$. We then have $|x_{i+1} - z^*| \leq \varepsilon$. By the choice of $x_0$ and $x_1$, we also have:
    \begin{align}
        \frac{1}{2} &= V_B((0, x_0) \cup (x_i, z^*)) = V_B((0, x_1)) - V_B((x_0, x_1)) + V_B((x_i, z^*)) \\
        &= V_B((0, \widetilde{m}_B)) + \left(V_B((x_i, z^*)) - V_B((x_0, x_1))\right)
    \end{align}
    By the construction of $\widetilde{m}_B$ in step 1, we also know that $|m_B - \widetilde{m}_B| \leq \varepsilon$, so $|V_B((0, \widetilde{m}_B)) - 1/2| \leq \Delta \varepsilon$. By the triangle inequality, we therefore have:
    \begin{align}
        \left|V_B((x_i, x_{i+1})) - V_B((x_0, x_1))\right| \leq& \left|V_B((x_i, x_{i+1})) - V_B((x_i, z^*))\right| + \left|V_B((x_i, z^*)) - V_B((x_0, x_1))\right| \notag \\
        \leq& \Delta \varepsilon + \Delta \varepsilon = 2\Delta \varepsilon \,.
    \end{align}
    We also know that the search failed when trying to find $x_{r+1}$, so $V_B((0, x_0) \cup (x_r, 1-\varepsilon)) \leq 1/2$. Therefore:
    \begin{align}
        V_B((x_r, 1)) &\leq \frac{1}{2} - V_B((0, x_0)) + V_B((1-\varepsilon, 1)) = \frac{1}{2} - V_B((0, \widetilde{m}_B)) + V_B((x_0, x_1)) + V_B((1-\varepsilon, 1)) \notag \\
        &\leq V_B((x_0, x_1)) + 2\Delta \varepsilon \,. \label{eq:vb-xr-1-less-than-vb-x0-x1-plus-2-delta-epsilon}
    \end{align}
    Moving to to the negative $x$'s, fix $i \in \{-\ell+1, \ldots, 0\}$ and let $z^*$ be the exact solution Lemma \ref{lem:binary-search} aimed to find, i.e. such that $V_B((z^*, x_i) \cup (x_2, 1)) = 1/2$. We then have $|x_{i-1} - z^*| \leq \varepsilon$. By the choice of $x_1$ and $x_2$, we also have:
    \begin{align}
        \frac{1}{2} &= V_B((z^*, x_i) \cup (x_2, 1)) = V_B((x_1, 1)) - V_B((x_1, x_2)) + V_B((z^*, x_i)) \notag  \\
        &= V_B((0, \widetilde{m}_B)) + \left(V_B((z^*, x_i)) - V_B((x_1, x_2))\right) \,.
    \end{align}
    We have $|V_B((\widetilde{m}_B, 1)) - 1/2| \leq \Delta \varepsilon$ by the construction of $\widetilde{m}_B$ and $|V_B((x_0, x_1)) - V_B((x_1, x_2))| \leq 2\Delta \varepsilon$ by the analysis of the positive $x$'s, so by the triangle inequality we have:
    \begin{align}
        \left|V_B((x_{i-1}, x_i)) - V_B((x_0, x_1))\right| &\leq \Delta \varepsilon + 2\Delta \varepsilon = 3\Delta \varepsilon
    \end{align}
    Similarly to $(x_r, 1)$, since the search failed when trying to find $x_{-\ell-1}$, we have $V_B((\varepsilon, x_{-\ell}) \cup (x_2, 1)) \leq 1/2$. Therefore:
    \begin{align}
        V_B((0, x_{-\ell})) &\leq \frac{1}{2} - V_B((x_2, 1)) + V_B((0, \varepsilon)) = \frac{1}{2} - V_B((\widetilde{m}_B, 1)) + V_B((x_1, x_2)) + V_B((0, \varepsilon)) \\
        &\leq V_B((x_0, x_1)) + 4\Delta \varepsilon \label{eq:vb-0-xminusl-less-than-4-delta-epsilon}
    \end{align}
    Combining \eqref{eq:vb-xr-1-less-than-vb-x0-x1-plus-2-delta-epsilon} and \eqref{eq:vb-0-xminusl-less-than-4-delta-epsilon}, the cake uncovered by any of the intervals has value at most $2V_B((x_0, x_1)) + 6\Delta \varepsilon$ to Bob. Invoking Lemma \ref{lem:known-bob-intervals-gives-stackelberg-cut} with interval values of $V_B((x_0, x_1)) \pm 3\Delta\varepsilon$ and uncovered cake of value at most $2(V_B((x_0, x_1)) + 3\Delta\varepsilon)$, we get that there exists a $(k+4)(V_B((x_0, x_1))+3\Delta\varepsilon)\Delta/\delta$-Stackelberg cut among those that $a_1^*, \ldots, a_k^*$ was chosen from. Therefore, the cut $a_1^*, \ldots, a_k^*$ is at least that good. Factoring in the extra transfer to Bob in step 5 and using $V_B((x_0, x_1)) \leq \Delta \eta$, we get that the cut $a_1, \ldots, a_k$ is $O(k \eta + \varepsilon/\eta)$-Stackelberg if Bob prefers the piece Alice expects him to.

    We can now analyze the three sources of Alice's regret: (i) loss from $a_1, \ldots, a_k$ not quite achieving her Stackelberg value, (ii) loss from Bob taking the ``wrong" piece once Alice starts cutting there, and (iii) the rounds needed for steps 1-4.

    \begin{itemize}
        \item[(i)] The cut $a_1, \ldots, a_k$ is $O(k\eta + \varepsilon/\eta)$-Stackelberg. Since $\varepsilon = O(\eta^2)$, this reduces to $O(k\eta)$ regret per round for a total of $O(Tk\eta)$.
        \item[(ii)] There are two possible issues here: Alice could be wrong about which piece Bob prefers, and Bob could use his regret bound to take the other piece anyway. Let $S$ be the piece Alice expects Bob to prefer and let $n = r+\ell$ be the number of sub-intervals. Using the upper bounds for Bob's value of each part of the cake, we have $(n+2)(V_B((x_0, x_1))+3\Delta\varepsilon) \geq 1$, so $n \geq 1/(V_B((x_0, x_1)) + 3\Delta\varepsilon) - 2$. Alice's expectation is based on $S$ containing at least $n/2$ sub-intervals, each of which has value at least $V_B((x_0, x_1)) - 3\Delta\varepsilon$ to Bob. Then:
        \begin{align}
            \frac{1}{2} - V_B(S) &\leq \frac{1}{2} - \frac{n}{2}(V_B((x_0, x_1)) - 3\Delta\varepsilon) \\
            &\leq \frac{1}{2} - \frac{1}{2} \cdot \left(\frac{V_B((x_0, x_1)) - 3\Delta\varepsilon}{V_B((x_0, x_1)) + 3\Delta\varepsilon} + 2(V_B((x_0, x_1)) - 3\Delta\varepsilon)\right) \\
            &= \frac{3\Delta\varepsilon}{V_B((x_0, x_1)) + 3\Delta\varepsilon} + V_B((x_0, x_1)) - 3\Delta\varepsilon \\
            &\leq \frac{3\Delta\varepsilon}{\delta \eta} + \Delta \eta \explain{Since $V_B((x_0, x_1)) \geq \delta(x_1-x_0)=\delta \eta$}
        \end{align}
        Therefore, the adjustment in step 5 gives Bob enough extra cake to guarantee he not only prefers $S$ to $\overline{S}$, but prefers it by at least $\Delta \eta$. He can therefore only take the wrong piece $O(f(T)/\eta)$ times.
        \item[(iii)] Alice uses Lemma \ref{lem:binary-search-with-regret} $O(n)$ times to find her cut points. Each use of Lemma \ref{lem:binary-search-with-regret} takes at most $O(f(T) \log(T) \log(1/\varepsilon)/\varepsilon)$ rounds. We can upper-bound $n$ by using the fact that $n$ intervals of value at least $V_B((x_0, x_1)) - 3\Delta\varepsilon$ fit into $[0, 1]$, so:
        \begin{align}
            n &\leq \frac{1}{V_B((x_0, x_1)) - 3\Delta\varepsilon} \leq \frac{1}{\delta \eta - 3\Delta\varepsilon} \in O(1/\eta) \,.
        \end{align}
        So it took Alice $O(f(T) \log(T) \log(1/\varepsilon)/(\eta \varepsilon))$ rounds to find her cut points.
    \end{itemize}
    Adding all three sources together, we get a regret bound of:
    \begin{align}
        O\left(Tk\eta + \frac{f(T)}{\eta} + \frac{f(T) \log(T) \log(1/\varepsilon)}{\eta \varepsilon}\right) \,.
    \end{align}
    Because $\varepsilon = O(\log(T) \sqrt{f(T)/(Tk)})$ and $f(T) \in o(T/\log^2 T)$, we have $\varepsilon \in o(1)$, so the term from (iii) dominates the term from (ii). Using the choice of $\eta = f(T)^{1/4} T^{-1/4} k^{-1/4} \log^{1/2} T$ gives a regret bound of:
    \begin{align}
        O\left(f(T)^{1/4} T^{3/4} k^{3/4} \log^{1/2} T + f(T)^{1/4} T^{3/4} \log\left(\frac{Tk^2}{f(T) \log^2 T}\right) k^{3/4} \log^{-1/2} T\right) \,.
    \end{align}
    Since $f(T)$ is at least inverse polynomial in $T$, the second term is dominated by the first. We therefore get the desired regret bound of $O(f(T)^{1/4} T^{3/4} k^{3/4} \log^{1/2} T)$.
\end{proof}

\kCutKnownAlphaUpper*

\begin{proof}
    Invoking Lemma \ref{lem:k-cut-non-myopic-general-upper-bound} with $f(T) = T^{\alpha}$ immediately gives the desired regret bound.
\end{proof}

\kCutNonMyopicUpperBoundUnknownAlpha*

\begin{proof}
    Invoking Lemma \ref{lem:k-cut-non-myopic-general-upper-bound} with $f(T) = T/\log^6 T$ immediately gives the desired regret bound.
\end{proof}

\subsection{Lower Bounds for Non-Myopic Bob with Public Learning Rate} \label{sec:known-alpha-lower-bounds}

All of the lower bounds in this section are based on the same construction for Bob's valuation. The idea is to hide a narrow ``spike" of nonuniform density in a larger region of uniform density. Because the spike's average density is the same as the surrounding region, Alice must cut inside the spike to find it. On the rest of the cake, the density alternates between high and low values in a way that forces Alice to ``waste" $k-1$ of her cut points if she wants a near-Stackelberg payoff.

The spiked valuation functions are defined in Definition \ref{def:generalized-spike}. Estimates for the lengths and values of various parts of the construction are in Lemma \ref{lem:generalized-spike-low-value} and are used throughout. Lemma \ref{lem:generalized-spiked-stackelberg} shows that Alice must find the spike to get close to her Stackelberg value, while Lemma \ref{lem:generalized-spiked-similarity} shows that the existence of a spike does not affect Bob's valuations very much. Lemma \ref{lem:alternating-region-too-few-cuts} shows that any partition that does not ``waste" $k-1$ of its cut points has a certain combinatorial property and Lemma \ref{lem:generalized-spiked-bad-cut-suboptimal} shows that any partition with that property cannot get Alice close to her Stackelberg value. Lemma \ref{lem:generalized-spiked-distinguishers} puts the final nail in the coffin by showing that even the partitions that do ``waste" $k-1$ of their cut points give Alice a lot of regret if she incorrectly guesses where the spike is.

\medskip 

We now construct Bob's valuation. The cake will be divided into alternating ``high-value" and ``low-value" regions in such a way that the high-value regions make up just over half the cake by Bob's valuation. The spike will be embedded in the first high-value region.

\begin{definition}
    \label{def:generalized-spike}
    Let $k \geq 2$. We partition the cake $[0, 1]$ into $k+1$ disjoint intervals that alternate between \emph{high-value} and \emph{low-value} types, starting with a high-value interval. Let the length of each of the $n_H^k=\lfloor k/2 \rfloor + 1$ high-value intervals be $\ell^k=\frac{1}{3(\lfloor k/2 \rfloor + 1/2)}$ and distribute the remaining length equally among the $n_L^k = \lceil k/2 \rceil$ low-value intervals. 
    
    Let $\mathcal{H}^k$ be the union of the high-value intervals and $\mathcal{L}^k$ be the union of the low-value intervals.
    Then the \emph{unspiked} value density function $\sigma_0^k$ is defined as:
    \begin{align}
        \sigma_0^k(x) &= \begin{cases}
            \frac{3}{2} & \text{if $x \in \mathcal{H}^k$} \\
            \frac{1-\frac{3}{2}\mu(\mathcal{H}^k)}{\mu(\mathcal{L}^k)} & \text{if $x \in \mathcal{L}^k$}
        \end{cases}
    \end{align}
    For $z \in [5/6, 11/12]$ and $w \in (0, 1/48]$, the \emph{spiked} value density function $\sigma_{w;z}^k$ is defined as:
    \begin{align}
        \sigma_{w;z}^k(x) &= \begin{cases}
            2 & \text{if $x/\ell^k \in (z-w, z]$} \\
            1 & \text{if $x/\ell^k \in (z, z+w)$} \\
            \sigma_0^k(x) & \text{otherwise}
        \end{cases}
    \end{align}
\end{definition}

We will typically not use the exact measurements of $\sigma_0^k$. The bounds of the following lemma often suffice.

\begin{restatable}{lemma}{generalizedSpikeLowValue}
    \label{lem:generalized-spike-low-value}
    For $k \geq 2$, the unspiked valuation function $\sigma_0^k$ has the following properties:
    \begin{itemize}
        \item The total value of the high-value intervals satisfies $\int_{\mathcal{H}^k} \sigma_0^k(x) \,\mathrm{d}x \in \left(\frac{1}{2}, \frac{2}{3}\right]$.
        \item For all $x \in \mathcal{L}^k$, we have $\sigma_0^k(x) \in \left(\frac{1}{2}, \frac{3}{4}\right)$.
        \item The length of each low-value interval is at least $\frac{5}{6k}$.
        \item The length $\ell^k$ of each high-value interval satisfies $\ell^k \in \left[\frac{4}{9k}, \frac{2}{3k}\right]$.
    \end{itemize}
\end{restatable}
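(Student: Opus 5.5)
The lemma is a direct computation from Definition~\ref{def:generalized-spike}. Write $m=\lfloor k/2\rfloor$, so $n_H^k=m+1$, $n_L^k=\lceil k/2\rceil$, and $\ell^k=\frac{1}{3(m+1/2)}=\frac{2}{3(2m+1)}$. The plan is to express every quantity in terms of $m$ and then split into the cases $k$ even ($k=2m$, $n_L^k=m$) and $k$ odd ($k=2m+1$, $n_L^k=m+1$), with $m\ge 1$ throughout.

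\textbf{Step 1: the high-value intervals.} Their total length is $\mu(\mathcal{H}^k)=(m+1)\ell^k=\frac{2(m+1)}{3(2m+1)}$, so
\[
\int_{\mathcal{H}^k}\sigma_0^k=\tfrac32\mu(\mathcal{H}^k)=\frac{m+1}{2m+1}.
\]
This is strictly greater than $1/2$, and since $m\ge1$ it is at most $2/3$, with equality at $m=1$. That gives the first bullet.

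\textbf{Step 2: the low-value density.} The low intervals carry the remaining value $1-\frac{m+1}{2m+1}=\frac{m}{2m+1}$. Their total length is
\[
\mu(\mathcal{L}^k)=1-\frac{2(m+1)}{3(2m+1)}=\frac{4m+1}{3(2m+1)}.
\]
Hence the low density equals $\frac{3m}{4m+1}$, which lies in $[3/5,3/4)\subset(1/2,3/4)$ for $m\ge1$. This gives the second bullet.

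\textbf{Step 3: the length of a low interval.} Each low interval has length $\frac{4m+1}{3(2m+1)n_L^k}$, which we need to show is at least $\frac{5}{6k}$.
\begin{itemize}
\item For $k=2m$, the inequality becomes $\frac{4m+1}{3m(2m+1)}\ge\frac{5}{12m}$, i.e.\ $4(4m+1)\ge5(2m+1)$, i.e.\ $6m\ge1$.
\item For $k=2m+1$, the inequality becomes $\frac{4m+1}{3(m+1)}\ge\frac56$, i.e.\ $8m+2\ge5m+5$, i.e.\ $m\ge1$.
\end{itemize}
Both hold, so the third bullet follows.

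\textbf{Step 4: the length $\ell^k$.} We need $\frac{4}{9k}\le\frac{2}{3(2m+1)}\le\frac{2}{3k}$.
\begin{itemize}
\item The upper bound is $k\le 2m+1$, which always holds.
\item The lower bound is $2(2m+1)\le 3k$. For $k=2m$ this reads $4m+2\le6m$, i.e.\ $m\ge1$. For $k=2m+1$ it reads $4m+2\le6m+3$, which is trivial.
\end{itemize}

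There is no real obstacle here. The only care needed is to carry out the parity split correctly in Steps~3 and~4, and to check that the edge case $k=2$ ($m=1$) meets the non-strict bounds: the value is exactly $2/3$ and $\ell^2=4/(9\cdot 2)$.
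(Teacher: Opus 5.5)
Your proof is correct and follows essentially the same route as the paper: a direct computation from Definition~\ref{def:generalized-spike}. The difference is cosmetic. You write every quantity in closed form in $m=\lfloor k/2\rfloor$ and split by parity, whereas the paper uses cruder bounds such as $\mu(\mathcal{L}^k)\in[5/9,2/3)$ and $\lceil k/2\rceil\le 2k/3$, together with the extremal values of $(\lfloor k/2\rfloor+1/2)/k$ at $k=2$ and $k=3$.
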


\begin{proof}
    For the first part, since $k \geq 2$ the total length of the high-value intervals is:
    \begin{align}
        \mu(\mathcal{H}^k) = n_H^k \ell^k = \frac{1}{3} \cdot \frac{\left\lfloor \frac{k}{2} \right\rfloor + 1}{\left\lfloor \frac{k}{2} \right\rfloor + \frac{1}{2}} \in \left(\frac{1}{3}, \frac{4}{9}\right] \label{eq:high-value-region-length-exact-expression}
    \end{align} 
    So because $\sigma_0^k(x) = 3/2$ for $x \in \mathcal{H}^k$, we have:
    \begin{align}
        \int_{\mathcal{H}^k} \sigma_0^k(x) \,\mathrm{d}x = \frac{3}{2} \mu(\mathcal{H}^k) \in \left(\frac{1}{2}, \frac{2}{3}\right]\,.
    \end{align}
    This proves the first part of the lemma. The $n_L^k = \lceil k/2 \rceil$ low-value intervals share the remaining value, which then lies in $\left[\frac{1}{3}, \frac{1}{2}\right)$, equally. Since $k \geq 2$, their total length is:
    \begin{align}
        \mu(\mathcal{L}^k) = 1-n_H^k \ell^k &= 1 - \frac{1}{3} \cdot \frac{\left\lfloor \frac{k}{2} \right\rfloor + 1}{\left\lfloor \frac{k}{2} \right\rfloor + \frac{1}{2}} \in \left[\frac{5}{9}, \frac{2}{3}\right) \label{eq:low-region-length-in-interval-5/9-2/3}
    \end{align}
    Let $y \in \mathcal{L}^k$. Since $\mu(\mathcal{H}^k) \leq 4/9$ and $\mu(\mathcal{L}^k) < 2/3$, we have:
    \begin{align}
        \sigma_0^k(y) &= \frac{1-\frac{3}{2}\mu(\mathcal{H}^k)}{\mu(\mathcal{L}^k)} > \frac{1}{2}
    \end{align}
    To get the upper bound, we compute $\sigma_0^k(y)$ explicitly using the exact expressions from \eqref{eq:high-value-region-length-exact-expression} and \eqref{eq:low-region-length-in-interval-5/9-2/3}:
    \begin{align}
        \sigma_0^k(y) = \frac{1-\frac{3}{2}\mu(\mathcal{H}^k)}{\mu(\mathcal{L}^k)} = 1 - \frac{\frac{1}{6}}{\frac{\left\lfloor \frac{k}{2} \right\rfloor + \frac{1}{2}}{\left\lfloor \frac{k}{2} \right\rfloor + 1}-\frac{1}{3}} < \frac{3}{4}
    \end{align}
    Which completes the proof of the second part.
    To prove the third part, divide the lower bound on $\mu(\mathcal{L}^k)$ \eqref{eq:low-region-length-in-interval-5/9-2/3} by $n_L^k$:
    \begin{align}
        \frac{5/9}{\lceil k/2 \rceil} &\geq \frac{5/9}{\frac{2}{3}k} \explain{$\lceil k/2 \rceil \leq 2k/3$ since $k \geq 2$} \\
        &= \frac{5}{6k},
    \end{align}
    which completes the proof of the third part.

    For the high-value intervals, we can bound the exact value of $\ell^k$ from below:
    \begin{align}
        \frac{1}{3 \cdot \left(\left\lfloor \frac{k}{2} \right\rfloor + \frac{1}{2} \right)} &\geq \frac{1}{3 \cdot \frac{3}{4}k} \explain{Since $(\lfloor k/2 \rfloor + 1/2)/k$ is maximized at $k=2$} \\
        &= \frac{4}{9k} \,.
    \end{align}
    We can also bound it  from above as follows:
    \begin{align}
        \frac{1}{3 \cdot \left(\left\lfloor \frac{k}{2} \right\rfloor + \frac{1}{2} \right)} &\leq \frac{1}{3 \cdot \frac{1}{2}k} \explain{Since $(\lfloor k/2 \rfloor + 1/2)/k$ is minimized at $k=3$} \\
        &= \frac{2}{3k} \,.
    \end{align}
    This completes the proof.
\end{proof}

\begin{restatable}{lemma}{generalizedSpikedStackelberg}
    \label{lem:generalized-spiked-stackelberg}
    Suppose Alice's value density is  uniform. Her Stackelberg value in the $k$-cut game for $k \geq 2$ against a Bob with value density $\sigma_0^k$ is $\frac{2}{3}$, while against a Bob with valuation $\sigma_{w;z}^k$ for every $z \in [5/6, 11/12]$ and $w \in (0, 1/48]$ it is at least $\frac{2}{3} + \frac{4}{27k}w$.
\end{restatable}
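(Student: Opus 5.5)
The plan is to prove both claims by explicit constructions together with a pointwise density bound. Alice's density is uniform, so her value for a piece equals its length. Bob always takes a piece worth at least $1/2$ to him. Write $m = \lfloor k/2 \rfloor$, so that $n_H^k = m+1$ and $\ell^k = \frac{1}{3(m+1/2)}$.

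\emph{Unspiked value.} First, the upper bound $2/3$. Since $\sigma_0^k \le 3/2$ everywhere (the low density is $<3/4$ by Lemma~\ref{lem:generalized-spike-low-value}), any piece worth at least $1/2$ to Bob has length at least $1/3$, so Alice keeps at most $2/3$.

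For achievability, I compute exactly
\[
\frac{3}{2}\mu(\mathcal{H}^k) = \frac{m+1}{2(m+1/2)} = \frac12 + \frac{3}{4}\ell^k .
\]
The alternating partition whose $k$ cuts lie on the $k$ interval boundaries gives Bob exactly $\mathcal{H}^k$. I then move the cut at the right end of the first high interval $[0,\ell^k]$ leftward by $\ell^k/2$. This still uses $k$ cuts, and the piece $\ell^k/2 < \ell^k$ removed is of density $3/2$. Bob's piece is now entirely of density $3/2$ and has value exactly $1/2$, hence length $1/3$. Ties go to Alice, so she obtains $2/3$.

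\emph{Spiked value.} Here I again exhibit an explicit $k$-cut. The spike $((z-w)\ell^k,(z+w)\ell^k)$ lies strictly inside the first high interval, because $z\pm w \in [5/6-1/48,\,11/12+1/48] \subset (0,1)$. I place the first cut at $z\ell^k$ instead of at $\ell^k$. Bob then receives $[0,z\ell^k]$, which contains the density-$2$ half of the spike, together with the remaining $m$ high intervals, all via the boundary cuts. His value is
\[
\frac12 + \frac34\ell^k - \frac32(1-z)\ell^k + \frac{w\ell^k}{2} \;\ge\; \frac12 + \frac{\ell^k}{2},
\]
using $1-z \le 1/6$. The excess $s$ over $1/2$ satisfies $s \le \frac34\ell^k + \frac{w\ell^k}{2} < \ell^k$.

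I remove this excess by shifting one endpoint of the second high interval inward by $d = 2s/3 < \ell^k$. This interval exists since $m \ge 1$, and the shift only moves an existing cut, so the count stays at $k$. Afterwards Bob's piece is worth exactly $1/2$. It consists of density-$3/2$ cake plus the segment $((z-w)\ell^k, z\ell^k]$ of density $2$ and length $w\ell^k$. Its length $L$ therefore satisfies
\[
\frac32 L + \frac{w\ell^k}{2} = \frac12, \qquad\text{so}\qquad L = \frac13 - \frac{w\ell^k}{3}.
\]
Alice keeps $1 - L = \frac23 + \frac{w\ell^k}{3} \ge \frac23 + \frac{4w}{27k}$, using $\ell^k \ge \frac{4}{9k}$ from Lemma~\ref{lem:generalized-spike-low-value}. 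Since Bob's value is exactly $1/2$ and ties go to Alice, this is a valid lower bound on her Stackelberg value.

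The main obstacle is the bookkeeping in the last step. It needs four checks: the first shifted cut stays within the first high interval, the excess $s$ is positive, the shift $d$ fits inside the second high interval, and every modification reuses existing boundary cuts so no extra cut is introduced. The explicit identity $\frac32\mu(\mathcal{H}^k) = \frac12 + \frac34\ell^k$ is what makes these checks go through.
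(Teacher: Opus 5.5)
Your proposal is correct and follows the paper's own construction. For $\sigma_0^k$ you give Bob $\mathcal{H}^k$ via the boundary cuts and trim it to value $1/2$ (length $1/3$). For $\sigma_{w;z}^k$ you move the first cut to $z\ell^k$ and trim the excess from a later high interval, so Bob's density-$2$ segment of length $w\ell^k$ saves Alice exactly $w\ell^k/3 \ge \frac{4w}{27k}$. You also justify the upper bound $2/3$ explicitly via $\sigma_0^k \le 3/2$, which the paper only gestures at. One detail to fix: when $k=2$ the second high interval is the last piece of cake, so the cut you shift must be its left endpoint, which is always an existing cut.
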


\begin{proof}
    First consider $\sigma_0^k$. The ideal cut would be one where Bob receives only cake with his maximum value density. By the second part of Lemma \ref{lem:generalized-spike-low-value}, such a cut will give Bob only high-value intervals.
    Since there are $k+1$ total intervals, it is possible to make a piece consisting exactly of $\mathcal{H}^k$ with $k$ cuts. By the first part of Lemma \ref{lem:generalized-spike-low-value}, a Bob with value density $\sigma_0^k$ would be satisfied with such a piece. Moving the cut points inwards until the piece has value exactly $1/2$ to Bob results in a piece with measure $(1/2)/(3/2) = 1/3$, so under such a partition Alice receives the other $2/3$ as her Stackelberg value.

    Now consider $\sigma_{w;z}^k$. Here, Alice can start to improve her payoff by putting the first cut at a $z$ proportion through the first high-value interval and the other $k-1$ cuts at the last $k-1$ interval boundaries. Denote the resulting piece, consisting of all of $\mathcal{H}^k$ except for a $1-z$ proportion of the first high-value interval, by $S$. Since $z \geq 5/6$, Bob's value of $S$ is then at least:
    \begin{align}
        \int_S \sigma_{w;z}^k(x) \,\mathrm{d}x \geq \frac{3}{2}\left(z + \left\lfloor \frac{k}{2}\right\rfloor\right)\ell^k = \frac{3}{2} \cdot \frac{z + \left\lfloor \frac{k}{2}\right\rfloor}{3 \cdot \left(\left\lfloor \frac{k}{2}\right\rfloor + \frac{1}{2}\right)} &\geq \frac{1}{2} \cdot \frac{\left\lfloor \frac{k}{2}\right\rfloor + \frac{5}{6}}{\left\lfloor \frac{k}{2} \right\rfloor + \frac{1}{2}} > \frac{1}{2}
    \end{align}
    So again by moving the last $k-1$ cut points slightly inwards, Alice can give Bob exactly $1/2$ in value. The difference between this partition and the partition against $\sigma_0^k$ is that Bob receives $w\ell^k$ cake with density $2$. Compared to making up that much value with density-$3/2$ cake, Alice saves an extra $\frac{2-3/2}{3/2}w\ell^k = w\ell^k/3$ cake for herself. Therefore, her Stackelberg value is at least:
    \begin{align}
        u_A^*(k) &\geq \frac{2}{3} + \frac{w\ell^k}{3} \\
        &\geq \frac{2}{3} + \frac{w}{3} \cdot \frac{4}{9k} \explain{By the fourth part of Lemma \ref{lem:generalized-spike-low-value}} \\
        &= \frac{2}{3} + \frac{4}{27k}w,
    \end{align}
    which completes the proof.
\end{proof}

\begin{restatable}{lemma}{generalizedSpikedSimilarity}
    \label{lem:generalized-spiked-similarity}
    Let $S$ be an arbitrary measurable subset of $[0, 1]$. Let $z \in [5/6, 11/12]$ and $w \in (0, 1/48]$. Then:
    \begin{align*}
        \left|\int_S \sigma_0^k(x) \,\mathrm{d}x - \int_S \sigma_{w;z}^k(x) \,\mathrm{d}x \right| &\leq \frac{2}{3k}w
    \end{align*}
\end{restatable}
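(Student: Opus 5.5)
The two densities agree outside the spike region. So the plan is to reduce the difference of integrals to an integral over the spike and bound it pointwise.

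Let $J = \{x : x/\ell^k \in (z-w, z+w)\}$, the interval of length $2w\ell^k$ on which $\sigma_{w;z}^k$ and $\sigma_0^k$ can differ. First I check that $J$ lies inside the first high-value interval $[0,\ell^k]$. Since $z \in [5/6, 11/12]$ and $w \le 1/48$, we have $z - w \ge 5/6 - 1/48 > 0$ and $z + w \le 11/12 + 1/48 < 1$. Hence $\sigma_0^k = 3/2$ on all of $J$. Outside $J$ the densities coincide by Definition~\ref{def:generalized-spike}, so
\[
\int_S \sigma_0^k(x)\,\mathrm{d}x - \int_S \sigma_{w;z}^k(x)\,\mathrm{d}x = \int_{S\cap J} \bigl(\sigma_0^k(x) - \sigma_{w;z}^k(x)\bigr)\,\mathrm{d}x .
\]

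On the left half of the spike, $(z-w)\ell^k < x \le z\ell^k$, the integrand is $3/2 - 2 = -1/2$. On the right half it is $3/2 - 1 = 1/2$. So the integrand has absolute value exactly $1/2$ on $J$.

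Bounding crudely, the difference is at most $\tfrac12\mu(S\cap J)$ in absolute value. A sharper bound uses the sign structure: the positive and negative contributions each come from a half-spike of length $w\ell^k$. They therefore partially cancel, and the total lies in $[-\tfrac12 w\ell^k, \tfrac12 w\ell^k]$. This gives
\[
\left|\int_S \sigma_0^k - \int_S \sigma_{w;z}^k\right| \le \tfrac12 w \ell^k .
\]

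Finally, the fourth part of Lemma~\ref{lem:generalized-spike-low-value} gives $\ell^k \le \frac{2}{3k}$. So the bound is at most $\frac{w}{3k} \le \frac{2}{3k}w$.

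Even the cruder estimate $\tfrac12 \cdot 2w\ell^k = w\ell^k \le \frac{2}{3k}w$ already matches the stated constant exactly, so the cancellation argument is not needed. The only step needing care is confirming that the spike lies within the first high-value interval, where $\sigma_0^k = 3/2$. That is the short arithmetic check above on the ranges of $z$ and $w$.
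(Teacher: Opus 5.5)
Your proposal is correct and matches the paper's argument: restrict to the spike $((z-w)\ell^k,(z+w)\ell^k)$, where the integrand is $\pm\frac{1}{2}$, bound the difference by $w\ell^k$, and finish with $\ell^k \le \frac{2}{3k}$. Two of your additions are valid but not needed for the stated constant: the explicit check that the spike lies inside the first high-value interval, which the paper uses implicitly, and the cancellation remark giving the sharper bound $\frac{1}{2}w\ell^k$.
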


\begin{proof}
    The only difference between $\sigma_0^k$ and $\sigma_{w;z}^k$ is a small interval in the first high-value region. Specifically, we have $\sigma_0(x) \neq \sigma_w^z(x)$ only for $x \in ((z-w)\ell^k, (z+w)\ell^k)$. Accordingly:
    \begin{align}
        \left|\int_S \sigma_0^k(x) \,\mathrm{d}x - \int_S \sigma_{w;z}^k(x) \,\mathrm{d}x \right| &= \left| \int_{((z-w)\ell^k, (z+w)\ell^k) \cap S} \sigma_0^k(x) - \sigma_{w;z}^k(x) \,\mathrm{d}x \right| \\
        &\leq  \left| \int_{((z-w)\ell^k, z\ell^k) \cap S} -\frac{1}{2}\,\mathrm{d}x \right| + \left| \int_{(z\ell^k, (z+w)\ell^k) \cap S} \frac{1}{2} \,\mathrm{d}x\right| \\
        &\leq \int_{(z-w)\ell^k}^{z\ell^k} \frac{1}{2} \,\mathrm{d}x + \int_{z\ell^k}^{(z+w)\ell^k} \frac{1}{2} \,\mathrm{d}x \\
        &= w\ell^k
    \end{align}
    And by Lemma \ref{lem:generalized-spike-low-value} we have $\ell^k \leq 2/(3k)$, which completes the proof.
\end{proof}

\begin{restatable}{lemma}{alternating-region-too-few-cuts}
    \label{lem:alternating-region-too-few-cuts}
    Suppose an interval $[a, b]$ is partitioned into $n \geq 2$ connected regions, alternately colored red and blue. If $[a, b]$ is split by an alternating partition with fewer than $n-1$ cuts, then at least one of the following holds:
    \begin{itemize}
        \item One piece will contain an entire red region and an entire blue region, or
        \item Both pieces will contain an entire region of the same color.
    \end{itemize}
\end{restatable}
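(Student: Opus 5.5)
The plan is a pigeonhole argument on which regions contain no cut point in their interior. Label the regions $R_1, \ldots, R_n$ from left to right, so their colors alternate. An alternating partition restricted to $[a,b]$ uses at most $n-2$ cut points lying in $[a,b]$; cut points outside $[a,b]$ only fix which piece $[a,b]$ begins in. Call a region \emph{whole} if no cut point lies in its interior. A whole region is then entirely contained in one of the two pieces; cut points at region boundaries are harmless, since the piece assignment is constant on the open interior. Each cut point lies in the interior of at most one region, so at most $n-2$ regions are non-whole and at least $2$ regions are whole.

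Next, pick two whole regions $R_i$ and $R_j$ with $i<j$ and split into cases according to their colors and the pieces containing them.
\begin{itemize}
\item If $R_i$ and $R_j$ have different colors and lie in the same piece, the first conclusion holds.
\item If they have the same color and lie in different pieces, the second conclusion holds.
\end{itemize}
The remaining bad cases are: different colors but different pieces, or the same color and the same piece. In these cases the plan is to count cut points more carefully instead of just using two whole regions.

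For the refined count, I would use that each cut point flips the piece. Suppose every pair of whole regions falls into a bad case. Then the piece containing a whole region is determined by its color, either directly or by the opposite rule. Take two whole regions $R_i$ and $R_j$ that are consecutive among the whole regions, and look at the segment of regions $R_i, R_{i+1}, \ldots, R_j$. Between them lie $j-i-1$ non-whole regions, each containing at least one interior cut point, and there may also be cuts at the region boundaries.

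The parity of the number of cuts strictly between $R_i$ and $R_j$ equals the parity of whether their pieces differ. In a bad case, this parity matches the color difference, and the color difference has parity $j-i$. So the number of cuts in that segment is at least $j-i-1$ and is congruent to $j-i \pmod 2$. Hence it is at least $j-i$, which is one more than the number of non-whole regions between them.

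Summing this over consecutive pairs of whole regions, and adding at least one cut per non-whole region before the first whole region and after the last, gives at least $(n - w) + (w-1) = n-1$ cuts, where $w \ge 2$ is the number of whole regions. This contradicts the assumption of fewer than $n-1$ cuts. The leftover configurations, with whole regions in good cases, give one of the two conclusions.

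The main obstacle is getting this parity-counting bookkeeping exactly right. In particular, one must handle cuts placed exactly on region boundaries or coinciding cut points, both of which are allowed in alternating partitions. I would treat each such cut as belonging to the gap between regions, so that it contributes to the flip count without making any region non-whole.
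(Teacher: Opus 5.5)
Your proof is correct, and it takes a genuinely different route from the paper's. The paper argues by a minimal counterexample in $(m,n)$. It first shows that no cut can lie in the first or last region, since otherwise one drops that region and invokes minimality of $m$. It then uses parity exactly once, to rule out $m=n-2$: the first and last regions are uncut, and the parity of $m$ forces one of the two conclusions. Finally, it observes that the leftmost cut must lie weakly in the second region, so the last $n-2$ regions carry at most $m-1<n-3$ cuts and minimality applies again.

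You instead give a direct, non-inductive count. If no pair of uncut regions witnesses a conclusion, then the piece of every uncut region is a fixed function of its color. So across each gap between consecutive uncut regions $R_i,R_j$, the number of cuts (with multiplicity, boundary cuts included) is $\equiv j-i \pmod 2$. It therefore exceeds the $j-i-1$ cuts forced by the cut regions in between. Summing over the $w-1$ gaps gives at least $(n-w)+(w-1)=n-1$ cuts.

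Your version applies parity in every gap rather than only at the two ends. It handles coinciding cuts and cuts on region boundaries explicitly, and it yields the slightly sharper fact that the witnessing pair can be taken to be two consecutive uncut regions. The paper's version is shorter, but it is organized as a reduction to sub-instances, and it treats boundary cuts only informally (``weakly in the second region'').

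The bookkeeping point you flagged does go through. The gap segments $[r_{i_t},l_{i_{t+1}}]$ and the interiors of the cut regions before $R_{i_1}$ and after $R_{i_w}$ are pairwise disjoint because regions have positive length, so the lower bounds add.
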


\begin{proof}
    Suppose for contradiction that the statement were false and let $m$ be the smallest number of cuts for which a counterexample exists. Let $n > m+1$ be the minimal number of regions among all $m$-cut counterexamples, and fix a counterexample with $m$ cuts and $n$ regions.

    We first claim that this counterexample cannot put a cut within the first or last region. Without loss of generality, we focus on the first region. If there is at least one cut in the first region, there are at most $m-1$ cuts in all the other regions. But then those $n-1$ regions have at most $m-1 < n-2$ cuts among them, so by the minimality of $m$ they must satisfy the lemma statement. Adding the first region back into that division does not change that, contradicting the choice of $m$.

    Given that there are no cuts in the first or last regions, we claim that we cannot have $m = n-2$. If it were, then the parity is wrong:
    \begin{itemize}
        \item If $n$ is even, then its first and last regions are different colors. But with an even number of cuts, they end up in the same piece.
        \item If $n$ is odd, then its first and last regions are the same color. But with an odd number of cuts, they end up in different pieces.
    \end{itemize}
    So we have $m < n-2$.

    We can also rule out some of the smallest values for $n$ and $m$. If $m=0$, then because $n \geq 2$ we would trivially have a red and blue region in the same piece. Therefore $m \geq 1$, so paired with $m < n-2$ we have $n \geq 4$.

    Now consider the leftmost cut $x$. As seen earlier, we cannot have $x$ in the first region. We also cannot have $x$ past the second region, as then the first two regions would be in the same piece. So $x$ must be (weakly) in the second region. But then there are at most $m-1 < n-3$ cuts in the last $n-2$ regions, so the last $n-2$ regions must be divided according to the lemma statement. Adding the first two regions back into the division does not change that, contradicting the choice of $m$.

    As every case leads to a contradiction, there must not be such an $m$, completing the proof.
\end{proof}

\begin{restatable}{lemma}{generalizedSpikedBadCutSuboptimal}
    \label{lem:generalized-spiked-bad-cut-suboptimal}
    Let $\eps > 0$. Let $S \subseteq [0, 1]$ be measurable such that $|\int_S \sigma_0^k(x) \,\mathrm{d}x - 1/2| \leq \eps$. Then:
    \begin{itemize}
        \item If $S$ contains an entire low-value interval, then $\mu(S) > \frac{1}{3} + \frac{5}{12k} - \frac{2}{3}\eps$.
        \item If $S$ contains at least $\frac{19}{24}$ of a high-value interval, then $\mu(S) \leq \frac{2}{3} - \frac{7}{36k} + \frac{4}{3}\eps$.
    \end{itemize}
\end{restatable}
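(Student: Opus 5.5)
The plan is a direct accounting argument. It writes $h = \mu(S \cap \mathcal{H}^k)$ and $l = \mu(S \cap \mathcal{L}^k)$, and lets $c \in (1/2, 3/4)$ denote the constant low-value density. By the second part of Lemma~\ref{lem:generalized-spike-low-value},
\begin{align*}
V := \int_S \sigma_0^k(x)\,\mathrm{d}x = \tfrac{3}{2}h + c\,l \in \left[\tfrac12 - \eps, \tfrac12 + \eps\right].
\end{align*}
Each bullet becomes a one-variable linear optimization. I solve the value identity for one of $h, l$ and substitute into $\mu(S) = h + l$. The containment hypothesis then pins the remaining variable at the worst-case endpoint.

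\emph{First bullet.} Eliminating $h$ gives
\begin{align*}
\mu(S) = \tfrac{2}{3}V + l\left(1 - \tfrac{2c}{3}\right).
\end{align*}
If $S$ contains a whole low-value interval, then $l \geq \frac{5}{6k}$ by the third part of Lemma~\ref{lem:generalized-spike-low-value}. Since $c < 3/4$, the coefficient satisfies $1 - 2c/3 > 1/2$. Using $V \geq 1/2 - \eps$ yields
\begin{align*}
\mu(S) > \tfrac{1}{3} - \tfrac{2}{3}\eps + \tfrac{5}{12k},
\end{align*}
which is exactly the claim. This part should be routine.

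\emph{Second bullet.} Eliminating $l$ gives
\begin{align*}
\mu(S) = \tfrac{V}{c} + h\left(1 - \tfrac{3}{2c}\right).
\end{align*}
The coefficient of $h$ is negative (indeed below $-1$, since $c < 3/4$). So the worst case is the smallest allowed $h$, namely $h = \frac{19}{24}\ell^k$, together with $V = 1/2 + \eps$. I then substitute the exact expressions from the proof of Lemma~\ref{lem:generalized-spike-low-value}. With $m = \lfloor k/2 \rfloor$, these are $\ell^k = 1/(3(m + 1/2))$, $\mu(\mathcal{H}^k) = (m+1)\ell^k$, $\mu(\mathcal{L}^k) = 1 - \mu(\mathcal{H}^k)$ and $c = (1 - \frac32\mu(\mathcal{H}^k))/\mu(\mathcal{L}^k)$. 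I simplify the $\eps$-free part to $\frac23$ minus a positive multiple of $\ell^k$, and convert that multiple to $\frac{7}{36k}$ using $\ell^k \geq \frac{4}{9k}$.

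A useful sanity check for this step: when $h = \mu(\mathcal{H}^k)$ and $l = 0$, the set has measure $\frac{1}{3}$-ish. The bound should measure how much of the removed high-value cake ($\frac{5}{24}\ell^k$ per interval) must be replaced by roughly $\frac{3}{2c} > 2$ times as much low-value cake. There is also an additional cap $l \leq \mu(\mathcal{L}^k) < 2/3$, which I can use if it helps.

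The main obstacle I expect is the $\eps$ coefficient in the second bullet. The naive route gives $\eps/c$, and $c > 1/2$ only yields $2\eps$, not the stated $\frac43\eps$. I would first try to get the stated coefficient by arguing through the complement $\overline{S}$. Its value is at least $\frac12 - \eps$, and its measure $1 - \mu(S)$ can be lower-bounded with the first-bullet identity, where the $\eps$ enters with coefficient $\frac{2}{3}$. If that still does not give $\frac43$, I would check whether the exact value of $c$ for small $m$ (e.g.\ $m=1$, where $c$ is closest to $3/4$) justifies the constant. Failing that, I would accept a slightly weaker $\eps$ coefficient, which only changes constants downstream, since $\eps$ is taken to be $O(w/k)$ in later applications.
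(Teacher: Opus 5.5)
Your first bullet is correct and is the paper's argument. The paper bounds $\frac12-\eps<\frac32\mu(S)-\frac34\mu(S\cap\mathcal{L}^k)$ and uses $\mu(S\cap\mathcal{L}^k)\ge\frac{5}{6k}$; your identity $\mu(S)=\frac23V+l(1-\frac{2c}{3})$ is the same computation.

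For the second bullet, the obstacle you anticipate is real and cannot be removed, because the stated coefficient $\frac43$ on $\eps$ is false for even $k$. Your relation $\mu(S)=\frac{V}{c}+h(1-\frac{3}{2c})$ is an identity. For $\eps\le\frac{7}{16}\ell^k$, the resulting bound is attained by taking exactly $\frac{19}{24}$ of one high-value interval plus low-value cake. So passing to $\overline{S}$, or checking small $m$, cannot push the $\eps$-coefficient below $1/c$. With $m=\lfloor k/2\rfloor$ the exact values are $c=\frac{3m}{4m+1}$ and $\ell^k=\frac{2}{3(2m+1)}$, and your bound becomes
\[
\mu(S)\le\frac{1}{c}\Bigl(\frac12+\eps\Bigr)-\frac{19}{24}\ell^k\Bigl(\frac{3}{2c}-1\Bigr)=\frac23-\frac{7}{72m}+\frac{4m+1}{3m}\,\eps .
\]
For even $k$ this equals $\frac23-\frac{7}{36k}+(\frac43+\frac{2}{3k})\eps$, and it is sharp. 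Concretely, take $k=2$, so $c=\frac35$ and $\ell^2=\frac29$. The set $S=[0,\frac{19}{108}]\cup[\frac29,\frac29+\frac{85}{216}+\frac53\eps]$ has $\int_S\sigma_0^2=\frac12+\eps$ and $\mu(S)=\frac{41}{72}+\frac53\eps$. This exceeds the claimed $\frac23-\frac{7}{72}+\frac43\eps=\frac{41}{72}+\frac43\eps$ for every $\eps\in(0,\frac{7}{72}]$.

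So your fallback is the right statement: $\mu(S)\le\frac23-\frac{7}{36k}+\frac53\eps$. One step of your plan needs fixing. The $\eps$-free part equals $\frac23-\frac{7\ell^k}{24(1-\frac32\ell^k)}$. Bounding $\ell^k\ge\frac{4}{9k}$ only guarantees a deficit of $\frac{7}{18(3k-2)}$, which is smaller than $\frac{7}{36k}$ for $k\ge3$. Instead, substitute the exact $\ell^k$ to get $\frac{7}{72m}$, then use $m\le k/2$.

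The paper argues differently. It compares $S$ with $Y$, consisting of all of $\mathcal{L}^k$ plus $\frac{19}{24}\ell^k$ of high-value cake, worth $\frac12+\frac{7}{16}\ell^k$. It then applies $1/c>\frac43$ to the low-value mass that must be removed, which validly gives $\mu(S)\le\frac23-\frac{7}{24}\ell^k+\frac43\eps$. Finally it replaces $\frac{7}{24}\ell^k$ by $\frac{7}{36k}$ citing $\ell^k\le\frac{2}{3k}$, which is the wrong direction. That step is harmless for odd $k$, where $\ell^k=\frac{2}{3k}$, but it fails for even $k$.

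So the paper's route keeps the coefficient $\frac43$ at the cost of a weaker constant (only $\frac{7}{54k}$ when $k=2$). Your exact route keeps $\frac{7}{36k}$ and pays $\frac53$ on $\eps$. Yours is the version the later argument needs. In Lemma~\ref{lem:generalized-spiked-distinguishers}, with $\eps=\frac{2w}{3k}$ and $w\le\frac{1}{48}$, it gives $\mu(B)\le\frac23-\frac{7-40w}{36k}\le\frac23-\frac{37}{216k}<\frac23-\frac{1}{6k}$. The paper's valid intermediate bound does not reach the needed $\frac{1}{6k}$ margin when $k=2$.
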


\begin{proof}
    We first suppose that $S$ contains an entire low-value interval. Breaking up $\int_S \sigma_0(x) \,\mathrm{d}x$ into $S \cap \mathcal{L}^k$ and $S \cap \mathcal{H}^k$:
    \begin{align}
        \frac{1}{2} - \eps &\leq \int_S \sigma_0^k(x) \,\mathrm{d}x = \int_{S \cap \mathcal{L}^k} \sigma_0^k(x) \,\mathrm{d}x + \int_{S \cap \mathcal{H}^k} \sigma_0^k(x) \,\mathrm{d}x \\
        &< \frac{3}{4}\mu(S \cap \mathcal{L}^k) + \frac{3}{2} (\mu(S) - \mu(S \cap \mathcal{L}^k)) \explain{By the second part of Lemma \ref{lem:generalized-spike-low-value}}\\
        &= \frac{3}{2} \mu(S) - \frac{3}{4} \mu(S \cap \mathcal{L}^k) \\
        &\leq \frac{3}{2} \mu(S) - \frac{5}{8k} \explain{By the third part of Lemma \ref{lem:generalized-spike-low-value}, since $S$ contains a low-value region}
    \end{align}
    Solving for $\mu(S)$ gives the first part of the statement.

    On the other hand, suppose $S$ contains at least $19/24$ of a high-value interval. In order to maximize $\mu(S)$ while ensuring $\int_S \sigma_0^k(x) \,\mathrm{d}x \leq 1/2 + \eps$, the rest of $S$ should be taken from $\mathcal{L}^k$.
    
    We claim that containing all of $\mathcal{L}^k$ would push $\int_S \sigma_0(x) \,\mathrm{d}x$ too high. Let $X \subset \mathcal{H}$ such that $\mu(X) = (n_H^k - 1/2)\ell^k$. Then:
    \begin{align}
        \int_X \sigma_0^k(x) \,\mathrm{d}x = \frac{3}{2} \cdot \left(\left\lfloor \frac{k}{2}\right\rfloor + 1 - \frac{1}{2}\right) \cdot \frac{1}{3 \cdot \left(\left\lfloor \frac{k}{2} \right\rfloor + \frac{1}{2} \right)} = \frac{1}{2}
    \end{align}
    Then its complement $\overline{X}$, consisting of all of $\mathcal{L}^k$ plus $\ell^k/2$ measure of $\mathcal{H}^k$, satisfies $\int_{\overline{X}} \sigma_0(x) \,\mathrm{d}x=1/2$. Let $Y \subset [0, 1]$ consist of all of $\overline{X}$ and an additional $7\ell^k/24$ measure in $\mathcal{H}^k$, for a total of $19\ell^k/24$ measure of $\mathcal{H}^k$. Then:
    \begin{align}
        \int_Y \sigma_0^k(x) \,\mathrm{d}x &= \int_{\overline{X}} \sigma_0^k(x) \,\mathrm{d}x + \frac{7\ell}{24} \cdot \frac{3}{2} = \frac{1}{2} + \frac{7}{16}\ell
    \end{align}
    So compared to $Y$, the set $S$ must have $\frac{7}{16}\ell^k-\eps$ less value in $\mathcal{L}^k$. By the second part of Lemma \ref{lem:generalized-spike-low-value}, that represents at least $\frac{4}{3}\left(\frac{7}{16}\ell^k-\eps\right) = \frac{7}{12}\ell^k - \frac{4}{3}\eps$ in length, so we have:
    \begin{align}
        \mu(S) &\leq \mu(Y) - \frac{7}{12}\ell^k + \frac{4}{3}\eps \\
        &= \left(1-\left(\left\lfloor \frac{k}{2} \right\rfloor + 1\right)\ell^k\right) + \frac{19}{24}\ell^k - \frac{7}{12}\ell^k + \frac{4}{3}\eps \explain{Splitting $Y$ into low-value and high-value} \\
        &= 1 - \left(\left\lfloor \frac{k}{2}\right\rfloor + \frac{1}{2}\right)\ell^k - \frac{7\ell^k}{24} + \frac{4}{3}\varepsilon \label{eq:mu-S-less-than-term-with-unsimplified-ell}
    \end{align}
    Substituting in the exact value of $\ell^k=1/(3(\lfloor k/2\rfloor + 1/2))$ simplifies the second term of \eqref{eq:mu-S-less-than-term-with-unsimplified-ell}:
    \begin{align}
        \mu(S) &\leq 1 - \frac{1}{3} - \frac{7\ell^k}{24} + \frac{4}{3}\eps \\
        &\leq \frac{2}{3} - \frac{7}{36k} + \frac{4}{3}\eps, \explain{By the fourth part of Lemma \ref{lem:generalized-spike-low-value}}
    \end{align}
    which completes the proof.
\end{proof}

\begin{definition}[Distinguishing partitions]
    Let $z \in [5/6, 11/12]$ and $w \in (0, 1/48]$. We say a partition \emph{distinguishes $z$} if a Bob with value density $\sigma_0^k$ and a Bob with value density $\sigma_{w;z}^k$ prefer different pieces.
\end{definition}

\begin{restatable}{lemma}{generalizedSpikedDistinguishers}
    \label{lem:generalized-spiked-distinguishers}
    Let $z \in [5/6, 11/12]$ and $w \in (0, 1/48]$. Suppose Alice's value density is the uniform $v_A(x)=1$. If a $k$-cut distinguishes $z$, then Alice's payoff is at most $\frac{2}{3}-\frac{1}{6k}$ if Bob takes the piece preferred by $\sigma_0^k$.
\end{restatable}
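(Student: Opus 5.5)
The plan is to reduce everything to Lemma~\ref{lem:generalized-spiked-bad-cut-suboptimal}, applied either to the piece $S$ that a $\sigma_0^k$-Bob takes or to its complement $S'$. Alice's payoff is $1-\mu(S)=\mu(S')$, so it suffices to show $\mu(S)\geq \frac13+\frac{1}{6k}$.

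\textbf{Step 1: both pieces are nearly even for $\sigma_0^k$.} Since the two Bobs prefer different pieces, $\int_S\sigma_0^k\geq \frac12$ while $\int_S\sigma_{w;z}^k\leq\frac12$. By Lemma~\ref{lem:generalized-spiked-similarity}, both $S$ and $S'$ then have $\sigma_0^k$-value within $\eps:=\frac{2w}{3k}\leq\frac{1}{72k}$ of $\frac12$, so the hypothesis of Lemma~\ref{lem:generalized-spiked-bad-cut-suboptimal} holds for both pieces.

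\textbf{Step 2: some cut lies in the spike, and the left part of the first high-value interval goes to $S'$.} If no cut point lies in $((z-w)\ell^k,(z+w)\ell^k)$, the whole spike is in one piece. Since $\sigma_0^k$ and $\sigma_{w;z}^k$ integrate to the same value over the spike, both Bobs would then value every piece identically and could not disagree. So let $c$ be the leftmost cut in the spike. To the left of $c$, the spike has density $2>\frac32$, so the piece containing the segment ending at $c$ gains value under $\sigma_{w;z}^k$ relative to $\sigma_0^k$.

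\textbf{Step 3, case (a): $c$ is the first cut.} Then the piece containing $[0,c]$ is exactly the one favored by $\sigma_{w;z}^k$. Under the disagreement, and in the generic configuration where the cuts inside the spike do not cancel each other's effect, this piece is $S'$. Since $c\geq (z-w)\ell^k\geq \frac{39}{48}\ell^k>\frac{19}{24}\ell^k$, the piece $S'$ contains at least $\frac{19}{24}$ of a high-value interval. The second part of Lemma~\ref{lem:generalized-spiked-bad-cut-suboptimal} gives $\mu(S')\leq\frac23-\frac{7}{36k}+\frac43\eps\leq \frac23-\frac{19}{108k}<\frac23-\frac1{6k}$.

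\textbf{Step 3, case (b): some cut precedes $c$.} Then at least two cuts lie in $[0,\ell^k]$, so at most $k-2$ cuts fall in $[\ell^k,1]$. This interval consists of the $k\geq 2$ remaining alternating intervals. Apply Lemma~\ref{lem:alternating-region-too-few-cuts} to these $k$ regions with fewer than $k-1$ cuts. Its conclusion yields one of two outcomes:
\begin{itemize}
\item $S$ contains an entire low-value interval. The first part of Lemma~\ref{lem:generalized-spiked-bad-cut-suboptimal} then gives $\mu(S)>\frac13+\frac5{12k}-\frac23\eps>\frac13+\frac1{6k}$.
\item $S'$ contains an entire high-value interval. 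The second part gives the same bound as in case (a).
\end{itemize}
To see this, note that if one piece contains both a red and a blue region, then either that piece is $S$ and $S$ holds a low interval, or it is $S'$ and $S'$ holds a high interval. If instead both pieces contain regions of the same color, that color is either low (so $S$ has a low interval) or high (so $S'$ has a high interval).

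\textbf{Main obstacle.} The delicate part is Step 2 in case (a): justifying rigorously that the $\sigma_{w;z}^k$-favored piece is the one containing $[0,c]$. Several cuts may fall inside the spike, and their contributions could partially cancel. I would handle this by computing the signed difference $\int_{S}(\sigma_{w;z}^k-\sigma_0^k)$ directly. This difference is $+\frac12$ on $(z-w,z]\ell^k$ and $-\frac12$ on $(z,z+w)\ell^k$. The segment of the spike before the first internal cut lies entirely in the $+$ half if that cut is left of $z\ell^k$. If it is not, a short case check should still show the sign is determined by the piece containing $[0,c]$. If this check fails, I would fall back on case (b), since two cuts in $[0,\ell^k]$ already suffice for the counting argument.
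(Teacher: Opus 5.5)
Your proposal is correct and follows the paper's proof essentially step for step: a cut must lie in the spike, two or more cuts in the first high-value interval trigger the counting argument via Lemma~\ref{lem:alternating-region-too-few-cuts} and Lemma~\ref{lem:generalized-spiked-bad-cut-suboptimal}, and otherwise the piece containing $0$ is the $\sigma_{w;z}^k$-favored one and holds more than $\frac{19}{24}$ of a high-value interval. Your ``main obstacle'' disappears if you split cases as the paper does, on whether exactly one or at least two cuts lie in $[0,\ell^k)$: any cancellation inside the spike requires two cuts there, and with a single cut $c$ the piece containing $0$ gains exactly $\frac{1}{2}\min\{c-(z-w)\ell^k,\,(z+w)\ell^k-c\}>0$ under $\sigma_{w;z}^k$.
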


\begin{proof}
    We will call the piece preferred by $\sigma_{w;z}^k$ ``Bob's piece" $B$ and the other one ``Alice's piece" $A$. At a high level, we will show that the partition must put nearly all of $\mathcal{H}^k$ in $B$ and nearly all of $\mathcal{L}^k$ in $A$. Doing so requires Alice to use $k-1$ of her cuts separating those regions, leaving only one cut to find $z$. The fact that the density-$2$ part of the spike is left of the density-$1$ part will then ensure $B$ has little value to Alice. A Bob with valuation $\sigma_0^k$ would take $A$, completing the proof.

    Any distinguishing partition must put at least one cut point in the spike, as Bob's average value across the spike is the same between $\sigma_0^k$ and $\sigma_{w;z}^k$. We claim that if Alice puts a second cut in the first interval, her payoff $V_A(B) = \mu(B)$ will be at most $\frac{2}{3}-\frac{1}{6k}$. If more than one cut point lies in the first interval, then at most $k-2$ cuts are left in the other $k$ intervals. By Lemma \ref{lem:alternating-region-too-few-cuts}, at least one of the following must then happen:
    \begin{itemize}
        \item[(i)] Both a low-value interval and a high-value interval are contained in $B$
        \item[(ii)] Both a low-value interval and a high-value interval are contained in $A$
        \item[(iii)] Both $A$ and $B$ contain a low-value interval
        \item[(iv)] Both $A$ and $B$ contain a high-value interval
    \end{itemize}
    By Lemma \ref{lem:generalized-spiked-similarity}, Bob's valuations of $B$ differ by at most $\frac{2}{3k}w$ between $\sigma_{w;z}^k$ and $\sigma_0^k$. Since this difference pushes the valuation from above $1/2$ to below it, we have $|\int_B \sigma_0^k(x) \,\mathrm{d}x - 1/2| \leq \frac{2}{3k}w$. Therefore, using Lemma \ref{lem:generalized-spiked-bad-cut-suboptimal} in cases (i)-(iv):
    \begin{itemize}
        \item (i) and (iv) guarantee a high-value interval is in $B$, so we have:
        \begin{align}
            \mu(B) \leq \frac{2}{3} - \frac{7}{36k} + \frac{4}{3} \cdot \frac{2}{3k}w = \frac{2}{3} - \frac{7 - 32w}{36k} \leq \frac{2}{3} - \frac{19}{108k}
        \end{align}
        \item (ii) and (iii) guarantee a low-value interval is in $A$, so we have:
        \begin{align}
            \mu(A) \geq \frac{1}{3} + \frac{5}{12k} - \frac{2}{3} \cdot \frac{2}{3k}w = \frac{1}{3} + \frac{15-12w}{36k}
        \end{align}
        In terms of $\mu(B)=1-\mu(A)$, we then have:
        \begin{align}
            \mu(B) \leq \frac{2}{3} - \frac{15-12w}{30k} \leq \frac{2}{3} - \frac{59}{120k}
        \end{align}
    \end{itemize}
    In all cases we have $\mu(B) \leq \frac{2}{3} - \frac{1}{6k}$ as desired.

    That leaves partitions where one cut point goes through the spike and the other $k-1$ cut points are in the last $k$ intervals. The piece containing $x=0$ is worth more under $\sigma_{w;z}^k$ than under $\sigma_0^k$, as the higher-value portion of the spike is on the left side. Therefore, that piece must be $B$. Since the first cut is at least $\frac{5}{6}-\frac{1}{48} > \frac{19}{24}$ of the way through the first high-value interval, piece $B$ must contain at least $\frac{19}{24}$ of a high-value interval. As in cases (i) and (iv) above, then, we have $\mu(B) \leq \frac{2}{3} - \frac{1}{6k}$, which completes the proof.
\end{proof}

\kMyopicLowerBound*

\begin{proof}
    By Lemma \ref{lem:wlog-alice-valuation-warping}, we can assume without loss of generality that Alice's valuation is uniform. The desired Bob will then have a spiked valuation.

    Specifically, let $w = \frac{1}{4\sqrt{Tk}}$. Divide the interval $\left[\frac{5}{6}, \frac{11}{12}\right]$ into $\lfloor \frac{1}{2w} \rfloor$ intervals of equal length. Consider what would happen if Alice's strategy were matched against a myopic Bob with value density $\sigma_0^k$. There are two cases:
    \begin{itemize}
        \item If Alice eventually distinguishes each of the centers of the $\lfloor \frac{1}{2w} \rfloor$ intervals, let $z$ be the last such center to be distinguished (breaking ties arbitrarily) and let Bob have valuation $\sigma_{w;z}^k$. Up until that point, all of Alice's potentially distinguishing partitions saw choices consistent with $\sigma_0^k$. Each partition Alice made could only have distinguished $k$ of the centers, since each one's $2w$-width spike was contained within their interval and all distinguishing partitions have a cut point in the spike. As $z$ is one of the last centers to be distinguished, there must have been at least $\lfloor \frac{1}{2w} \rfloor/k -1$ prior cuts that distinguished the other centers. As $\frac{1}{2wk} = 2\sqrt{T/k} \geq 2$, we can neglect the $-1$. Therefore, by Lemma \ref{lem:generalized-spiked-distinguishers}, Alice accumulates at least $\lfloor \frac{1}{2wk} \rfloor \cdot \frac{1}{6k} \in \Omega\left(\frac{\sqrt{T}}{k^{3/2}}\right)$ regret.
        \item If the center of one of the $\lfloor \frac{1}{2w} \rfloor$ intervals is never distinguished, let $z$ be that center and let Bob have valuation $\sigma_{w;z}^k$. As Alice never distinguishes $z$, she only sees results consistent with $\sigma_0^k$. Therefore, by Lemma \ref{lem:generalized-spiked-stackelberg}, she receives at least $\frac{4}{27k}w$ regret in each round, for a total of $\frac{4}{27k}wT = \Omega\left(\frac{\sqrt{T}}{k^{3/2}}\right)$.
    \end{itemize}
    In either case, Alice's regret is $\Omega\left(\frac{\sqrt{T}}{k^{3/2}}\right)$, completing the proof.
\end{proof}

\knownAlphaLowerBound*

\begin{proof}
    By Lemma \ref{lem:wlog-alice-valuation-warping}, we can assume without loss of generality that Alice's valuation is uniform. The desired Bob will then have a spiked valuation and his strategy will be to pretend to have an unspiked valuation.

    Specifically, let $w = T^{\frac{\alpha-1}{3}}$. Divide the interval $\left[ \frac{5}{6}, \frac{11}{12} \right]$ into $\lfloor \frac{1}{2w} \rfloor$ intervals of equal length. Consider what would happen if Alice's strategy were matched against a myopic Bob with value density $\sigma_0^k$. Let $n = \lfloor kT^{\frac{2\alpha+1}{3}} \rfloor$. Since $\alpha > -1/2$, we have that $n$ is positive. We consider two cases:
    \begin{itemize}
        \item If, for every center of the $\lfloor \frac{1}{2w} \rfloor$ intervals, Alice eventually makes $n$ partitions that distinguish it, let $z$ be the last such center to receive an $n$th distinguishing partition (breaking ties arbitrarily) and let Bob have valuation $\sigma_{w;z}^k$. His strategy will be to behave as if his valuation is $\sigma_0^k$ and he is playing myopically for the first $n$ partitions that distinguish $z$, then play myopically with his true valuation. By Lemma \ref{lem:generalized-spiked-similarity}, any piece $S$ of a distinguishing partition satisfies $|\int_S \sigma_{w;z}^k(x) \,\mathrm{d}x - 1/2| \leq \frac{2}{3k}w$, so his total regret compared to playing myopically is at most $n \cdot \frac{4}{3k}w \in O\left(T^{\alpha}\right)$. On the other hand, by Lemma \ref{lem:generalized-spiked-distinguishers}, Alice receives at least $\frac{1}{6k}$ regret for each one of her distinguishing partition. Each partition can only distinguish $k$ centers, since each center's spike is contained in its interval and every distinguishing partition must place a cut point in the spike. She therefore makes at least $\frac{n}{k} \lfloor \frac{1}{2w} \rfloor$ such cuts for a total of $\frac{n}{12wk^2} \in \Omega\left(\frac{T^{\frac{2+\alpha}{3}}}{k}\right)$ regret.
        \item If there exists an interval whose center $z$ is distinguished fewer than $n$ times, let Bob have valuation $\sigma_{w;z}^k$ but behave as if his valuation were $\sigma_0^k$ and he were playing myopically. As in the first case, Bob's regret comes from at most $n$ rounds with regret at most $\frac{4}{3k}w$ each, for a total of $n \cdot \frac{4}{3k}w \in O\left(T^{\alpha}\right)$. In this case, Alice only ever sees results consistent with $\sigma_0$, so by Lemma \ref{lem:generalized-spiked-stackelberg} she receives at least $\frac{4}{27k}w$ regret in all $T$ rounds. Therefore, her regret is at least $\frac{4wT}{27k} \in \Omega\left(\frac{T^{\frac{2+\alpha}{3}}}{k}\right)$.
    \end{itemize}
    In both cases Alice's regret is $\Omega\left(\frac{T^{\frac{2+\alpha}{3}}}{k}\right)$, which completes the proof.
\end{proof}

\subsection{Lower Bounds for Non-Myopic Bob with Private Learning Rate}\label{sec:unknown-alpha-lower-bounds}

There is actually only one lower bound in this section, as the same short argument works for all $k \geq 2$.

\unknownAlphaLowerBound*

\begin{proof}
    The Bob we construct will have one valuation but pretend to play myopically under another. He will stay within his $O(T^{\beta})$ regret bound by switching to playing myopically after accumulating enough regret. However, the conditions on $S_A$ make it impossible for Alice to reach that point.

    Specifically, by Lemma \ref{lem:wlog-alice-valuation-warping} we can assume Alice's valuation is the uniform $v_A(x)=1$ for all $x \in [0, 1]$. Let $v_B^1$ and $v_B^2$ be the following value densities:
    \begin{align}
        v_B^1(x) = \begin{cases}
            1/2 & \text{if } x \in [0, 1/2] \\
            3/2 & \text{if } x \in (1/2, 1]
        \end{cases}  \qquad \mbox{and} \qquad 
        v_B^2(x) = \begin{cases}
            1/4 & \text{if } x \in [0, 1/2] \\
            7/4 & \text{if } x \in (1/2, 1]
        \end{cases}
    \end{align}
    Let $c > 0$ be such that $S_A$ guarantees Alice at most $c \cdot T^{\beta}$ regret against a Bob with regret $O(T^{\alpha})$. Then we define the following two Bobs:
    \begin{itemize}
        \item $\textsc{Bob}_1$ has value density $v_B^1$ and plays myopically.
        \item $\textsc{Bob}_2$ has value density $v_B^2$. When choosing his piece for round $t$, he computes his cumulative regret across rounds $1, \ldots, t-1$. If this regret is at most $6cT^{\beta}$, choose the piece $\textsc{Bob}_1$ would choose. Otherwise, play myopically.
    \end{itemize}
    We first analyze what $S_A$ does against $\textsc{Bob}_1$. Call a partition \emph{distinguishing} if Bobs with value densities $v_B^1$ and $v_B^2$ prefer different pieces of it. We claim that against $\textsc{Bob}_1$, Alice makes at most $6cT^{\beta}$ distinguishing partitions. Since $\textsc{Bob}_1$ plays myopically, he never gets any regret, so $S_A$ must ensure Alice's regret is at most $c T^{\beta}$. Her Stackelberg value against $\textsc{Bob}_1$ is $2/3$ since he would be satisfied with just $1/3$ length in $(1/2, 1]$.
    
    Consider an arbitrary distinguishing partition $(S, \overline{S})$ and assume without loss of generality that $\overline{S}$ is the piece $\textsc{Bob}_1$ would take. Let $\ell = \mu(S \cap [0, 1/2])$ and $r = \mu(S \cap (1/2, 1])$. We then have:
    \begin{align}
        \frac{1}{2} &\geq V_B^1(S) = \frac{1}{2}\ell + \frac{3}{2}r \\
        \frac{1}{2} &\leq V_B^2(S) = \frac{1}{4}\ell + \frac{7}{4}r
    \end{align}
    Subtracting twice the second inequality from three times the first, we get:
    \begin{align}
        \frac{1}{2} &\geq \ell + r
    \end{align}
    But $\ell + r = V_A(S)$, so when Bob takes $\overline{S}$ Alice gets at least $1/6$ Stackelberg regret. Therefore, she cannot make more than $6cT^{\beta}$ distinguishing partitions against $\textsc{Bob}_1$.

    We now analyze what $S_A$ does against $\textsc{Bob}_2$. As long as his regret stays under $6cT^{\beta}$, he is indistinguishable from $\textsc{Bob}_1$. He only gets regret when Alice makes distinguishing partitions. But as long as he behaves like $\textsc{Bob}_1$, Alice cannot make more than $6cT^{\beta}$ distinguishing partitions, so she never reaches this regret threshold. Therefore, when using $S_A$ against $\textsc{Bob}_2$, Alice only sees choices consistent with $\textsc{Bob}_1$. She can then only get her Stackelberg value against $\textsc{Bob}_1$ of $2/3$ payoff per round, but the true Stackelberg value against $\textsc{Bob}_2$ is $5/7$. Therefore, Alice's overall regret against $\textsc{Bob}_2$ is at least $T/21 \in \Omega(T)$.
\end{proof}

\section{The RW Query Model} \label{app:rw-query-model}

In this section we prove Theorem \ref{thm:rw-query-model-bounds}. We start with the upper bound.

\begin{restatable}{lemma}{approximateStackelbergDiscretization}
    \label{lem:approximate-stackelberg-discretization}
    Consider the $k$-cut game for $k \geq 1$. Let $x_1, \ldots, x_n$ be an increasing sequence in $[0, 1]$ such that $x_1 = 0$ and $x_n = 1$, and let $v>0$ be such that $V_A([x_i, x_{i+1}]) \leq v$ for all $i \in [n-1]$. Then there exists a $(kv)$-Stackelberg allocation with cut points in $\{x_1, \ldots, x_n\}$.
\end{restatable}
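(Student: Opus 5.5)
Start from a Stackelberg $k$-cut and round its cut points onto the grid $\{x_1,\ldots,x_n\}$, always in Bob's favor. Take cut points $y_1\le\cdots\le y_k$ achieving $u_A^*(k)$, with alternating partition $(Z_A,Z_B)$. Here $Z_A$ is the piece Alice receives, $V_B(Z_B)\ge 1/2$ (so the partition is envy-free), and $V_A(Z_A)=u_A^*(k)$. Stackelberg allocations are attained: the space of $k$-cuts is compact and the constraint $V_B(Z_B)\ge 1/2$ is closed. If one prefers to avoid this, the same argument works starting from a near-optimal cut, with an arbitrarily small extra loss.

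For each $y_j$, let $[x_i,x_{i+1}]$ be the grid interval containing it. Move $y_j$ to whichever endpoint makes the adjacent piece $Z_B$ weakly grow. Concretely, $y_j$ is a boundary between a $Z_A$-interval and a $Z_B$-interval of the alternating partition, so we slide it into the $Z_A$ side until it hits a grid point. This keeps the cut points sorted, because moves are to grid points and two cuts in the same grid cell move consistently or coincide. If two cuts coincide, the alternating partition simply loses an empty interval, which is harmless.

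Let $(Z'_A,Z'_B)$ be the new alternating partition. Every moved cut only transfers mass from Alice's side to Bob's, so $Z'_B\supseteq Z_B$ up to endpoints. Hence $V_B(Z'_B)\ge V_B(Z_B)\ge 1/2$, and Bob's indifference or preference, with ties broken in Alice's favor, still gives him $Z'_B$. Also $V_A(Z'_B)\le 1/2$ is needed for envy-freeness. For this, note that the Stackelberg allocation is envy-free for Alice: $u_A^*(k)\ge 1/2$, since Alice can always cut at her own midpoint with one cut. So if the rounding drops her below $1/2$, we fall back to that. Alternatively, replace the optimal cut by the better of the two.

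The loss bound is the main point. Each $y_j$ moves within a single grid interval, so it transfers at most $V_A([x_i,x_{i+1}])\le v$ of Alice's value. With $k$ cuts the total loss is at most $kv$, giving $V_A(Z'_A)\ge u_A^*(k)-kv$.

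The delicate step is envy-freeness for Alice after rounding. I would handle it by comparing with the grid point nearest Alice's midpoint: a single grid cut loses at most $v$ from $1/2$. Then I would output whichever candidate is envy-free and best, checking that the $kv$ guarantee still holds when $u_A^*(k)-kv\le 1/2$. If exact envy-freeness for Alice cannot be secured on the grid, the definition may need to be read with Bob's envy-freeness as the binding constraint, since Alice choosing the better piece gives $V_A\ge 1/2$ automatically when she is assigned her preferred side.
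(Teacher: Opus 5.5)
Your core argument is the paper's: take a Stackelberg cut $y_1,\ldots,y_k$ and round each cut point onto the grid in the direction that enlarges Bob's piece. Bob's piece only grows, so he still takes it, and each of the $k$ cuts costs Alice at most $v$.

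One detail in your sortedness claim is wrong. Suppose a single cell $[x_i,x_{i+1}]$ contains two cuts $y_j<y_{j+1}$ with $(y_j,y_{j+1})\subseteq Z_A$. Rounding each one independently sends $y_j$ to $x_{i+1}$ and $y_{j+1}$ to $x_i$. This swaps their order, and after re-sorting it would hand the whole cell to Alice. You must instead merge the two cuts into one point, so the cell goes entirely to Bob. That costs Alice at most $v$ for the pair.

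The step that actually fails is your handling of Alice's envy-freeness. If $u_A^*(k)-kv\ge 1/2$, her envy-freeness after rounding is automatic. If $u_A^*(k)-kv<1/2$, the value bound is implied by envy-freeness, so everything reduces to whether \emph{some} envy-free grid allocation exists.

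Your fallback does not supply one. Alice's exact midpoint cut is generally off the grid. A cut at the nearest grid point leaves Alice below $1/2$ whenever that point differs from $m_A$ and Bob takes the side she values more.

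In fact no such allocation need exist. Let $v_A=v_B$ be uniform and let the grid have an odd number $n-1$ of equal cells. Every grid piece is then worth $m/(n-1)\neq 1/2$ to both players, so no grid allocation is envy-free. Hence the lemma is false under the literal Definition~\ref{def:eps-stackelberg}.

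The paper's proof tacitly uses a weaker reading, as does its use in Proposition~\ref{prop:eps-stackelberg-upper}, whose algorithm only enforces Bob's constraint. In that reading, Bob weakly prefers his piece and $V_A(Z_A)\ge u_A^*(k)-kv$, so Alice is only $kv$-approximately envy-free. Your last sentence gestures at this; adopt that reading explicitly and drop the fallback paragraph.
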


\begin{proof}
    Let $y_1, \ldots, y_k$ be the cut points of a Stackelberg allocation. Let $z_1, \ldots, z_k$ be obtained from the $y$'s by rounding each to the nearest $x_i$ in the direction that increases Bob's piece in the Stackelberg allocation. We claim that cutting at the $z$'s is the desired $(kv)$-Stackelberg cut.
    
    As Bob's piece only gets bigger, he will still take the same piece as he did under the $y_i$. All that remains is to bound how much Alice loses. Each cut point was moved by at most $v$ by Alice's valuation, so in total she values her side of the cut as at most $kv$ less. Therefore, it is $(kv)$-Stackelberg as desired.
\end{proof}

\begin{restatable}{prop}{epsStackelbergUpper}
    \label{prop:eps-stackelberg-upper}
    The query complexity of finding an $\eps$-Stackelberg cut in the $k$-cut game is $O\left(\frac{k}{\eps}\right)$.
\end{restatable}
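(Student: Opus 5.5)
The plan is a deterministic protocol that discretizes the cake according to Alice's valuation, reads off Bob's values, and then solves the resulting finite problem exactly by brute force. The correctness guarantee will come from Lemma~\ref{lem:approximate-stackelberg-discretization}.

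\textbf{Step 1 (grid).} Let $n = \lceil k/\eps \rceil$. Issue the $n-1$ queries $\mathit{Cut}_A(j/n)$ for $j = 1, \ldots, n-1$. Together with the endpoints $0$ and $1$, these give cut points $x_1 = 0 < x_2 < \cdots < x_{n+1} = 1$ with $V_A([x_i, x_{i+1}]) = 1/n$ for every $i$. All of these are established cut points, so they may be evaluated.

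\textbf{Step 2 (Bob's values).} Issue $\mathit{Eval}_B(x_i)$ for every $i$. Differences of consecutive answers give $V_B([x_i, x_{i+1}])$ exactly, and Alice's values are known from construction. Steps 1 and 2 together use $O(n) = O(k/\eps)$ queries.

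\textbf{Step 3 (search).} Enumerate every $k$-cut whose cut points lie in $\{x_1, \ldots, x_{n+1}\}$. Repeated points are allowed, which is how fewer than $k$ effective cuts are represented, as in the alternating-partition convention. For each such cut, compute both players' values of both pieces exactly from the stored queries. Keep only the envy-free partitions, meaning Alice gets $\ge 1/2$ of her value, Bob gets $\ge 1/2$ of his, and Bob is given a piece he weakly prefers. Output the one maximizing Alice's value. This step is pure computation and issues no queries.

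Correctness follows from Lemma~\ref{lem:approximate-stackelberg-discretization} with $v = 1/n \le \eps/k$: some allocation with cut points in the grid is $(kv)$-Stackelberg, hence $\eps$-Stackelberg. In the lemma's proof, that allocation is obtained by rounding a Stackelberg cut in Bob's favor, so Bob still weakly prefers his piece and the allocation is envy-free for Bob.

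The one delicate point is envy-freeness for Alice, since Definition~\ref{def:eps-stackelberg} demands envy-freeness and not just near-optimality. The plan handles it as follows. Cut-and-choose with Alice cutting at her midpoint $x_{\lceil n/2\rceil}$-type points already gives Alice exactly $1/2$ when $n$ is even; to guarantee this, choose $n$ even. That partition is a grid $1$-cut, hence a grid $k$-cut. Bob's choice is checkable, and Alice gets $1/2$ whichever piece he takes. Because $u_A^*(k) \ge 1/2$, any candidate in which Alice gets less than $1/2$ is dominated. So the maximizer gives Alice at least $1/2$ and is envy-free.

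Finally, the protocol is deterministic and uses $O(k/\eps)$ queries, so it also bounds the randomized complexity.
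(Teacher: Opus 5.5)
Your proposal is correct and follows the paper's proof: cut the cake into $n=\lceil k/\eps\rceil$ pieces of equal value to Alice using cut queries, evaluate Bob on the grid, brute-force over grid $k$-cuts that give Bob at least half, and apply Lemma~\ref{lem:approximate-stackelberg-discretization} with $v=1/n$. Taking $n$ even, so that Alice's midpoint is a grid point and some feasible candidate gives her exactly $1/2$, is a useful addition: the paper's argument never checks Alice's side of envy-freeness, and the rounding in the lemma only protects Bob's side.
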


\begin{proof}
    The algorithm is as follows. Let $n = \left\lceil \frac{k}{\eps}\right\rceil$. Discretize the cake into $n$ pieces of value $1/n$ each to Alice and query Bob's value of each one. Return whichever $k$-cut maximizes Alice's payoff while still giving Bob at least half the cake by his valuation. This process $n$ cut queries on Alice's valuation and $n$ eval queries on Bob's for a total of $2n \in O(k/\varepsilon)$.

    We now show that this cut is $\eps$-Stackelberg. By Lemma \ref{lem:approximate-stackelberg-discretization}, there exists a $(k/n)$-Stackelberg cut among the ones the algorithm considers. The one the algorithm returns can only be better than that one, so it too is $(k/n)$-Stackelberg. By the choice of $n$, it is $\eps$-Stackelberg as desired.
\end{proof}

We now prove the lower bound using the same construction as section \ref{sec:known-alpha-lower-bounds}.

\begin{restatable}{prop}{epsStackelbergRandomizedLower}
    \label{prop:eps-stackelberg-randomized-lower}
    The randomized query complexity of finding an $\eps$-Stackelberg cut in the $k$-cut game for $k \geq 2$ is $\Omega(1/\eps)$.
\end{restatable}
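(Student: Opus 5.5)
We apply Yao's minimax principle and lower-bound the expected query count of the best deterministic algorithm on a hard input distribution $\mathcal{D}$. Under $\mathcal{D}$, Alice's density is uniform. Bob's density is $\sigma_{w;z}^2$ with $w = c\varepsilon$ for a suitable constant $c$ (the sketch suggests $w=14\varepsilon$; the constant is fixed so that $\frac{4}{27\cdot 2}w > 2\varepsilon$ and $w \le 1/48$ for small $\varepsilon$). The location $z$ is uniform over the centers of the $N=\lfloor 1/(2w)\rfloor$ equal subintervals of $[5/6,11/12]$, so the candidate spikes are pairwise disjoint. By Definition~\ref{def:eps-stackelberg} and Lemma~\ref{lem:generalized-spiked-stackelberg}, an $\varepsilon$-Stackelberg allocation for $\sigma_{w;z}^2$ must give Alice at least $\frac{2}{3}+\frac{2}{27}w-\varepsilon$.

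\textbf{Step 1: a correct output certifies $z$.} Any envy-free 2-cut allocation giving Alice more than $2/3$ against $\sigma_{w;z}^2$ must be distinguishing for $z$: Bob's value of his piece is at least $1/2$ under $\sigma_{w;z}^2$, while under $\sigma_0^2$ a piece of Lebesgue measure below $1/3$ has value less than $1/2$. It follows that a cut lies inside $z$'s spike. Since the spikes are disjoint, an output allocation with 2 cuts is valid for at most 2 values of $z$.

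\textbf{Step 2: queries reveal little about $z$.} Fix a deterministic protocol and run it on the unspiked instance $\sigma_0^2$ (with uniform Alice). Alice's queries do not depend on $z$ at all. A Bob $\mathrm{Cut}$ or $\mathrm{Eval}$ query returns a value that differs between $\sigma_0^2$ and $\sigma_{w;z}^2$ only if a relevant point lies strictly inside spike $z$. This holds because $\sigma_0^2$ and $\sigma_{w;z}^2$ have equal integrals over each spike, so $V_B([0,y])$ agrees for $y$ outside the spike. Likewise, a $\mathrm{Cut}_B(\alpha)$ answer agrees unless the true cut point lies in the spike. By induction, on input $z$ the protocol follows the unspiked transcript until the first query that touches spike $z$. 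Each query touches at most one spike, since a cut or eval concerns a single point. After $q$ queries on the unspiked path, at most $q$ spikes have been touched.

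\textbf{Step 3: count.} Suppose the protocol makes at most $q \le N/10$ queries with probability at least, say, $1/2$. For every $z$ whose spike is untouched in the first $q$ unspiked-path queries, either the transcript is identical to the unspiked one and the output is the fixed unspiked output, valid for at most 2 values of $z$, or the protocol uses more than $q$ queries. Hence the success probability is at most roughly $(q+2)/N + \Pr[\text{more than } q \text{ queries}]$, which is less than $9/10$ unless the expected number of queries is $\Omega(N) = \Omega(1/\varepsilon)$. Markov's inequality turns this into a bound on expected queries. Yao's principle then transfers the bound to randomized protocols, and embedding the construction in $\sigma_{w;z}^k$ handles general $k \ge 2$.

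\textbf{Main obstacle.} The hardest step is Step 2: the indistinguishability of query answers must be made rigorous for $\mathrm{Cut}$ queries with adaptively chosen $\alpha$, including the fact that cut points can later be evaluated by Alice. I would handle this by defining ``touching'' via the returned points and checking that Alice's evaluations of points outside all spikes carry no information about $z$.
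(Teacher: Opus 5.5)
Your proposal is correct and follows essentially the paper's route: Yao's principle on the spiked distribution $\sigma_{w;z}^2$ with uniform Alice, the observation that a query's answer changes only when its point lands inside the spike, and the fact that any $\varepsilon$-Stackelberg output must put a cut inside the spike; your induction along the unspiked transcript plus Markov is a more careful version of the paper's ``each query eliminates $2w$'' count. One arithmetic slip: $\lfloor 1/(2w)\rfloor$ equal subintervals of $[5/6,11/12]$ have length about $w/6$, so the width-$2w$ spikes centered in them overlap and your disjointness claim fails; take $N=\lfloor 1/(24w)\rfloor$ instead (only constants change), and note that Step~1 needs just $\tfrac{2}{27}w>\varepsilon$, so $w=14\varepsilon$ already suffices.
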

\begin{proof}
    Rather than directly lower-bound the performance of an arbitrary randomized algorithm, we invoke Yao's minimax principle \cite{yao_minimax} to construct a worst-case input distribution for deterministic algorithms. Specifically, up to a constant factor, the randomized query complexity (with error probability $1/10$) can be lower-bounded by the best performance of any deterministic algorithm on a hard distribution of inputs (with error probability $1/5$).

    Our hard distribution will be made up of the spiked valuation functions of Definition \ref{def:generalized-spike}. Their key property is that any near-Stackelberg cut must include a cut point in a small unknown interval. If many such intervals are possible and the correct one can only be found by directly querying it, the query complexity must be high. Accordingly, let $w = 14\eps$ and let $\mathcal{D}$ be the distribution where:
    \begin{itemize}
        \item Alice's value density is the uniform $v_A(x)=1$ for all $x \in [0, 1]$, and
        \item Bob's value density is $\sigma_{w;z}^2$ for uniformly random $z \in [5/6, 11/12]$.
    \end{itemize}
    Against an input from $\mathcal{D}$, each query made does not narrow down the range of possible values of $z$ very much:
    \begin{itemize}
        \item An eval query for Bob's value of an interval $[0, x]$ will be indistinguishable from an unspiked valuation $\sigma_0^k$ unless $x/\ell^k$ is in the interval $(z-w, z+w)$. Any other query therefore eliminates $(x/\ell^k-w, x/\ell^k+w)$ from consideration, for a total of $2w$ of the interval $[5/6, 11/12]$.
        \item A cut query for the point $x$ such that Bob values $[0, x]$ as $\alpha$ will similarly be indistinguishable from an unspiked valuation $\sigma_0^k$ unless $x/\ell^k$ is in the interval $(z-w, z+w)$. As $x$ is proportional to $\alpha$, the algorithm gains nothing by querying by $\alpha$ instead of $x$ as in an eval query.
    \end{itemize}
    Therefore, each query only reduces the search space by $2w$. If a deterministic algorithm makes fewer than $\frac{1}{24 \cdot 2w}$ queries, then, its chances of seeing anything inconsistent with the unspiked $\sigma_0^k$ are at most $1/2$. If it fails to find any inconsistency, then its probability of returning a cut that goes through the spike is negligible for $w << 1$. If it does not return a cut that goes through the spike, then by Lemma \ref{lem:generalized-spiked-stackelberg} it falls at least $\frac{2w}{27} > \eps$ short of the Stackelberg value. Therefore, any deterministic algorithm with success probability greater than $4/5$ must make more than $\frac{1}{24 \cdot 2w} \in \Omega(1/\eps)$ queries with constant probability, so its expected runtime is $\Omega(1/\eps)$.
\end{proof}

\begin{proof}[Proof of Theorem \ref{thm:rw-query-model-bounds}]
    The upper bound follows by Proposition~\ref{prop:eps-stackelberg-upper}. The lower bound follows by Proposition~\ref{prop:eps-stackelberg-randomized-lower}.
\end{proof}

\end{document}